\documentclass[sn-vancouver]{sn-jnl}

\normalbaroutside 


\usepackage{anyfontsize}
\usepackage[english]{babel}
\usepackage{amssymb}

\usepackage{tikz}
\usetikzlibrary{positioning,arrows,petri,calc,decorations.markings,arrows.meta}
\tikzset{
place/.style={
circle,
thick,
minimum size=4mm,
draw
},
transitionV/.style={
rectangle,
thick,
fill=black,
minimum height=6mm,
inner xsep=1pt
}
}
\usepackage{pgfplots}
\pgfplotsset{compat=newest}
\pgfplotsset{plot coordinates/math parser=false}
\newlength\figureheight
\newlength\figurewidth 
\usepackage{cancel}

\usepackage{mathtools}
\usepackage{etoolbox}
\usepackage[ruled,vlined,linesnumbered,algo2e]{algorithm2e}
\makeatletter
\patchcmd{\algocf@Vline}{\vrule}{\vrule\vspace{-.32em}}{}{}
\makeatother
\SetStartEndCondition{ }{ }{}%
\SetKw{KwTo}{to}\SetKwFor{For}{for}{\string do}{}%
\SetKwIF{If}{ElseIf}{Else}{if}{then}{elif}{else}{}%
\SetKwFor{While}{while}{\string do}{}%

\usepackage{caption}
\DeclareMathAlphabet{\pazocal}{OMS}{zplm}{m}{n}
\usepackage[autostyle, english = american]{csquotes} 
\MakeOuterQuote{"}
\usepackage{xargs}
\usepackage{subfigure}
\usepackage{stmaryrd} 

\usepackage{soul, color, xcolor}

\usepackage[numbers]{natbib}
\usepackage{enumitem}
\usepackage{nicematrix}
\usepackage{cleveref}
\usepackage{empheq} 

\definecolor{myblue}{RGB}{0, 101, 202} 
\definecolor{mygreen}{RGB}{34, 139, 34} 
\definecolor{myred}{RGB}{197, 14, 31}
\definecolor{mypurple}{RGB}{128, 0, 128}
\definecolor{myyellow}{RGB}{204, 204, 0}
\definecolor{mygrey}{RGB}{105, 105, 105}
\usepackage{fontawesome}

\usetikzlibrary{hobby}

\newcommand{\zor}[1]{#1} 


\newcommand{\graph}{\pazocal{G}}
%
\newcommand{\dint}[1]{\left\llbracket#1\right\rrbracket} 
\newcommand{\eqtop}[1]{\stackrel{\text{#1}}{=}} 

\renewcommand{\R}{\mathbb{R}}
\newcommand{\nat}{\mathbb{N}}
\newcommand{\nato}{\mathbb{N}_0}
\newcommand{\Rmax}{{\R}_{\normalfont\fontsize{7pt}{11pt}\selectfont\mbox{max}}}
\newcommand{\Rmin}{{\R}_{\normalfont\fontsize{7pt}{11pt}\selectfont\mbox{min}}}

\newcommand{\Rbar}{\overline{\R}}

\newcommand{\arcs}{E}
\newcommand{\nodes}{N}
\newcommand{\nonegset}{\Gamma}

\newcommand{\MP}{P}
\newcommand{\MI}{I}
\newcommand{\MC}{C}

\newcommand{\MA}{A}
\newcommand{\MB}{B}

\newcommandx{\solNCP}[1]{\Lambda_{\mbox{\normalfont\tiny NCP}}(#1)}
\newcommandx{\solPIC}[3][1=\MP,2=\MI,3=\MC]{\Lambda_{\mbox{\normalfont\tiny NCP}}(\lambda#1\oplus\lambda^{-1}#2\oplus#3)}
\newcommandx{\solPTEG}[4][1=A^0,2=A^1,3=B^0,4=B^1]{\Lambda_{\mbox{\normalfont\tiny P-TEG}}(#1,#2,#3,#4)}
\newcommandx{\solSLDI}[2][1=v,2=\pazocal{S}]{\Lambda^{#1}_{\mbox{\normalfont\tiny SLDI}}(#2)}
\newcommand{\wA}{\mathsf{\MakeLowercase{\MA}}}
\newcommand{\wB}{\mathsf{\MakeLowercase{\MB}}}
\newcommand{\wP}{\mathsf{\MakeLowercase{\MP}}}

\newcommand{\wC}{\mathsf{\MakeLowercase{\MC}}}
\newcommand{\winit}{\mathsf{i}}
\newcommand{\wfin}{\mathsf{f}}
\newcommand{\wZ}{\mathsf{z}}

\newcommandx{\TP}[1][1=d]{T_{#1}^{\MP}}
\newcommandx{\TI}[1][1=d]{T_{#1}^{\MI}}
\newcommandx{\TC}[1][1=d]{T_{#1}^{\MC}}

\newcommand\myOverwrite[2]{\!\makebox[0cm][l]{#1}#2\ \!} 
\newcommand{\maxldiv}{\textup{\hspace{0.85mm}\myOverwrite{$\circ$}{$\setminus$}\hspace{-0.5mm}}} 
\newcommand{\minldiv}{\textup{\hspace{0.85mm}\myOverwrite{$\bullet$}{$\setminus$}\hspace{-0.5mm}}} 

\newcommand{\places}{\pazocal{P}}
\newcommand{\transitions}{\pazocal{T}}

\DeclareMathOperator{\tr}{tr}

\makeatletter
\newcommand{\splus}{%
  \DOTSB\mathop{\mathpalette\mattos@splus\relax}\slimits@
}
\newcommand\mattos@splus[2]{%
  \vcenter{\hbox{%
    \sbox\z@{$#1\oplus$}%
    \resizebox{!}{0.9\dimexpr\ht\z@+\dp\z@}{\raisebox{\depth}{$\m@th#1\boxplus$}}%
  }}%
  \vphantom{\oplus}%
}
\makeatother

\makeatletter
\newcommand{\stimes}{%
  \DOTSB\mathop{\mathpalette\mattos@stimes\relax}\slimits@
}
\newcommand\mattos@stimes[2]{%
  \vcenter{\hbox{%
    \sbox\z@{$#1\otimes$}%
    \resizebox{!}{0.9\dimexpr\ht\z@+\dp\z@}{\raisebox{\depth}{$\m@th#1\boxtimes$}}%
  }}%
  \vphantom{\otimes}%
}
\makeatother

\makeatletter
\newcommand{\bigsplus}{%
  \DOTSB\mathop{\mathpalette\mattos@bigsplus\relax}\slimits@
}
\newcommand\mattos@bigsplus[2]{%
  \vcenter{\hbox{%
    \sbox\z@{$#1\sum$}%
    \resizebox{!}{0.9\dimexpr\ht\z@+\dp\z@}{\raisebox{\depth}{$\m@th#1\boxplus$}}%
  }}%
  \vphantom{\sum}%
}
\makeatother

\newcommand{\svdots}{\raisebox{3pt}{$\scalebox{.75}{\vdots}$}} 
\newcommand{\sddots}{\raisebox{3pt}{$\scalebox{.75}{$\ddots$}$}} 

\renewcommand{\preceq}{\leq}
\renewcommand{\succeq}{\geq}


\jyear{2023}%

\theoremstyle{thmstyleone}%
\newtheorem{theorem}{Theorem}
\newtheorem{proposition}[theorem]{Proposition}%
\newtheorem{lemma}[theorem]{Lemma}%

\theoremstyle{thmstyletwo}%
\newtheorem{example}{Example}%
\newtheorem{remark}{Remark}%

\theoremstyle{thmstylethree}%
\newtheorem{definition}{Definition}%

\raggedbottom

\begin{document}

\title[SLDIs: cycle time analysis and applications]{Switched max-plus linear-dual inequalities: cycle time analysis and applications}

\author*[1]{\fnm{Davide} \sur{Zorzenon}}\email{zorzenon@control.tu-berlin.de}

\author[2]{\fnm{Jan} \sur{Komenda}}\email{komenda@ipm.cz}

\author[1,3]{\fnm{J\"{o}rg} \sur{Raisch}}\email{raisch@control.tu-berlin.de}

\affil[1]{\orgdiv{Control Systems Group}, \orgname{Technische Universit\"{a}t Berlin}, \orgaddress{\country{Germany}}}

\affil[2]{\orgdiv{Institute of Mathematics}, \orgname{Czech Academy of Sciences}, \orgaddress{\city{Prague}, \country{Czech Republic}}}

\affil[3]{\orgdiv{Science of Intelligence}, \orgname{Research Cluster of Excellence}, \orgaddress{\city{Berlin}, \country{Germany}}}

\abstract{
P-time event graphs are discrete event systems suitable for modeling processes in which tasks must be executed in predefined time windows.
Their dynamics can be represented by max-plus linear-dual inequalities (LDIs), i.e., systems of linear dynamical inequalities in the primal and dual operations of the max-plus algebra.
We define a new class of models called switched LDIs (SLDIs), which allow to switch between different modes of operation, each corresponding to a set of LDIs, according to a sequence of modes called schedule.
In this paper, we focus on the analysis of SLDIs when the considered schedule is fixed and either periodic or intermittently periodic.
We show that SLDIs can model a wide range of applications including single-robot multi-product processing networks, in which every product has different processing requirements and corresponds to a specific mode of operation.
Based on the analysis of SLDIs, we propose algorithms to compute: i. minimum and maximum cycle times for these processes, improving the time complexity of other existing approaches; ii. a complete trajectory of the robot including start-up and shut-down transients.
}

\keywords{Switched systems, max-plus algebra, scheduling, Petri nets}

\maketitle

\section{Introduction}

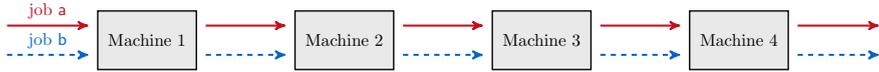
\begin{figure}[h]
    \centering
    \resizebox{1\textwidth}{!}{
%
%
%

\begin{tikzpicture}[node distance=1.2cm and 4cm,on grid,thick]


\newcommand{\mwidth}{2}
\newcommand{\mheight}{.6*\mwidth}
\newcommand{\offset}{.2}


\node [draw,thick,fill=mygrey!15!white,minimum height=\mheight cm,minimum width=\mwidth cm] at (0,0) (M1) {Machine 1};

\node [draw,thick,fill=mygrey!15!white,minimum height=\mheight cm,minimum width=\mwidth cm,right=of M1] (M2) {Machine 2};

\node [right=of M2] {\huge$\cdots$};

\node [draw,thick,fill=mygrey!15!white,minimum height=\mheight cm,minimum width=\mwidth cm,right=of M2] (M3) {Machine 3};

\node [draw,thick,fill=mygrey!15!white,minimum height=\mheight cm,minimum width=\mwidth cm,right=of M3] (M4) {Machine 4};

%


\draw [myred,very thick,-stealth',shorten >=5pt,shorten <=5pt] ($(M1.west)+(-2,\mheight/4)$) to node [auto,pos=.5] {job $\wA$} ($(M1.west)+(0,\mheight/4)$);
\draw [myred,very thick,-stealth',shorten >=5pt,shorten <=5pt] ($(M1.east)+(0,\mheight/4)$) to ($(M2.west)+(0,\mheight/4)$);
\draw [myred,very thick,-stealth',shorten >=5pt,shorten <=5pt] ($(M2.east)+(0,\mheight/4)$) to ($(M3.west)+(0,\mheight/4)$);
\draw [myred,very thick,-stealth',shorten >=5pt,shorten <=5pt] ($(M3.east)+(0,\mheight/4)$) to ($(M4.west)+(0,\mheight/4)$);
\draw [myred,very thick,-stealth',shorten >=5pt,shorten <=5pt] ($(M4.east)+(0,\mheight/4)$) to ($(M4.east)+(2,\mheight/4)$);


\draw [myblue,very thick,dashed,-stealth',shorten >=5pt,shorten <=5pt] ($(M1.west)+(-2,-\mheight/4)$) to node [auto,pos=.5] {job $\wB$} ($(M1.west)+(0,-\mheight/4)$);
\draw [myblue,very thick,dashed,-stealth',shorten >=5pt,shorten <=5pt] ($(M1.east)+(0,-\mheight/4)$) to ($(M2.west)+(0,-\mheight/4)$);
\draw [myblue,very thick,dashed,-stealth',shorten >=5pt,shorten <=5pt] ($(M2.east)+(0,-\mheight/4)$) to ($(M3.west)+(0,-\mheight/4)$);
\draw [myblue,very thick,dashed,-stealth',shorten >=5pt,shorten <=5pt] ($(M3.east)+(0,-\mheight/4)$) to ($(M4.west)+(0,-\mheight/4)$);
\draw [myblue,very thick,dashed,-stealth',shorten >=5pt,shorten <=5pt] ($(M4.east)+(0,-\mheight/4)$) to ($(M4.east)+(2,-\mheight/4)$);

\end{tikzpicture}

    \caption{Illustration of a flow shop with 4 machines and jobs of two different types, labeled $\wA$ and $\wB$.}\label{fi:flow_shop}
\end{figure}

Time-window constraints arise in different settings.
Typical examples from manufacturing are the bakery and the semi-conductor industries, where some processes (e.g., yeast fermentation in the former case, metal deposition in the latter) must be executed under strict temporal constraints in order to obtain the desired quality of the final product~\cite{hecker2014application,manier2003classification}.
In transportation, time windows can be used to specify admissible pickup and delivery times for customers~\cite{solomon1988survey}.
From the pharmaceutical domain, high-throughput screening systems are another example where, due to chemical requirements, the time allowed to elapse between two operations is restricted by lower and/or upper bounds~\cite{mayer2004time}.

P-time event graphs (P-TEGs) are event graphs\footnote{We recall that event graphs are Petri nets in which all arcs have unitary weight and \zor{all} places have exactly one upstream and one downstream transition.} in which tokens are forced to sojourn in places in predefined time windows.
Given their ability to model time-window constraints, they have been applied to solve scheduling and control problems in various types of systems including bakeries, electroplating lines, and cluster tools~\cite{declerck2021critical,spacek1999control,becha2017model,kim2003scheduling}.
The signal describing firing times of transitions in P-TEGs evolves -- non-deterministically -- according to max-plus linear-dual inequalities (LDIs), i.e., dynamical inequalities that are linear in the primal operations and in the dual operations of the max-plus algebra (see~\eqref{eq:dynamics_LDIs} in Section~\ref{su:LDIs}).

\zor{
Since temporal upper bound constraints can be considered as specifications that a system driven exclusively by lower time bounds (i.e., a max-plus linear system) needs to satisfy, several authors have addressed the problem of limiting the sojourn time of tokens in places from a control point of view.
This resulted in a variety of techniques; we mention for instance~\cite{katz2007maxplus,maia2011super,maia2011control,declerck2016compromise}, in which the authors find sufficient conditions for the control of max-plus linear systems subject to constraints using, respectively, geometric control theory, max-plus spectral theory, residuation theory, and model predictive control.
Other sufficient conditions were discovered in~\cite{amari2012maxplus}.
Note that, in its most general formulation, this control problem is still open, as no necessary and sufficient condition for its solution has been found.
The reason for this can be traced to the fact that a more fundamental problem, namely, checking the existence of solutions of LDIs, has not yet been fully understood.
On the other hand, the steady-state version of the problem was solved for a large class of instances in~\cite{goncalves2017max}, exploiting the results of~\cite{katz2007maxplus}.
}
Moreover, the simple nature of the dynamics of P-TEGs has led to a number of theoretical results regarding the analysis of their cycle time \cite{declerck2007cycle,lee2014steady,vspavcek2017analysis,zorzenon2021periodic}, which is defined as the temporal difference between the occurrence of two repetitions of the same event in the system, assuming events occur in a periodic manner (for a formal definition, see Section~\ref{su:structural_PTEGs}).

Such a simple characterization comes\zor{, however,} at the cost of limited modeling power.
To illustrate this, consider the example of flow shops, which are manufacturing systems consisting of a sequence of machines of unitary capacity where all the jobs (i.e., parts to be processed) must visit all the machines in the same order (see Figure~\ref{fi:flow_shop}).
P-TEGs are the ideal tool for representing flow shops where all the jobs are identical.
Suppose instead that jobs of different \zor{types} are present, each of which requires different processing times in machines, and that the entrance order of jobs is periodic and repeats every $V\in\nat$ jobs.
An example of periodic entrance order of period $V = 2$ for the flow shop of Figure~\ref{fi:flow_shop} is $\wA\wB\wA\wB\wA\wB\ldots$
In this case, P-TEGs can still be adopted to model the system (this is shown formally in Proposition~\ref{pr:SLDI-P-TEGs}), but:
\begin{itemize}
    \item the number of transitions in the resulting P-TEGs increases (linearly) with $V$, which leads to a rather high computational complexity for the cycle time analysis where large values of $V$ are considered;
    \item for a different entrance order of jobs, a new P-TEG must be built.
\end{itemize}
Moreover, when the number of jobs to be processed is infinite\footnote{Considering infinitely many jobs to be processed is a typical assumption in the context of continuous production systems.} and the entrance order is not periodic, no P-TEG can model the manufacturing system.

With the aim of overcoming these limitations, in this paper we introduce a new class of dynamical systems called \textit{switched max-plus linear-dual inequalities} (SLDIs).
SLDIs extend the modeling power of P-TEGs by allowing to switch among different modes of operation, each corresponding to a system of LDIs.
Let us take the example of the flow shop again.
By assigning a mode of operation to each job type, we can model the manufacturing system by SLDIs in which different job entrance orders simply correspond to different schedules, i.e., sequences of modes.
A first advantage of SLDIs is thus that, with a single dynamical system, one can represent all possible entrance orders of jobs in the flow shop, including non-periodic ones.

It is worth noting that this property is not exclusive to SLDIs, as also P-time Petri nets can be utilized to model flow shops under any job order~\cite{khansa1996p,bonhomme2013towards}.
In fact, there exists a strong relation between the two models (which is briefly discussed in Section~\ref{se:conclusions}).
However, \zor{in contrast to} P-time Petri nets, the dynamics of SLDIs possesses another appealing feature: a switched-linear formulation in the max-plus algebra.
By exploiting the abundance of existing results in this algebraic framework (see, e.g., \cite{cuninghame2012minimax,baccelli1992synchronization,butkovivc2010max,hardouin2018control}), in this paper we will derive low-complexity algorithms for the cycle time computation, considering two types of schedules: periodic and intermittently periodic schedules.
In the example of the flow shop, the first type of schedules corresponds to periodic arrivals of jobs of different types in the manufacturing system (as in schedule $\wA\wB\wA\wB\wA\wB\ldots$), whereas the second one is a generalization in which the entrance order of jobs alternates among periodic and non-periodic regimes. 
An example of a schedule of the second type for the flow shop of Figure~\ref{fi:flow_shop} is $\wA\wA\wB\wA\wB\wA\wB\ldots\wA\wB\wB\wB\wA\wB\wA\wB\wA\ldots$, which consists of an initial transient $\wA$ followed by periodic regime $\wA\wB\wA\wB\ldots$, an intermediate transient $\wB$, and the final periodic subschedule $\wB\wA\wB\wA\ldots$\footnote{\zor{In this schedule example, different decompositions in periodic and transient regimes are possible. For more details on this, we refer to Section~\ref{su:piecewise_schedules}.}}
In the cycle time analysis for this kind of schedules, one is interested in finding the cycle times that can be achieved in each periodic regime.
The motivation for studying intermittently periodic schedules is two-fold:
\begin{enumerate}
    \item they are ubiquitous in applications (as discussed in Section~\ref{su:piecewise_schedules}), and
    \item differently from systems described by only lower time-bounds, the cycle time analysis in this class of schedules can produce non-trivial results in the presence of time-window constraints.
\end{enumerate}
Here, by "non-trivial" we mean that the cycle times of the periodic subschedules in an intermittently periodic schedule may be different from those obtained by studying the periodic subschedules independently.

The present article enhances and extends the recent conference paper~\cite{ZORZENON2022196} in \zor{several ways}.
Besides simplifying the preliminaries in Section~\ref{se:preliminaries} and adding a number of simple and illustrative examples, an important contribution of this extended paper is to systematically investigate the initial conditions of P-TEGs (in Section~\ref{se:P-time_event_graphs}) and their relation to SLDIs.
Two types of initial conditions are presented, loose and strict, and it is proven that P-TEGs with strict initial conditions can be represented by SLDIs but not by pure LDIs (Section~\ref{su:SLDIS_PTEGs}).
In Section~\ref{se:switched_max-plus_linear-dual_systems}, after formally presenting SLDIs, the cycle time analysis in periodic and intermittently periodic trajectories is discussed; intermittently periodic trajectories are introduced for the first time in this extended version.
In~\cite{ZORZENON2022196}, the correctness of the low-complexity method for computing the cycle time in SLDIs under periodic schedules was proven using tools from automata and regular languages theory; in this paper, the proof is entirely based on algebraic arguments.
Although the initial inspiration for the algorithm was drawn from analogies between multi-precedence graphs and automata theory, the new proof relies on more established results, which hopefully makes it less arduous \zor{to read}.

In Section~\ref{se:example}, the analysis of the case study considered in~\cite{ZORZENON2022196} has been deepened further.
The case study consists of \zor{a robotic job shop example} derived from~\cite{KATS20081196}, where parts of different type \zor{are required} to visit a sequence of processing stations in different order, and are transported by a single robot.
The authors of~\cite{KATS20081196} proved that the cycle time analysis in this class of systems can be performed in strongly polynomial time complexity $\pazocal{O}(V^4n^4)$, where $V$ is the period of the entrance order of different types of parts in the system, and $n$ is the number of processing stations.
In this paper, we show that the complexity can be reduced to $\pazocal{O}(Vn^3 + n^4)$ using SLDIs.
Computational tests show that the advantage is not only theoretical, but translates into a tangibly faster cycle time analysis.
This makes our algorithm an appealing optimization subroutine for the solution of (NP-hard) cyclic scheduling problems in which the goal is to find the optimal path of the robot.
Additionally, in this paper we show how to derive a complete (intermittently periodic) trajectory of the system, consisting of a start-up transient (where parts are initially introduced into the system), a periodic regime, and a shut-down transient (in which all parts are removed from the stations).

Finally, Section~\ref{se:conclusions} provides concluding remarks, comparisons with related classes of dynamical systems, and suggestions for future work.

\subsection*{Notation}
The set of positive, respectively non-negative, integers is denoted by $\nat$, respectively $\nato$.
The set of non-negative real numbers is denoted by $\R_{\geq 0}$.
Moreover, $\Rmax \coloneqq \R \cup \{-\infty\}$, $\Rmin \coloneqq \R\cup\{\infty\}$, and $\Rbar \coloneqq \Rmax \cup \{\infty\}=\Rmin\cup\{-\infty\}$.
If $A\in\Rbar^{n\times n}$, we will use notation $A^\sharp$ to indicate $-A^\top$.
Given $a, b \in \mathbb{Z}$ with $b \geq a$, $\dint{a, b}$ denotes the discrete interval $\{a, a + 1, a + 2, \ldots, b\}$.

\section{Preliminaries}\label{se:preliminaries}

In this section, some basic concepts of max-plus algebra (Section~\ref{su:idempotent_semirings}) and precedence graphs (Section~\ref{su:precedence_graphs}) are recalled; for a more detailed discussion of those topics, we refer to~\cite{baccelli1992synchronization,butkovivc2010max,hardouin2018control}.
Sections~\ref{su:NCP} and~\ref{su:LDIs} present the non-positive circuit weight problem (introduced in~\cite{zorzenon2021nonpositive}) and max-plus linear-dual inequalities.

\subsection{Max-plus algebra}\label{su:idempotent_semirings}

The max-plus algebra operates on the set of real numbers extended with $-\infty$ and $+\infty$, and is endowed with operations $\oplus$ (addition), $\otimes$ (multiplication), $\splus$ (dual addition), and $\stimes$ (dual multiplication), defined as follows: for all $a,b\in\Rbar$,
\[
\begin{array}{rclcrcl}
    a \oplus b &=& \max(a,b), &\quad& a\otimes b &=& 
    \begin{dcases}
        a+b & \mbox{if } a,b\neq -\infty,\\
        -\infty & \mbox{otherwise,}
    \end{dcases}\\
    a \splus b &=& \min(a,b), &\quad& a\stimes b &=& 
    \begin{dcases}
        a+b & \mbox{if } a,b\neq +\infty,\\
        +\infty & \mbox{otherwise.}
    \end{dcases}
\end{array}
\]
These operations can be extended to matrices; given $A,B\in\Rbar^{m\times n}$, $C\in\Rbar^{n\times p}$, for all $i\in\dint{1,m}$, $j\in\dint{1,n}$, $h\in\dint{1,p}$,
\[
    \begin{array}{rclcrcl}
        (A\oplus B)_{ij} &=& A_{ij}\oplus B_{ij}, &\quad& (A\otimes C)_{ih} &=& \displaystyle\bigoplus_{k=1}^n A_{ik}\otimes C_{kh},\\
        (A\splus B)_{ij} &=& A_{ij}\splus B_{ij}, &\quad& (A\stimes C)_{ih} &=& \displaystyle\bigsplus_{k=1}^n A_{ik}\stimes C_{kh}.
    \end{array}
\]
\zor{When the meaning is clear from the context, we will denote the max-plus multiplication between matrices $A$ and $C$, $A\otimes C$, simply by $A C$.}
The symbols $\pazocal{E}$, $\pazocal{T}$, and $E_\otimes$ denote, respectively, the neutral element for $\oplus$, $\splus$, and $\otimes$, i.e., $\pazocal{E}_{ij} = -\infty$ and $\pazocal{T}_{ij} = +\infty$ for all $i,j$, and $E_\otimes$ is a square matrix with $(E_\otimes)_{ij} = 0$ if $i=j$ and $(E_\otimes)_{ij}=-\infty$ if $i\neq j$.
Given a square matrix $A$, its $r$\textsuperscript{th} power is defined recursively by $A^0 = E_\otimes$ and, for all $r\geq 1$, $A^r = A^{r-1}\otimes A$.
Moreover, the Kleene star of a matrix $A\in\Rbar^{n\times n}$ is
\[
	A^* = \bigoplus_{i = 0}^{+\infty} A^i.
\]
The partial order relation $\preceq$ between two matrices $A$ and $B$ of the same size is defined elementwise: $A\preceq B$ if and only if $A_{ij} \leq B_{ij}$ for all $i,j$.
Analogously to \zor{the} standard algebra, we define the max-plus trace of matrix $A\in\Rbar^{n\times n}$ by $\tr(A) = \bigoplus_{i = 1}^n A_{ii}$.
The product and dual product between scalar $\lambda\in\Rbar$ and matrix $A\in\Rbar^{m\times n}$ are given by
\[
	(\lambda \otimes A)_{ij} = \lambda \otimes A_{ij},\quad (\lambda \stimes A)_{ij} = \lambda \stimes A_{ij}.
\]
If $\lambda\notin\{-\infty,+\infty\}$, the two expressions coincide.
We will therefore simply write $\lambda A$ in place of $\lambda \otimes A$ or $\lambda \stimes A$ when $\lambda\in\R$.
When $\lambda\in\R$, we indicate by $\lambda^{-1}$ the element such that $\lambda^{-1}\otimes \lambda = \lambda \otimes \lambda^{-1} = 0$; thus, in \zor{the} standard algebra, $\lambda^{-1}$ coincides with $-\lambda$.
\zor{When $\lambda\in\{-\infty,+\infty\}$, $\lambda$ does not have a multiplicative inverse; nevertheless, we will use symbol $\lambda^{-1}$ to denote $-\lambda$.}

\subsection{Precedence graphs}\label{su:precedence_graphs}

A \textit{directed graph} is a pair $(\nodes,\arcs)$ where $\nodes$ is a finite set of nodes and $\arcs\subseteq\nodes\times\nodes$ is the set of arcs.
A \textit{weighted directed graph} is a triplet $(\nodes,\arcs,w)$, where $(\nodes,\arcs)$ is a directed graph, and $w:\arcs\rightarrow\R$ is a function that associates a weight $w((i,j))$ to each arc $(i,j)\in\arcs$ of graph $(\nodes,\arcs)$.
A sequence of $r+1$ nodes $\rho=(i_1,i_2,\ldots,i_{r+1})$, $r\geq 0$, such that $(i_j,i_{j+1})\in\arcs$ for all $j\in\dint{1,r}$ is a path of length $r$; a path $\rho$ such that $i_1 = i_{r+1}$ is called a circuit.
The weight of a path is the sum (in conventional algebra) of the weights of the arcs composing it; conventionally, the weight of a path of length $r=0$, i.e., $\rho=(i_1)$, is equal to $0$.

The \textit{precedence graph} associated with matrix $\MA\in\Rmax^{n\times n}$ is the weighted directed graph $\graph(A)=(\nodes,\arcs,w)$, where $\nodes=\dint{1,n}$, and there is an arc $(j,i)\in\arcs$ of weight $w((j,i))=A_{ij}$ if and only if $A_{ij}\neq -\infty$. 
We say that $\graph(A)$ is a \textit{parametric precedence graph} when elements of $A$ are functions of some real parameters $\lambda_1,\ldots,\lambda_p$, i.e., $A = A(\lambda_1,\ldots,\lambda_p)$.

There are important connections between the max-plus algebra and precedence graphs.
For instance, element $(i,j)$ of the $r$\textsuperscript{th} max-plus power of a matrix $A\in\Rmax^{n\times n}$, $(A^r)_{ij}$, corresponds to the maximum weight of all paths in $\graph(A)$ of length $r$ from node $j$ to node $i$.
A direct consequence is that $(A^*)_{ij}$ is equal to the largest weight of all paths (of any length) from node $j$ to $i$.
Observe that a precedence graph $\graph(A)$ does not contain circuits with positive weight if and only if $\tr(A^*) = 0$; in presence of at least one circuit with positive weight, $\tr(A^*) = \infty$.
In the following, we indicate by $\nonegset$ the set of all precedence graphs that do not contain circuits with positive weight, i.e.,
\[
	\nonegset = \{ \graph(A)\ | \ \tr(A^*) = 0 \}.
\]
The following proposition will be used later to verify the existence of, and compute, trajectories \zor{satisfying time-window constraints} in (switched) max-plus linear-dual inequalities. 
\begin{proposition}{\cite{butkovivc2010max,baccelli1992synchronization}}\label{pr:nonegset_inequality}
    Let $A$ be an $n\times n$ matrix with elements in $\Rmax$.
    Inequality $A\otimes x \preceq x$ admits a solution $x\in \R^n$ if and only if $\zor{\graph(A)}\in\nonegset$.
    In this case, any column of $A^*$ solves the inequality, i.e., $A\otimes (A^*)_{\cdot,i} \preceq (A^*)_{\cdot,i}$ for all $i\in\dint{1,n}$.
\end{proposition}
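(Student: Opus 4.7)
The plan is to handle the two implications separately and then to reconcile the ``any column of $A^*$'' claim with the requirement that the solution actually lie in $\R^n$, which is the only slightly delicate point in the argument.

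For the only-if direction I would assume $x \in \R^n$ satisfies $A \otimes x \preceq x$ and iterate left-multiplication by $A$, obtaining $A^r \otimes x \preceq x$ for every $r \geq 0$. If $\graph(A)$ contained a circuit of weight $w > 0$ and length $r$ through some node $i$, the graph-theoretic interpretation of powers recalled in Section~\ref{su:precedence_graphs} would give $(A^r)_{ii} \geq w$, hence
\[
    (A^r \otimes x)_i \;\geq\; (A^r)_{ii} \otimes x_i \;\geq\; w + x_i \;>\; x_i,
\]
contradicting $A^r \otimes x \preceq x$ (the finiteness of $x_i$ is essential here: with $x_i = +\infty$ the strict inequality would collapse). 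Hence $\graph(A) \in \nonegset$.

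For the converse, I would use that $\tr(A^*) = 0$ forces every maximum-weight path in $\graph(A)$ to be attainable by a simple path, so $A^* = \bigoplus_{k=0}^{n-1} A^k$ and, in particular, no entry of $A^*$ equals $+\infty$. A direct computation then yields
\[
    A \otimes A^* \;=\; \bigoplus_{k=1}^{+\infty} A^k \;\preceq\; \bigoplus_{k=0}^{+\infty} A^k \;=\; A^*,
\]
and restricting to the $i$-th column gives $A \otimes (A^*)_{\cdot, i} \preceq (A^*)_{\cdot, i}$, proving the second part of the statement.

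It remains to exhibit a genuinely real-valued solution, since an individual column of $A^*$ may still contain $-\infty$ entries when some nodes are unreachable in $\graph(A)$. The remedy, and the step I view as the main obstacle bridging the two halves of the proposition, is to aggregate the columns: set $x := \bigoplus_{i=1}^n (A^*)_{\cdot, i}$. Each coordinate satisfies $x_j \geq (A^*)_{jj} = 0$ thanks to the length-zero path contribution, so $x_j > -\infty$, while the already established absence of $+\infty$ entries in $A^*$ gives the upper bound; thus $x \in \R^n$. The inequality $A \otimes x \preceq x$ then follows from the column-wise inequalities by distributivity of $\otimes$ over $\oplus$, completing the if-direction.
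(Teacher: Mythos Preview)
Your proof is correct. The paper does not provide its own proof of this proposition; it is stated with a citation to \cite{butkovivc2010max,baccelli1992synchronization} and used as a known result, so there is nothing to compare against beyond noting that your argument is the standard one found in those references. Your observation that an individual column of $A^*$ need not lie in $\R^n$, and the fix via $x=\bigoplus_i (A^*)_{\cdot,i}$, is exactly the right way to reconcile the two parts of the statement.
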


The \textit{maximum circuit mean} of precedence graph $\graph(A)$, denoted by $\mbox{mcm}(A)$, is the maximum weight-over-length ratio of all circuits \zor{of positive length} in the graph; this value coincides with the largest max-plus eigenvalue (the \textit{max-plus spectral radius}) of $A$\zor{, i.e., the largest $\lambda\in\Rmax$ such that $A\otimes x = \lambda x$ for some vector $x$ with elements from $\Rmax$.}
For a matrix of dimension $n\times n$, the maximum circuit mean can be computed through the following formula (\cite{baccelli1992synchronization}):
\[
	\mbox{mcm}(A) = \bigoplus_{k=1}^n \tr(A^k)^{\frac{1}{k}},
\]
where $a^{\frac{1}{k}}$ (corresponding to $\frac{a}{k}$ in \zor{the} standard algebra) is the $k$\textsuperscript{th} max-plus root of $a\in\Rmax$\zor{; a more efficient algorithm that returns the same value in time complexity $\mathcal{O}(n\times m)$ in the worst case, where $m$ is the number of edges in $\graph(A)$, is due to Karp~\cite{KARP1978309}}.

\zor{We recall that it is possible to check whether $\graph(A)\in\nonegset$ and, when $\graph(A)\in\nonegset$, to compute $A^*$ in time $\pazocal{O}(n^3)$ in the worst case, using the Floyd-Warshall algorithm~\cite{HOUGARDY2010279}.
In the case $\graph(A)\notin\nonegset$, computing $A^*$ is an NP-hard problem; fortunately, the algorithms presented in the next sections will never face this issue in practice.}

\subsection{The non-positive circuit weight problem}\label{su:NCP}

Given a parametric precedence graph $\graph(\MA)$, where $\MA = \MA(\lambda_1,\ldots,\lambda_p)$, the \textit{non-positive circuit weight problem} (NCP) consists in characterizing the set $\solNCP{\MA} = \{(\lambda_1,\dots,\lambda_p)\in\R^p\ |\ \graph(\MA)\in\nonegset\}$ of all values of parameters $\lambda_1, \dots,\lambda_p$ for which $\graph(A)$ does not contain circuits with positive weight.
Specific classes of the NCP find applications in the analysis of periodic trajectories in max-plus dynamical systems.
An application example is shown in Section~\ref{su:LDIs}.

\subsubsection{The PIC-NCP}

When matrix $\MA$ has the form 
\[
	\MA(\lambda) = \lambda\MP \oplus \lambda^{-1}\MI \oplus \MC
\]
for arbitrary matrices $\MP,\MI,\MC\in\Rmax^{n\times n}$ (called proportional, inverse\footnote{\zor{In contrast with other notations, symbol $I$ in this paper does \textit{not} indicate the identity matrix in the max-plus algebra, but an arbitrary matrix. The identity matrix is instead denoted by $E_{\otimes}$.}}, and constant matrix, respectively), then the problem is referred to as the \textit{proportional-inverse-constant-NCP}. 
In this case, $\solNCP{\lambda\MP\oplus\lambda^{-1}\MI\oplus\MC}=[\lambda_{\text{min}},\lambda_{\text{max}}]\cap\R$ is an interval, and its extreme points can be found either in weakly polynomial time using linear programming solvers such as the interior-point method, or in strongly polynomial time $\pazocal{O}(n^4)$ using Algorithm~\ref{al:PIC-NCP} (\cite{zorzenon2021nonpositive}).
\zor{The functioning of the latter algorithm is briefly described in the following}.

\zor{We remark that the PIC-NCP represents a generalization of the max-plus subeigenproblem (see~\cite{gaubert1995resource}), i.e., the problem of finding a real $\lambda$ such that the inequality}
\[
 I \otimes x\leq \lambda x
\] 
\zor{admits a solution $x\in\R^{n}$ for a given matrix $I\in\Rmax^{n\times n}$.
Indeed, the above inequality can be rewritten (by multiplying both sides by $\lambda^{-1}$) as}
\[
\lambda^{-1}I \otimes x \leq x,
\] 
\zor{and, from Proposition~\ref{pr:nonegset_inequality}, admits a solution $x\in\R^n$ if and only if the precedence graph $\graph(\lambda^{-1}I)$ does not have circuits with positive weight; the PIC-NCP thus simplifies into the max-plus subeigenproblem when matrices $P$ and $C$ are $\mathcal{E}$.
We recall from~\cite[Lemma 1]{gaubert1995resource} that the least solution of the subeigenproblem is the max-plus spectral radius of matrix $I$, i.e.,}
\[
\solNCP{\lambda^{-1}I} = [\mbox{mcm}(I),+\infty) \cap \R.
\] 
\zor{When matrices $P$ and $C$ are not $\mathcal{E}$, a more sophisticated approach is necessary to solve the PIC-NCP. 
In particular, in Algorithm~\ref{al:PIC-NCP} some pre-computations are first performed (lines 3-4) to simplify the problem into an equivalent PI-NCP (a PIC-NCP where matrix $C$ is $\mathcal{E}$) by redefining $P$ as $C^* \otimes P \otimes C^*$ and $I$ as $C^* \otimes I \otimes C^*$. 
Then, through the for-loop of lines 5-7, a matrix $S$ is constructed such that, in every new iteration of the loop, the spectral radius of matrix $I\otimes S^*$ and the inverse of the spectral radius of matrix $P\otimes S^*$ approximate better and better $\lambda_{\textup{min}}$ and $\lambda_{\textup{max}}$, respectively.
It can be shown (see~\cite{zorzenon2021nonpositive}) that after at most $\left \lfloor{\frac{n}{2}} \right \rfloor$ iterations, the two quantities converge to the desired values (line 10), i.e.,}
\[
    \solNCP{\lambda P \oplus \lambda^{-1}I \oplus C} = [\mbox{mcm}(I\otimes S^*),(\mbox{mcm}(P\otimes S^*))^{-1}] \cap \R.
\]
\zor{If some conditions on matrices $C$ and $S$ do not hold, the algorithm can be terminated prematurely as no $\lambda$ solving the PIC-NCP exists (lines 1-2 and 8-9).}

\begin{algorithm2e}[t]   \DontPrintSemicolon \small
\KwIn{$\MP,\MI,\MC\in \Rmax^{n\times n}$}
\KwOut{$\Lambda_{\text{NCP}}(\lambda\MP\oplus \lambda^{-1}\MI\oplus \MC)$}
 \If{$\graph(\MC)\notin \nonegset$}{%
     \Return $\emptyset$%
 }%
$\MP \leftarrow \MC^* \MP \MC^*$\;
$\MI \leftarrow \MC^* \MI \MC^*$\;
$S \leftarrow E_{\otimes}$\;
\For{$k = 1$ \KwTo $\left \lfloor{\frac{n}{2}} \right \rfloor$}{\vspace*{2pt}
	$S\leftarrow \MP S^2 \MI \oplus \MI S^2 \MP \oplus E_{\otimes}$
}
\If{$\graph(S)\notin \nonegset$}{%
	\Return $\emptyset$%
}%
\Return $[\mbox{mcm}(\MI S^*),(\mbox{mcm}(\MP S^*))^{-1}]\cap\R$
 \caption{$\mathsf{Solve\_NCP}(\MP,\MI,\MC)$ (from~\cite{zorzenon2021nonpositive})}\label{al:PIC-NCP}
\end{algorithm2e}

\subsubsection{The MPIC-NCP}\label{su:MPIC-NCP}

A natural generalization of the PIC-NCP is the \textit{multivariable PIC-NCP} (MPIC-NCP), in which matrix $A$ takes the form
\[
    A(\lambda_1,\dots,\lambda_q) = \bigoplus_{i=1}^q (\lambda_i P_i \oplus \lambda_i^{-1} I_i) \oplus C,
\]
for some matrices $P_i,I_i,C\in \Rmax^{n\times n}$ and parameters $\lambda_i\in\R$, for all $i\in\dint{1,q}$.
In this case, it can be shown that the problem becomes "trivially intractable" in $q$,\footnote{On the other hand, assuming that $q$ is fixed, the problem remains solvable in strongly polynomial time. The proof of this fact is out of the scope of the present article.} in the sense that the set $\solNCP{\MA}$ corresponds\zor{, in the worst case,} to the solution of a system of $3^q-1$ (i.e., exponentially many) non-redundant linear inequalities (in the conventional sense) in variables $\lambda_1,\dots,\lambda_q$; in other words, $\solNCP{\MA}$ is a polytope with at most $3^q-1$ facets living in a $q$-dimensional space.
Consequently, it is unrealistic to expect to efficiently solve the MPIC-NCP, since the solution set $\solNCP{\MA}$ cannot be described in polynomial space.
Nevertheless, it is possible to verify the non-emptiness of $\solNCP{\MA}$ in (weakly) polynomial time from the following observation.
Proposition~\ref{pr:nonegset_inequality} suggests that the parameters $\lambda_i$ such that $\graph(A)\in\nonegset$ are those for which the inequality
\[
	\left(\bigoplus_{i=1}^q (\lambda_i P_i \oplus \lambda_i^{-1} I_i) \oplus C\right) \otimes x \preceq x
\] 
admits a solution $x\in\R^{n}$.
We can rewrite the inequality above in \zor{the} standard algebra as the system
\[
\left\{
\begin{array}{ccl}
    \displaystyle\max_{i\in\dint{1,q},j\in\dint{1,n}} \left( (P_i)_{1j} + \lambda_i + x_j, (I_i)_{1j}-\lambda_i + x_j, C_{1j} + x_j\right) &\leq& x_1 \\
    \vdots&&\\
    \displaystyle\max_{i\in\dint{1,q},j\in\dint{1,n}} \left( (P_i)_{nj} + \lambda_i + x_j, (I_i)_{nj}-\lambda_i + x_j, C_{nj} + x_j\right) &\leq& x_n ,
\end{array}
\right.
\] 
which is equivalent to
\begin{equation}\label{eq:linear_inequalities}
\left\{
\begin{array}{rclcr}
    (P_i)_{1j} + \lambda_i + x_j &\leq& x_1 &&\quad \forall i\in\dint{1,q},j\in\dint{1,n}\\
    (I_i)_{1j}-\lambda_i + x_j &\leq& x_1 &&\quad \forall i\in\dint{1,q},j\in\dint{1,n}\\
    C_{1j} + x_j &\leq& x_1 &&\quad \forall j\in\dint{1,n}\\
        &\vdots&&&\\
    (P_i)_{nj} + \lambda_i + x_j &\leq& x_n &&\quad \forall i\in\dint{1,q},j\in\dint{1,n}\\
    (I_i)_{nj}-\lambda_i + x_j &\leq& x_n &&\quad \forall i\in\dint{1,q},j\in\dint{1,n}\\
    C_{nj} + x_j &\leq& x_n &&\quad \forall j\in\dint{1,n}.
\end{array}
\right.
\end{equation}
The system above consists of at most\footnote{The inequalities corresponding to $-\infty$-elements in matrices $P_i$, $I_i$, and $C$ are automatically satisfied and can be ignored.} $(2q+1)n^2$ linear inequalities in $n+q$ real unknowns $x_1,\dots,x_n,\lambda_1,\dots,\lambda_q$.
Therefore, the non-emptiness of its solution set can be checked in polynomial time using linear programming techniques.

\subsection{Max-plus linear-dual inequalities}\label{su:LDIs}

In the following, we define max-plus linear-dual inequalities (LDIs) from a purely formal point of view.
Their application to describe the dynamics of P-time event graphs is discussed in the next section.
Let $A^0,A^1\in\Rmax^{n\times n}$, $B^0,B^1\in\Rmin^{n\times n}$, and $K\in\nat\cup\{+\infty\}$.
LDIs are systems of $(\oplus,\otimes)$- and $(\splus,\stimes)$-linear dynamical inequalities in the \textit{dater function} $x:\dint{1,K}\rightarrow\R^n$ of the form
\begin{equation}\label{eq:dynamics_LDIs}
	\begin{array}{rrcl}
		\forall k\in\dint{1,K}, &
		A^0\otimes x(k) \preceq & x(k) & \preceq B^0\stimes x(k)\\
		\forall k\in\dint{1,K-1}, &
		\quad\quad A^1\otimes x(k) \preceq & x(k+1) & \preceq B^1\stimes x(k)
	\end{array}
	~.
\end{equation}

A finite (when $K\in\nat$) or infinite (when $K=+\infty$) trajectory $\{x(k)\}_{k\in\dint{1,K}}$ of length $K$ is consistent if it satisfies~\eqref{eq:dynamics_LDIs} for all $k$.
It is often useful in practice to restrict the focus on the simple class of \textit{$1$-periodic trajectories}, which are those of the form $\{\lambda^{k-1}x(1)\}_{k\in\dint{1,K}}$; in \zor{the} standard algebra, 1-periodic trajectories are those that satisfy: $\forall k\in\dint{1,K}$, $i\in\dint{1,n}$, $x_i(k) = (k-1)\times \lambda + x_i(1)$.
The number $\lambda$ is called \textit{period} or \textit{cycle time} of the 1-periodic trajectory.

In the following, we recall how to verify the existence of 1-periodic trajectories for given LDIs.
Substituting in~\eqref{eq:dynamics_LDIs} the formula $x(k) = \lambda^{k-1} x(1)$, we obtain
\[
	\begin{array}{rrcl}
	\forall k\in\dint{1,K}, &
		A^0\otimes \lambda^{k-1} x(1) \preceq & \lambda^{k-1} x(1) & \preceq B^0\stimes \lambda^{k-1} x(1)\\
	\forall k\in\dint{1,K-1}, &
		\quad\quad A^1\otimes \lambda^{k-1} x(1) \preceq & \lambda^k x(1) & \preceq B^1\stimes \lambda^{k-1} x(1)
	\end{array}
	~,
\]
which, after multiplying \zor{left and right hand sides of the above inequalities} by $(\lambda^{k-1})^{-1}$,\footnote{Recall that, in \zor{the} standard algebra, this is equivalent to subtracting everywhere by $(k-1)\times \lambda$.} simplifies to
\begin{equation}\label{eq:aux}
	\begin{array}{rcl}
		A^0\otimes x(1) \preceq & x(1) & \preceq B^0\stimes x(1)\\
		A^1\otimes x(1) \preceq & \lambda x(1) & \preceq B^1\stimes x(1)
	\end{array}
	~.
\end{equation}
We recall the following result.

\begin{proposition}[\cite{cuninghame2012minimax}]\label{pr:max-min}
Let $x,y\in\R^n$, $A,B\in\Rmax^{n\times n}$.
Then\footnote{\zor{The reader familiar with residuation theory~\cite{residuationtheory} may note that function $f_A^\sharp(y) = A^{\sharp}\stimes y$ coincides with the residual of the function "left multiplication by $A$" $f_A(x) = A\otimes x$ (i.e., $f_A^\sharp(y)$ is the greatest $x$ that satisfies $A\otimes x\leq y$).
In the max-plus context, function $f_A^\sharp$ is often called "left division by $A$" and denoted by $f_A^\sharp(y) = A \maxldiv y$~\cite{baccelli1992synchronization}.
Dually, function $g_B^\flat(x) = B^\sharp \otimes x$ coincides with the dual residual of the function "left dual multiplication by $B$" $g_B(y) = B \stimes y$ (i.e., $g_B^\flat(x)$ is the least $y$ that satisfies $x \leq B\stimes y$), and is sometimes denoted by $g_B^\flat(x) = B \minldiv x$~\cite{brunsch2012duality}.}}
\[
	x\preceq A^\sharp \stimes y \ \Leftrightarrow \ A\otimes x\preceq y,
\]
and
\[
    \left\{
    \begin{array}{l}
        A\otimes x\preceq y\\
        B\otimes x\preceq y
    \end{array}
    \right.
    \quad
    \Leftrightarrow
    \quad
    (A\oplus B) \otimes x\preceq y.
\]
\end{proposition}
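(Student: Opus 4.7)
The plan is to prove the two equivalences by direct componentwise unpacking into conventional arithmetic, leveraging the characterizations of $\oplus,\otimes,\splus,\stimes$ recalled in Section~\ref{su:idempotent_semirings}. Since $x,y\in\R^n$ and $A,B\in\Rmax^{n\times n}$, the only delicate values are the $-\infty$ entries of $A,B$ and the corresponding $+\infty$ entries of $A^\sharp = -A^\top$; I will show that these are absorbed by the $\max$/$\min$ operations and therefore do not spoil the equivalences.

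For the first identity I would start from the right-hand side. By definition, $A\otimes x\preceq y$ means that for every $i\in\dint{1,n}$, $\bigoplus_{j}(A_{ij}\otimes x_j)\leq y_i$, which, since the $\oplus$ is just a max, is equivalent to the family of scalar inequalities $A_{ij}+x_j\leq y_i$ for all $i,j$ (the convention $-\infty+x_j=-\infty\leq y_i$ handles $-\infty$-entries of $A$). Rearranging each such inequality gives $x_j\leq -A_{ij}+y_i = (A^\sharp)_{ji}\stimes y_i$, where the $\stimes$ is just ordinary addition except when $(A^\sharp)_{ji}=+\infty$, in which case the inequality is vacuous. Quantifying over $i$ and using $\bigsplus_i = \min_i$ yields $x_j\leq \bigsplus_i (A^\sharp)_{ji}\stimes y_i = (A^\sharp\stimes y)_j$ for every $j$, i.e., $x\preceq A^\sharp\stimes y$. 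Every step is a genuine equivalence, so the converse direction is obtained by reading the chain backwards.

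For the second identity I would simply invoke the componentwise definition of $\oplus$ together with the distributivity of $\otimes$ over $\oplus$: for every $i$,
\[
    \bigl((A\oplus B)\otimes x\bigr)_i \;=\; \bigoplus_{j}\bigl((A_{ij}\oplus B_{ij})\otimes x_j\bigr) \;=\; (A\otimes x)_i \oplus (B\otimes x)_i \;=\; \max\bigl((A\otimes x)_i,\,(B\otimes x)_i\bigr).
\]
Consequently, $(A\oplus B)\otimes x\preceq y$ holds iff $\max\bigl((A\otimes x)_i,(B\otimes x)_i\bigr)\leq y_i$ for every $i$, which is precisely the conjunction $A\otimes x\preceq y$ and $B\otimes x\preceq y$.

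I do not anticipate a genuine obstacle here: the result is essentially a book-keeping exercise in the definitions, and its status as a known lemma (cited to \cite{cuninghame2012minimax}) confirms this. The only point that requires mild care is the bookkeeping of infinities: I would briefly note that $-\infty$-entries of $A$ correspond to vacuous scalar inequalities on the $(\oplus,\otimes)$-side and to $+\infty$-entries of $A^\sharp$ that are suppressed by the $\min$ on the $(\splus,\stimes)$-side, so all four formulations remain well defined for $x,y\in\R^n$.
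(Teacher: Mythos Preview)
Your proof is correct and is the standard elementary argument for this well-known residuation identity. Note, however, that the paper does not actually supply a proof of Proposition~\ref{pr:max-min}: it is stated as a cited result from~\cite{cuninghame2012minimax} and used without further justification, so there is no ``paper's own proof'' to compare against.
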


Because of Proposition~\ref{pr:max-min} we can rewrite~\eqref{eq:aux} as
\[
	(\lambda B^{1\sharp} \oplus \lambda^{-1} A^1 \oplus A^0\oplus B^{0\sharp}) \otimes x(1) \preceq x(1),
\]
which, from Proposition~\ref{pr:nonegset_inequality}, admits solution $x(1)\in\R^n$ if and only if $\graph(\lambda B^{1\sharp} \oplus \lambda^{-1} A^1 \oplus A^0\oplus B^{0\sharp})\in\nonegset$.
Note that we obtained a PIC-NCP with matrices $P = B^{1\sharp}$, $I = A^1$, and $C = A^0 \oplus B^{0\sharp}$.
Therefore, a consistent $1$-periodic trajectory exists if and only if $\solNCP{\lambda B^{1\sharp} \oplus \lambda^{-1}A^1 \oplus A^0 \oplus B^{0\sharp}} = [\lambda_{\textup{min}},\lambda_{\textup{max}}]\cap \R$ is nonempty, where $\lambda_{\textup{min}}$ and $\lambda_{\textup{max}}$ can be found in time $\pazocal{O}(n^4)$ using Algorithm~\ref{al:PIC-NCP}.

%

%
%
%
%

\section{P-time event graphs}\label{se:P-time_event_graphs}

\begin{definition}[From~\cite{CALVEZ19971487}]
An \zor{ordinary (or unweighted)} P-time Petri net (P-TPN) is a 5-tuple $(\places,\transitions,\arcs,m,\iota)$, where $(\places\cup\transitions,E)$ is a directed graph in which the set of nodes is partitioned into the set of places, $\places$, and the set of transitions, $\transitions$, the set of arcs, $\arcs$, is such that $\arcs\subseteq (\places\times \transitions)\cup(\transitions\times\places)$, $m:\places\rightarrow\nato$ is a map such that $m(p)$ represents the number of tokens initially residing in place $p\in\places$ (also called initial marking of $p$), and 
\[
    \iota:\places\rightarrow\{[\tau^-,\tau^+]\ |\ \tau^-\in\R_{\geq 0},\tau^+\in\R_{\geq 0}\cup\{\infty\},\tau^-\leq \tau^+\}
\]
is a map that associates to every place $p\in\places$ a time interval $\iota(p)=[\tau^-_p,\tau^+_p]$.
\end{definition}

In the following, we briefly describe the dynamics of an \zor{ordinary} P-TPN\footnote{Since the functioning of P-TPNs is here reported only to present P-time event graphs, we will not delve into the nuances of weak and strong semantics for P-TPNs.
For a thorough description of different types of semantics, we invite the reader to consult~\cite{boyer2008compared}.}.
A transition $t$ is enabled when either it has no upstream places or each upstream place $p$ of $t$ contains at least one token which has resided in $p$ for a time between $\tau^-_{p}$ and $\tau^+_{p}$ (extremes included).
When transition $t$ is enabled, it may fire; its firing causes one token to be removed instantaneously from each of the upstream places of $t$, and one token to be added, again instantaneously, to each of the downstream places of $t$.
If a token sojourns more than $\tau^+_{p}$ time instants in a place $p$, then the token becomes \textit{dead}.

A P-time event graph (P-TEG) is an \zor{ordinary} P-TPN in which every place has exactly one upstream and one downstream transition.
Let $|\transitions|=n$ be the number of transitions in a P-TEG and let $x:\dint{1,K}\rightarrow\R^n$ be a \textit{dater function} of length $K\in\nat\cup\{+\infty\}$, i.e., a function such that $x_i(k)$ represents the time at which transition $t_i$ fires for the $k$\textsuperscript{th} time.
Since the $(k+1)$\textsuperscript{st} firing of any transition cannot occur before the $k$\textsuperscript{th}, we require the dater to be a non-decreasing function, i.e., $\forall i\in\dint{1,n}$, $x_i(k+1)\geq x_i(k)$.
The evolution of the marking in a P-TEG is entirely described by its corresponding dater trajectory $\{x(k)\}_{k\in\dint{1,K}}$, and if a non-decreasing dater trajectory exists for which no token death occurs, then it is said to be \textit{consistent} for the P-TEG.

It is always possible to transform a P-TEG into an equivalent one whose places have at most $1$ initial token each~\cite{vspavcek2017analysis}. 
Therefore, in the following we will only focus on P-TEGs in which the initial marking $m(p)$ is either $0$ or $1$ for each place $p\in\places$. 
Under this assumption, a consistent trajectory for a given P-TEG must satisfy LDIs as in~\eqref{eq:dynamics_LDIs},
where matrices $A^0,A^1\in\Rmax^{n\times n}$, $B^0,B^1\in\Rmin^{n\times n}$ \zor{are called} \textit{characteristic matrices} of the P-TEG, and are defined as follows.
If there exists a place $p_{ij}$ with initial marking $\mu\in\{0,1\}$, upstream transition $t_j$ and downstream transition $t_i$, then $A^\mu_{ij}=\tau^-_{p_{ij}}$ and $B^\mu_{ij}=\tau^+_{p_{ij}}$; otherwise, $A^\mu_{ij} = -\infty$ and $B^\mu_{ij} = \infty$.

Before introducing some structural properties of P-TEGs, it is useful to clarify the role of initial conditions for these dynamical systems.

\subsection{Initial conditions}\label{su:initial_conditions}

Depending on the P-TEG's intended application, the restrictiveness of initial conditions may vary; in the following, we introduce two alternatives.

\subsubsection{Loose initial conditions}

\zor{Inequalities}~\eqref{eq:dynamics_LDIs} suggest that $x(1)$ can assume any value in $\R^n$, as long as it satisfies
\begin{equation}\label{eq:initial_conditions}
	\begin{array}{c}
		A^0\otimes x(1) \preceq x(1) \preceq B^0\stimes x(1)\\
		A^1\otimes x(1) \preceq B^1\stimes x(1)
	\end{array}
	~.
\end{equation}
Note that~\eqref{eq:initial_conditions} does not restrict the first firing time based on the arrival times of the initial tokens in the Petri net; the arrival times of initial tokens are, in fact, not even defined.
For this reason, we say that the initial conditions of a P-TEG are \textit{loose} if no other restriction other than~\eqref{eq:initial_conditions} applies to $x(1)$.

P-TEGs with loose initial conditions evolve entirely according to LDIs, and are suitable to model manufacturing systems operating in periodic regime, after a transient period has passed (an example is given in Section~\ref{su:cycle_time_analysis}).
Another scenario where loose initial conditions may be convenient is when the time-window constraints need to be fulfilled only after the occurrence of the first events, as shown in the following example.

\begin{example}[Heat treatment line]\label{ex:heat_treatment_loose}
This example is adapted from~\cite{zorzenon2020bounded}.
Consider a simple heat treatment line, schematically represented in Figure~\ref{fi:furnace_A}, consisting of a furnace, which performs a heat treatment, and an autonomous guided vehicle that receives processed pieces and transports them to the next stage.
Both the furnace and the vehicle have unitary capacity, i.e., they can process and transport one piece at a time, respectively.
The heat treatment must last between 2 and 3 time units, and the autonomous vehicles takes (at least) 0.5 time units both to transport a processed piece to the next stage and to travel back to the furnace.
Customers' demand imposes that the time difference between subsequent unloadings of processed pieces from the autonomous guided vehicle must not exceed 4 time units; the specification needs to be met for all pieces after the first one.
Moreover, each piece must spend at least 6 time units in the processing line, from the moment it enters the furnace to the one it is removed from the vehicle, in order to synchronize with other processing stages.
Initially, the furnace is empty and the vehicle is waiting for a piece at the furnace.
The P-TEG in Figure~\ref{fi:TEG_furnace} models the described plant; a firing of transitions $t_1$, $t_2$, and $t_3$ represents, respectively, the arrival of an unprocessed piece in the furnace, the loading of a processed piece onto the autonomous guided vehicle, and the unloading of a piece from the vehicle.

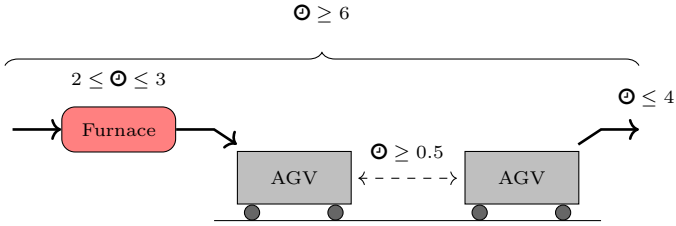
\begin{figure}
	\centering
	\begin{tikzpicture}
\footnotesize 
\node (O) at (0,0) {};
\node[draw,rectangle,minimum height=.6cm,minimum width=1.5cm,fill=red!50!white,rounded corners=5pt,align=center] (F) at (1.5,0) {Furnace};
\draw[line width = 1pt,->] (O) -- (F.west);

\node at ($(F.north) + (0,1.3em)$) {$2 \leq\mbox{\faClockO}\leq 3$};

\newcommand{\carH}{.7}
\newcommand{\carL}{1.5}
\newcommand{\carW}{.2}

\coordinate (A0) at ($(F.east) + (.8,-1)$);

\draw[line width = 1pt,->] (F.east) -- ($(A0) + (-.3,1)$) -- ($(A0) + (0,.8)$);

\node[draw, rectangle, minimum height=\carH cm,minimum width=\carL cm,anchor=south west,align=center,fill=gray!50!white] (C1) at (A0) {AGV};
\node[draw, circle, minimum size = \carW cm, inner sep = 0, anchor = north,fill=black!60!white] (W1) at ($(A0) + (.2,0)$) {};
\node[draw, circle, minimum size = \carW cm, inner sep = 0, anchor = north,fill=black!60!white] (W2) at ($(A0) + (\carL-.2,0)$) {};

\coordinate (A1) at ($(A0) + (\carL+1.5,0)$);

\node[draw, rectangle, minimum height=\carH cm,minimum width=\carL cm,anchor=south west,align=center,fill=gray!50!white] (C2) at (A1) {AGV};
\node[draw, circle, minimum size = \carW cm, inner sep = 0, anchor = north,fill=black!60!white] (W3) at ($(A1) + (.2,0)$) {};
\node[draw, circle, minimum size = \carW cm, inner sep = 0, anchor = north,fill=black!60!white] (W4) at ($(A1) + (\carL-.2,0)$) {};

\draw[<->,dashed] ($(A0) + (\carL+.1,\carH/2)$) -- ($(A1) + (-.1,\carH/2)$)node[midway,yshift=1.2em]{$\mbox{\faClockO}\geq 0.5$};

\coordinate (end) at ($(A1) + (\carL+.8,1)$);
\draw[line width = 1pt,->] ($(A1) + (\carL,.8)$) -- ($(A1) + (\carL+.3,1)$) -> (end);

\draw ($(W1.south) - (.5,0)$) -- ($(W4.south) + (.5,0)$);

\draw [decorate,decoration={brace,amplitude=5pt,raise=8ex}]
  ($(O)+(0,0)$) -- ($(end)+(0,0)$) node[midway,yshift=5.5em]{$\mbox{\faClockO}\geq 6$};

\node at ($(end) + (0.1,1.5em)$) {$\mbox{\faClockO}\leq 4$};
\end{tikzpicture}
	\caption{Illustration of the heat treatment line of Example~\ref{ex:heat_treatment_loose}.}\label{fi:furnace_A}
\end{figure}
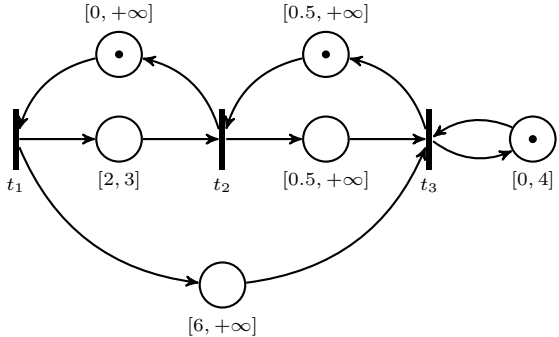
\begin{figure}
	\centering
	\begin{tikzpicture}[node distance=.5cm and 1cm,>=stealth',bend angle=45,thick]
\tikzstyle{place}=[circle,thick,draw=black,minimum size=6mm]
\tikzstyle{transitionV}=[rectangle,thick,fill=black,minimum height=8mm,inner xsep=1pt]

\footnotesize

\node [transitionV,label=below:$t_1$] (x1) {};
\node [place,tokens=0,label=below:{$[2,3]$}] (p21) [right= of x1] {};
\node [place,tokens=1,label=above:{$[0,+\infty]$}] (p12) [above= of p21] {};
\node [transitionV,label=below:$t_2$] (x2) [right=of p21] {};
\node (p22) [below= of x2] {};
\node [place,tokens=0,label=below:{$[0.5,+\infty]$}] (p32) [right= of x2] {};
\node [place,tokens=1,label=above:{$[0.5,+\infty]$}] (p23) [above= of p32] {};
\node [transitionV,label=below:$t_3$] (x3) [right=of p32] {};
\node [place,tokens=1,label=below:{$[0,4]$}] (p33) [right= of x3] {};
\node [place,tokens=0,label=below:{$[6,+\infty]$}] (p31) [below= of p22] {};

\draw (x1) edge[->] (p21);
\draw (p21) edge[->] (x2);
\draw (x2) edge[->] (p32);
\draw (p32) edge[->] (x3);
\draw (x1) edge[bend right=30,->] (p31);
\draw (p31) edge[bend right=30,->] (x3);
\draw (x2) edge[bend right=30,->] (p12);
\draw (p12) edge[bend right=30,->] (x1);
\draw (x3) edge[bend right=30,->] (p23);
\draw (p23) edge[bend right=30,->] (x2);
\draw (x3) edge[bend right=30,->] (p33);
\draw (p33) edge[bend right=30,->] (x3);
\end{tikzpicture}
	\caption{P-TEG representing the heat treatment line.}\label{fi:TEG_furnace}
\end{figure}

The characteristic matrices of the P-TEG are 
\[
	A^0 = \begin{bmatrix}
		-\infty & -\infty & -\infty \\
		2 & -\infty & -\infty \\
		6 & 0.5 & -\infty
	\end{bmatrix},\quad
	A^1 = \begin{bmatrix}
		-\infty & 0 & -\infty \\
		-\infty & -\infty & 0.5 \\
		-\infty & -\infty & 0
	\end{bmatrix},
\]
\[
	B^0 = \begin{bmatrix}
		+\infty & +\infty & +\infty \\
		3 & +\infty & +\infty \\
		+\infty & +\infty & +\infty
	\end{bmatrix},\quad
	B^1 = \begin{bmatrix}
		+\infty & +\infty & +\infty \\
		+\infty & +\infty & +\infty \\
		+\infty & +\infty & 4
	\end{bmatrix}.
\]
It is possible to verify that 
\[
x(1) = \begin{bmatrix}
	0\\3\\6
\end{bmatrix},\quad \forall k\geq 1,\ x(k+1) = 3.5 x(k)
\]
is a consistent trajectory for the P-TEG under loose initial conditions.
Observe that the first firing time of transition $t_3$ does not violate the upper bound associated to place $p_{33}$ (i.e., the place that is upstream and downstream of transition $t_3$), even though $x_3(1) = 6 > 4 = B_{33}^1$.
Indeed, the sojourn time of the initial token in place $p_{33}$ does not restrict the dynamics of the P-TEG. 
This is convenient from a practical point of view, as the constraint on the processing rate must be enforced only after the first piece leaves the plant.
\end{example}

\subsubsection{Strict initial conditions}\label{su:strict}

For the considered application, it may be necessary to impose further restrictions on the initial conditions.
Here we take into account the amount of time that initial tokens have resided in places prior to the initial time $t_0\in\R$; we call this value the \textit{time tag} of the token\footnote{The concept of time tags is analogous to that of \textit{lag times}, defined as in~\cite{baccelli1992synchronization}.}.
Time tags can be useful, for instance, in manufacturing, to specify that some machines have already been processing a part since time $t_0-\tau$, or in transportation, to indicate that a vehicle has left a station at time $t_0-\tau$, for $\tau\geq 0$.

Let $\rho$ be a function that associates a time tag to every place with an initial token in a P-TEG.
Formally, if there is a place $p_{ij}$ with marking $m(p_{ij}) = 1$, upstream transition $t_j$, and downstream transition $t_i$, then we denote $\rho(p_{ij}) = \rho_{ij}\in\R_{\geq 0}$, otherwise $\rho(p_{ij})$ is not defined.
Then, in addition to~\eqref{eq:dynamics_LDIs}, the first firing time of the transitions of the P-TEG must satisfy, for all $i,j\in\dint{1,n}$,
\[
	A^1_{ij} + t_0 - \rho_{ij} \leq x_i(1) \leq B^1_{ij} + t_0 - \rho_{ij};
\]
for each $i,j$, the inequality specifies that the first firing time of transition $t_i$ must not violate the time-window constraint $[A^1_{ij},B^1_{ij}]$ associated with place $p_{ij}$, considering that the initial token of this place arrived at time $t_0 - \rho_{ij}$.
In the max-plus algebra, the latter inequalities can be expressed as
\begin{equation}\label{eq:time_tags}
	\underline{\Delta} \otimes t_0\tilde{e} \leq x(1) \leq \overline{\Delta} \stimes t_0\tilde{e},
\end{equation}
where
\[
	\underline{\Delta}_{ij} = \begin{dcases}
		A^1_{ij} -\rho_{ij} & \mbox{if } A^1_{ij} \neq -\infty,\\
		-\infty & \mbox{otherwise},
	\end{dcases}
	\quad
	\overline{\Delta}_{ij} = \begin{dcases}
		B^1_{ij} -\rho_{ij} & \mbox{if } B^1_{ij} \neq +\infty,\\
		+\infty & \mbox{otherwise},
	\end{dcases}
\]
and
\[
	\tilde{e} = \begin{bmatrix}0\\0\\\vdots\\0\end{bmatrix}\in\R^n.
\]

Note that other possible definitions for $\underline{\Delta}$ and $\overline{\Delta}$, corresponding to different requirements for the first firings of transitions, may be considered.
In general, given any $\underline{\Delta}\in\Rmax^{n\times n}$, $\overline{\Delta}\in\Rmin^{n\times n}$ such that $(\underline{\Delta},\overline{\Delta}) \neq (\pazocal{E},\pazocal{T})$, inequality~\eqref{eq:time_tags} restricts the set of consistent trajectories for a P-TEG; hence, we say that the initial conditions of a P-TEG are \textit{strict} if $x(1)$ is required to satisfy them for some $(\underline{\Delta},\overline{\Delta})\neq(\pazocal{E},\pazocal{T})$. 
We will refer to consistent trajectories with either loose or strict initial conditions depending if they satisfy only~\eqref{eq:dynamics_LDIs} or also~\eqref{eq:time_tags}.
Note that, without loss of generality, we can assume that $t_0=0$, as P-TEGs (with either loose or strict initial conditions) are time-invariant systems, i.e., if $\{x(k)\}_{k\in\dint{1,K}}$ is a consistent trajectory, then $\{t_0 \otimes x(k)\}_{k\in\dint{1,K}}$ is consistent as well for any $t_0\in\R$. 
In other words, the choice of the initial time $t_0$ does not affect the dynamics of P-TEGs.

\begin{example}[Heat treatment line, cont.]
Consider again the P-TEG of Figure~\ref{fi:TEG_furnace}.
It is not difficult to see that, if we assign a time tag to each initial token of the P-TEG, no consistent trajectory that satisfies strict initial conditions can be found; indeed, it is not possible to fire transition $t_3$ before time $4-\rho_{33}$, for any time tag $\rho_{33}\in \R_{\geq 0}$. 
So, let us modify the configuration of the initial tokens as in Figure~\ref{fi:TEG_furnace_time_tags}; time tags are indicated in the figure.

\begin{figure}
	\centering
	\begin{tikzpicture}[node distance=.5cm and 1cm,>=stealth',bend angle=45,thick]
\tikzstyle{place}=[circle,thick,draw=black,minimum size=6mm]
\tikzstyle{transitionV}=[rectangle,thick,fill=black,minimum height=8mm,inner xsep=1pt]

\footnotesize

\node [transitionV,label=below:$t_1$] (x1) {};
\node [place,tokens=0,label=below:{$[2,3]$}] (p21) [right= of x1] {};
\node [place,tokens=1,label=above:{$[0,+\infty]$},label=below:{$\rho_{12} = 0.5$}] (p12) [above= of p21] {};
\node [transitionV,label=below:$t_2$] (x2) [right=of p21] {};
\node (p22) [below= of x2] {};
\node [place,tokens=1,label=above:{$[0.5,+\infty]$},label=below:{$\rho_{32}=0.5$}] (p32) [right= of x2] {};
\node [place,tokens=0,label=above:{$[0.5,+\infty]$}] (p23) [above= of p32] {};
\node [transitionV,label=below:$t_3$] (x3) [right=of p32] {};
\node [place,tokens=1,label=above:{$[0,4]$},label=below:{$\rho_{33}=1$}] (p33) [right= of x3] {};
\node [place,tokens=1,label=above:{$[6,+\infty]$},label=below:{$\rho_{31}=3$}] (p31) [below= of p22] {};

\draw (x1) edge[->] (p21);
\draw (p21) edge[->] (x2);
\draw (x2) edge[->] (p32);
\draw (p32) edge[->] (x3);
\draw (x1) edge[bend right=30,->] (p31);
\draw (p31) edge[bend right=30,->] (x3);
\draw (x2) edge[bend right=30,->] (p12);
\draw (p12) edge[bend right=30,->] (x1);
\draw (x3) edge[bend right=30,->] (p23);
\draw (p23) edge[bend right=30,->] (x2);
\draw (x3) edge[bend right=30,->] (p33);
\draw (p33) edge[bend right=30,->] (x3);
\end{tikzpicture}
	\caption{P-TEG representing the heat treatment line in another initial configuration.}\label{fi:TEG_furnace_time_tags}
\end{figure}
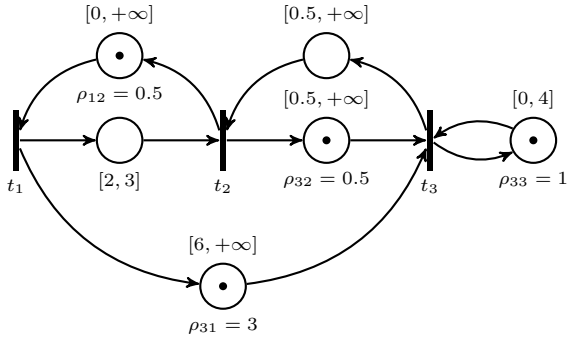

The interpretation is that:
\begin{itemize}
	\item a piece is inside the heat treatment line since time $t_0 - 3$, as $\rho_{31} = 3$,
	\item an autonomous guided vehicle is at the unloading location with a processed piece from time $t_0$, as $\rho_{32} = 0.5 = A^1_{32}$,
	\item the furnace has completed the last heat treatment at time $t_0 - 0.5$, as $\rho_{12} = 0.5$, and
	\item the first processed piece is required to leave the heat treatment plant before time $t_0 + 3$, as $\rho_{33} = 1$.
\end{itemize}

The characteristic matrices for this example are 
\[
	A^0 = \begin{bmatrix}
		-\infty & -\infty & -\infty \\
		2 & -\infty & 0.5 \\
		-\infty & -\infty & -\infty
	\end{bmatrix},\quad
	A^1 = \begin{bmatrix}
		-\infty & 0 & -\infty \\
		-\infty & -\infty & -\infty \\
		6 & 0.5 & 0
	\end{bmatrix},
\]
\[
	B^0 = \begin{bmatrix}
		+\infty & +\infty & +\infty \\
		3 & +\infty & +\infty \\
		+\infty & +\infty & +\infty
	\end{bmatrix},\quad
	B^1 = \begin{bmatrix}
		+\infty & +\infty & +\infty \\
		+\infty & +\infty & +\infty \\
		+\infty & +\infty & 4
	\end{bmatrix},
\]
and the matrices $\underline{\Delta}$ and $\overline{\Delta}$ are
\[
	\underline{\Delta} = \begin{bmatrix}
		-\infty & -0.5 & -\infty \\
		-\infty & -\infty & -\infty \\
		3 & 0 & -1
	\end{bmatrix},\quad
	\overline{\Delta} = \begin{bmatrix}
		+\infty & +\infty & +\infty \\
		+\infty & +\infty & +\infty \\
		+\infty & +\infty & 3
	\end{bmatrix}.
\]

Assuming that $t_0 = 0$, the following is a consistent trajectory for the P-TEG under strict initial conditions:
\[
x(1) = \begin{bmatrix}
	1\\4\\3
\end{bmatrix},\quad \forall k\geq 1,\ x(k+1) = 4 x(k).
\]
\end{example}

Despite their usefulness in applications, strict initial conditions present an additional complexity: the inequalities describing the dynamics of of P-TEGs with strict initial conditions are not (pure) LDIs.
This means that mathematical results in LDIs can be directly applied to P-TEGs with loose but not with strict initial conditions; this will be made evident in the following section.
In Section~\ref{su:SLDIS_PTEGs}, we will see that the dynamics of P-TEGs with strict initial conditions falls in the category of switched LDIs.

%
%

\subsection{Structural properties}\label{su:structural_PTEGs}

In this section we recall the definition of some structural properties of P-TEGs.
These properties can be equivalently stated for P-TEGs under loose or strict initial conditions.

A P-TEG is said to be \textit{consistent} if it admits a consistent, non-decreasing trajectory $\{x(k)\}_{k\in\nat}$ of infinite length.
The non-decreasingness of the dater trajectory, equivalent to having $x(k+1)\succeq x(k)$ for all $k\in\dint{1,K-1}$, is a natural requirement, as the $(k+1)$\textsuperscript{st} firing of a transition cannot occur before the $k$\textsuperscript{th} one; because of Proposition~\ref{pr:max-min}, to restrict the evolution of the dater trajectory such that this restriction is always satisfied, it is sufficient to modify the definition of matrix $A^1$ into $A^1 \oplus E_\otimes$.

We say that a trajectory $\{x(k)\}_{k\in\nat}$ is \textit{delay-bounded} if there exists a positive real number $M$ such that, for all $i,j\in\dint{1,n}$ and for all $k\in\nat$, $x_i(k)-x_j(k)<M$; a P-TEG admitting a consistent delay-bounded trajectory of the dater function is said to be \textit{boundedly consistent}.
Although in consistent P-TEGs it is possible to find a marking evolution such that no time-window constraint is violated, if the stronger property of bounded consistency does not hold, any consistent, infinite trajectory will accumulate unbounded delay between the firing times of two distinct transitions.
This phenomenon is usually not desirable in manufacturing systems represented by P-TEG, where the firings of transitions represent the start or end of processes, and the $k$\textsuperscript{th} product entering the system is finished when all transitions fire for the $k$\textsuperscript{th} time.
Indeed, in this context it implies that the total time the $k$\textsuperscript{th} product spends in the manufacturing system increases without bounds with $k$.

Analogously to LDIs, in P-TEGs we say that dater trajectories of the form $\{\lambda^{k-1}x(1)\}_{k\in\dint{1,K}}$ are 1-periodic with period $\lambda\in\R_{\geq 0}$.
Clearly, in P-TEGs with loose initial conditions, 1-periodic trajectories can be found in time complexity $\pazocal{O}(n^4)$ using Algorithm~\ref{al:PIC-NCP}, as their evolution satisfies LDIs.
To our knowledge, no algorithm that checks whether a P-TEG is consistent has been found until now; on the other hand, bounded consistency of P-TEGs with loose initial conditions can be verified in time $\pazocal{O}(n^4)$.
This fact comes from the following result.

\begin{theorem}[\cite{zorzenon2020bounded}]\label{th:bounded_consistency}
	A P-TEG with loose initial conditions is boundedly consistent if and only if it admits a consistent 1-periodic trajectory.
\end{theorem}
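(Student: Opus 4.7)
The plan is to prove each direction separately, using the PIC-NCP reformulation of 1-periodicity derived in Section~\ref{su:LDIs}.

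The ``if'' direction is direct: a 1-periodic consistent trajectory $\{\lambda^{k-1} x(1)\}_{k \in \nat}$ yields $x_i(k) - x_j(k) = x_i(1) - x_j(1)$ for every $k$, a quantity independent of $k$, so any $M > \max_{i,j} |x_i(1) - x_j(1)|$ witnesses bounded consistency.

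For the ``only if'' direction, by combining Proposition~\ref{pr:nonegset_inequality} with the manipulation at the end of Section~\ref{su:LDIs}, a consistent 1-periodic trajectory exists iff some pair $(x,\lambda) \in \R^n \times \R$ solves the system $A^0_{ij} + x_j \leq x_i$, $B^{0\sharp}_{ij} + x_j \leq x_i$, $A^1_{ij} + x_j - \lambda \leq x_i$, and $B^{1\sharp}_{ij} + x_j + \lambda \leq x_i$ (over all finite entries). I would construct such a pair from a delay-bounded trajectory $\{x(k)\}$ with bound $M$ by normalizing and passing to a limit: set $\tilde{x}(k) = x(k) - x_1(k)\tilde{e}$, so that $\tilde{x}(k)$ lies in the compact set $[-M, M]^n$, and apply Bolzano--Weierstrass to extract a convergent subsequence $\tilde{x}(k_r) \to x^*$. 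Using the $B^1$ upper bounds together with non-decreasingness, one can bound the one-step increments $x_1(k+1) - x_1(k)$; a further subsequence extraction then yields $x_1(k_r + 1) - x_1(k_r) \to \lambda^*$. Taking limits along this subsequence in the four families of inequalities satisfied by $\{x(k)\}$ should then produce the desired steady-state inequalities for $(x^*, \lambda^*)$, placing the corresponding $\lambda^*$ in $\solNCP{\lambda B^{1\sharp} \oplus \lambda^{-1} A^1 \oplus A^0 \oplus B^{0\sharp}}$.

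The main obstacle I anticipate is ensuring that $\tilde{x}(k_r + 1)$ also converges to $x^*$ rather than to some different limit point, since a single-step compactness argument alone does not guarantee this. I would address it either by a long-horizon averaging argument, replacing the one-step difference by $(x(k_r + T_r) - x(k_r))/T_r$ for a suitable $T_r \to \infty$, or by a shift-invariance/fixed-point argument on the compact space of normalized trajectory tails, which forces a recurrent normalized state to reappear asymptotically. A secondary technicality arises when some rows of $B^1$ or columns of $A^1$ are entirely infinite: in that case increment bounds must be propagated indirectly along paths of finite entries in $\graph(A^1)$ and $\graph(B^{1\sharp})$, possibly after restricting attention to the relevant strongly connected components before invoking compactness.
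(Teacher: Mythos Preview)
The paper does not prove this theorem; it is quoted from \cite{zorzenon2020bounded} and used as a black box, so there is no argument in the present paper to compare your proposal against.

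On the substance: the ``if'' direction is correct as written. For ``only if'', the compactness skeleton is reasonable and you have correctly located the real difficulty. Even after extracting subsequences so that $\tilde{x}(k_r) \to x^*$ and $x_1(k_r{+}1) - x_1(k_r) \to \lambda^*$, taking limits in $A^1 \otimes x(k_r) \preceq x(k_r{+}1) \preceq B^1 \stimes x(k_r)$ only yields constraints linking $x^*$ to whatever limit $y^* = \lim \tilde{x}(k_r{+}1)$ you extract along a further subsequence, and nothing forces $y^* = x^*$. Your averaging suggestion produces an asymptotic growth rate but does not collapse the two accumulation points of the normalized state; the ``shift-invariance/fixed-point'' remark is too vague to constitute an argument. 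As stated, the proposal is a plausible outline with an acknowledged gap rather than a proof. A more direct line of attack in this max-plus setting is to work on the infeasibility side: the 1-periodic condition is a finite linear system in $(x,\lambda)$, so by Proposition~\ref{pr:nonegset_inequality} its failure is witnessed by a positive-weight circuit in $\graph(\lambda B^{1\sharp} \oplus \lambda^{-1} A^1 \oplus A^0 \oplus B^{0\sharp})$ for every $\lambda$, and one can try to translate such a circuit into an explicit lower bound on delay growth along any consistent trajectory.
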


On the other hand, \zor{an} analogous result for the case with strict initial conditions \zor{does not hold}: boundedly consistent P-TEGs with strict initial conditions may admit no 1-periodic trajectory, as shown in the following example. 

\begin{example}
\begin{figure}
	\centering
		\begin{tikzpicture}[node distance=1.2cm and 1.5cm,>=stealth',bend angle=30,thick,on grid]
\footnotesize
\node[transitionV,label=below:{$t_1$}] (t1) {};
\node[place,tokens=1,right=of t1,label=above:{$[0,+\infty]$},label=below:{$\rho_{21}=0$}] (p12) {};
\node[place,tokens=1,below=of p12,label=above:{$[0,+\infty]$},label=below:{$\rho_{12}=0$}] (p21) {};
\node[transitionV,right=of p12,label=below:{$t_2$}] (t2) {};
\node[place,tokens=1,left=of t1,label=above:{$[10,10]$},label=below:{$\rho_{11} = 10$}] (p11) {};
\node[place,tokens=1,right=of t2,label=above:{$[10,10]$},label=below:{$\rho_{32}=9$}] (p23) {};
\node[place,tokens=1,below right=of t2,label=above:{$[10,10]$},label=below:{$\rho_{23}=10$}] (p32) {};
\node[transitionV,right=of p23,label=below:{$t_3$}] (t3) {};

\draw (t1) edge[->] (p12);
\draw (p12) edge[->] (t2);
\draw (t2.-90-15) edge[->,bend left=15] (p21);
\draw (p21) edge[->,bend left=15] (t1.-90+15);
\draw (t1.180-15) edge[bend right,->] (p11);
\draw (p11) edge[bend right,->] (t1.180+15);
\draw (t2) edge[->] (p23);
\draw (p23) edge[->] (t3);
\draw (t3.-90-15) edge[->,bend left=15] (p32);
\draw (p32) edge[->,bend left=15] (t2.-90+15);

\end{tikzpicture}
	\caption{Example of boundedly consistent P-TEG with strict initial conditions that admits no 1-periodic trajectory.}
	\label{fig:P-TEG_example_2_periodic}
\end{figure}
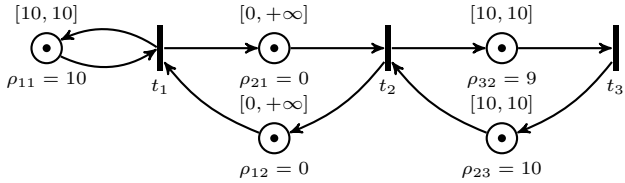

Using an algorithm that will be presented in Section~\ref{su:SLDIS_PTEGs}, it can be shown that the P-TEG with strict initial conditions in Figure~\ref{fig:P-TEG_example_2_periodic} admits no 1-periodic trajectory.
However, assuming that $t_0 = 0$, it admits the following delay-bounded (2-periodic) trajectory: 
\[
	x(1) = \begin{bmatrix}
		0\\0\\1
	\end{bmatrix},\quad x(2) = \begin{bmatrix}
		10\\11\\10
	\end{bmatrix}, \quad \forall k\geq 1,\ x(k+2) = 20 x(k).
\]
Therefore, it is boundedly consistent.

\end{example}

The following example illustrates the discussed properties in the case of P-TEGs with loose initial conditions.

\begin{example}\label{ex:P-TEGs}
Consider the P-TEG represented in~\Cref{fig:P-TEG_example}, in which time windows are parametrized with respect to label $\wZ$; in~\Cref{tab:P-TEG_parameters}, values of time windows are given for $\wZ\in\{\wA,\wB,\wC\}$.
The matrices characterizing the P-TEG labeled $\wZ$ are:
\[
	A^0_\wZ = 
	\begin{bmatrix}
		-\infty & -\infty \\
		0 & -\infty
	\end{bmatrix},\quad
	A^1_\wZ = 
	\begin{bmatrix}
		\alpha_\wZ & -\infty \\
		-\infty & \beta_\wZ
	\end{bmatrix},
\]\[
	B^0_\wZ = 
	\begin{bmatrix}
		\infty & \infty \\
		\infty & \infty
	\end{bmatrix},\quad
	B^1_\wZ = 
	\begin{bmatrix}
		\alpha_\wZ & \infty \\
		\infty & \beta_\wZ
	\end{bmatrix}.
\]
We analyze structural properties of the P-TEGs under loose initial conditions.

\begin{minipage}{\linewidth}
\begin{minipage}[b]{.4\linewidth}
\centering
\begin{tikzpicture}[node distance=.5cm and 1.5cm,>=stealth',bend angle=30,thick]
\footnotesize
\node[transitionV,label=below:{$t_1$}] (t1) {};
\node[place,right=of t1,label=above:{$[0,+\infty]$}] (p12) {};
\node[transitionV,right=of p12,label=below:{$t_2$}] (t2) {};
\node[place,tokens=1,above=of t1,label=above:{$[\alpha_\wZ,\alpha_\wZ]$}] (p11) {};
\node[place,tokens=1,above=of t2,label=above:{$[\beta_\wZ,\beta_\wZ]$}] (p22) {};

\draw (t1) edge[->] (p12);
\draw (p12) edge[->] (t2);
\draw (t1.90-15) edge[bend right,->] (p11);
\draw (p11) edge[bend right,->] (t1.90+15);
\draw (t2.90-15) edge[bend right,->] (p22);
\draw (p22) edge[bend right,->] (t2.90+15);

\end{tikzpicture}
\captionof{figure}{Example of P-TEG.}
	\label{fig:P-TEG_example}
\end{minipage}
\hspace{30pt}
\begin{minipage}[b]{.4\linewidth}
	\centering
		\begin{tabular}{ccc}
			$\wZ$ & $\alpha_\wZ$ & $\beta_\wZ$ \\\hline
			$\wA$ & 2 & 1 \\
			$\wB$ & 1 & 2 \\ 
			$\wC$ & 1 & 1
		\end{tabular}
\captionof{table}{Parameters for the P-TEG of~\Cref{fig:P-TEG_example}.}
\label{tab:P-TEG_parameters}
	\end{minipage}
\end{minipage}

	Since lower and upper bounds for the sojourn times of the two places with an initial token coincide, once the vector of first firing times $x_\wZ(1)$ is chosen (such that the first inequality in~\eqref{eq:dynamics_LDIs} is satisfied for $k=1$, i.e., $x_{\wZ,2}(1)\geq x_{\wZ,1}(1)$), the only infinite trajectory $\{x_\wZ(k)\}_{k\in\nat}$ that is a candidate to be consistent for the P-TEG labeled $\wZ$ is deterministically given by
	\[
		\forall k\in\nat,\quad x_\wZ(k+1) = \begin{bmatrix}\alpha_\wZ + x_{\wZ,1}(k)\\\beta_\wZ + x_{\wZ,2}(k)\end{bmatrix}.
	\]
	However, for the case $\wZ=\wA$ it is easy to see that, for any valid choice of the vector of first firing times, the candidate trajectory $\{x_\wA(k)\}_{k\in\nat}$ is not consistent (as for a sufficiently large $k$, $x_{\wA,2}(k) < x_{\wA,1}(k)$, i.e., the first inequality of~\eqref{eq:dynamics_LDIs} is violated
	).
	For $\wZ=\wB$, candidate trajectories $\{x_\wB(k)\}_{k\in\nat}$, despite being consistent, are not delay-bounded and result in the infinite accumulation of tokens in the place between $t_1$ and $t_2$ for $k\rightarrow\infty$.
	On the other hand, $\{x_\wC(k)\}_{k\in\nat}$ is consistent and delay-bounded (in fact, it is 1-periodic with period $1$).
	Thus we can conclude that the P-TEG labeled $\wA$ is not consistent, the one labeled $\wB$ is consistent but not boundedly consistent, and the one labeled $\wC$ is boundedly consistent.
	Of course, we would have reached the same conclusion regarding bounded consistency by using Theorem~\ref{th:bounded_consistency}.
\end{example}

\section{Switched max-plus linear-dual inequalities}\label{se:switched_max-plus_linear-dual_systems}

This section introduces the class of dynamical systems called switched max-plus linear-dual inequalities (SLDIs), and demonstrates its usefulness by means of simple examples.
In Section~\ref{su:SLDIS_PTEGs}, the relationship between SLDIs and P-TEGs with strict initial conditions is examined.
Methods to efficiently verify the existence of specific trajectories are then presented in Sections~\ref{su:analysis_of_fixed_schedules} and~\ref{su:piecewise_schedules}.

\subsection{Mathematical description}\label{su:general_description}

We start by defining switched LDIs (SLDIs) as the natural extension of LDIs in which matrices $A^0,A^1,B^0,B^1$ may be different for all $k$.
Formally, SLDIs are a 5-tuple $\pazocal{S}=(\Sigma,A^0,A^1,B^0,B^1)$, where $\Sigma=\{\wA_1,\ldots,\wA_m\}$ is a finite alphabet whose symbols are called \textit{modes}, and $A^0,A^1:\Sigma \rightarrow \Rmax^{n\times n}$, $B^0,B^1:\Sigma\rightarrow \Rmin^{n\times n}$ are functions that associate a matrix to each mode of $\Sigma$; for the sake of simplicity, given a mode $\wZ\in\Sigma$, we will write $A^0_\wZ,A^1_\wZ,B^0_\wZ,B^1_\wZ$ in place of $A^0(\wZ),A^1(\wZ),B^0(\wZ),B^1(\wZ)$, respectively.
We denote by $\Sigma^*$ and $\Sigma^\omega$ the sets of finite and infinite concatenations of modes from $\Sigma$, respectively.
A \textit{schedule} $w$ is an element of $\Sigma^* \cup \Sigma^\omega$, i.e., it is either a finite or an infinite sequence of modes $w = w_1w_2\ldots w_{K}$ with $w_k\in\Sigma$ for all $k\in\dint{1,K}$, where $K\in\nat\cup\{+\infty\}$ denotes the length of schedule $w$.

The dynamics of SLDIs $\pazocal{S}$ under schedule $w\in\Sigma^*\cup\Sigma^\omega$ is expressed by the following system of inequalities:
\begin{equation}\label{eq:dynamics}
	\begin{array}{rrcl}
		\mbox{for all } k\in\dint{1,K},&
		A^0_{w_k}\otimes x(k) \preceq & x(k) & \preceq B^0_{w_k}\stimes x(k),\\
		\mbox{for all } k\in\dint{1,K-1},&
		\quad\quad A^1_{w_k}\otimes x(k) \preceq & x(k+1) & \preceq B^1_{w_k}\stimes x(k),
	\end{array}
\end{equation}
where function $x:\dint{1,K}\rightarrow \R^n$ is called \textit{dater} of $\pazocal{S}$ associated with schedule $w$.
Term $x_i(k)$ represents the occurrence time of event $i$ associated with mode\footnote{Strictly speaking, $x_i$ is the function that associates to sequence $w_1w_2\cdots w_k$ the time of the $k$\textsuperscript{th} occurrence of event $i$ (in mode $w_k$). For notational simplicity, however, we prefer to write $x(k)$ in place of $x(w_1w_2\cdots w_k)$, relying on the fact that the meaning is clear from the context.} $w_{k}$. 
Similar to P-TEGs, it is natural to assume the following non-decreasingness condition for the dater of SLDIs: for all $k,h\in\dint{1,K}$, $h>k$, such that $w_k = w_h$, $x(h) \succeq x(k)$.
The implication is that events occurring during a later execution of a mode cannot occur before events that took place during an earlier execution of that mode.
Note that this does not imply that $x(k)$ is non-decreasing over $k$\zor{; this stronger condition would indeed unnecessarily limit the modeling expressiveness of SLDIs, as illustrated by the dater trajectory in Example~\ref{ex:starving_philosophers} at page~\pageref{ex:starving_philosophers}}.

For convenience, given a finite sequence of modes $v=v_1v_2\ldots v_V\in\Sigma^*$ of length $V\in\nat$ and a number $K\in\nat$, in the remainder of the paper we will denote by $v^K\in\Sigma^*$ the sequence of length $V \cdot K$ formed by concatenating sequence $v$ with itself $K$ times, i.e.,
\[
	v^1 = v,\quad \forall K\geq 2,\quad v^{K} = vv^{K-1};
\]
congruently, if $K = +\infty$, $v^K\in\Sigma^\omega$ denotes an infinite concatenation of sequence $v$.

%

We now show possible applications of SLDIs with two simple examples.

\begin{example}[Heat treatment line, cont.]\label{ex:heat_treatment_AB}
Consider again the heat treatment line of Example~\ref{ex:heat_treatment_loose}.
Now, suppose that two types of parts can be processed in the system: part $\wA$ and part $\wB$; in this example, a schedule $w\in\Sigma^*\cup\Sigma^\omega$ represents the entrance order of parts in the heat treatment line.
As illustrated in Figure~\ref{fi:furnace_AB}, the two parts require different heating times; pieces of type $\wA$ must be heated in the furnace for a time between 2 and 3 time units (as in Example~\ref{ex:heat_treatment_loose}), whereas pieces of type $\wB$ between 3 and 4 time units.
Moreover, the processing rate requirement changes depending on the type $w_k\in\{\wA,\wB\}$ of the $k$\textsuperscript{th} part entering the plant: the $(k+1)$\textsuperscript{st} part must be unloaded from the autonomous guided vehicle at most 4 time units after the $k$\textsuperscript{th} one if $w_k=\wA$, and at most 5 time units later if $w_k=\wB$.
\zor{As in Example~\ref{ex:heat_treatment_loose}, we consider loose initial conditions, i.e., we suppose that the processing rate requirement needs to hold for all pieces after the first one.}

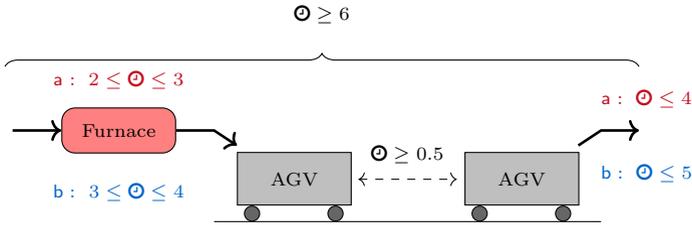
\begin{figure}
	\centering
	\begin{tikzpicture}
\footnotesize 
\node (O) at (0,0) {};
\node[draw,rectangle,minimum height=.6cm,minimum width=1.5cm,fill=red!50!white,rounded corners=5pt,align=center] (F) at (1.5,0) {Furnace};
\draw[line width = 1pt,->] (O) -- (F.west);

\node at ($(F.north) + (0,1.3em)$) {\textcolor{myred}{$\wA:\ 2 \leq\mbox{\faClockO}\leq 3$}};
\node at ($(F.north) + (0,-4em)$) {\textcolor{myblue}{$\wB:\ 3 \leq\mbox{\faClockO}\leq 4$}};

\newcommand{\carH}{.7}
\newcommand{\carL}{1.5}
\newcommand{\carW}{.2}

\coordinate (A0) at ($(F.east) + (.8,-1)$);

\draw[line width = 1pt,->] (F.east) -- ($(A0) + (-.3,1)$) -- ($(A0) + (0,.8)$);

\node[draw, rectangle, minimum height=\carH cm,minimum width=\carL cm,anchor=south west,align=center,fill=gray!50!white] (C1) at (A0) {AGV};
\node[draw, circle, minimum size = \carW cm, inner sep = 0, anchor = north,fill=black!60!white] (W1) at ($(A0) + (.2,0)$) {};
\node[draw, circle, minimum size = \carW cm, inner sep = 0, anchor = north,fill=black!60!white] (W2) at ($(A0) + (\carL-.2,0)$) {};

\coordinate (A1) at ($(A0) + (\carL+1.5,0)$);

\node[draw, rectangle, minimum height=\carH cm,minimum width=\carL cm,anchor=south west,align=center,fill=gray!50!white] (C2) at (A1) {AGV};
\node[draw, circle, minimum size = \carW cm, inner sep = 0, anchor = north,fill=black!60!white] (W3) at ($(A1) + (.2,0)$) {};
\node[draw, circle, minimum size = \carW cm, inner sep = 0, anchor = north,fill=black!60!white] (W4) at ($(A1) + (\carL-.2,0)$) {};

\draw[<->,dashed] ($(A0) + (\carL+.1,\carH/2)$) -- ($(A1) + (-.1,\carH/2)$)node[midway,yshift=1.2em]{$\mbox{\faClockO}\geq 0.5$};

\coordinate (end) at ($(A1) + (\carL+.8,1)$);
\draw[line width = 1pt,->] ($(A1) + (\carL,.8)$) -- ($(A1) + (\carL+.3,1)$) -> (end);

\draw ($(W1.south) - (.5,0)$) -- ($(W4.south) + (.5,0)$);

\draw [decorate,decoration={brace,amplitude=5pt,raise=8ex}]
  ($(O)+(0,0)$) -- ($(end)+(0,0)$) node[midway,yshift=5.5em]{$\mbox{\faClockO}\geq 6$};

\node at ($(end) + (0.1,1.5em)$) {\textcolor{myred}{$\wA:\ \mbox{\faClockO}\leq 4$}};
\node at ($(end) + (0.1,-2em)$) {\textcolor{myblue}{$\wB:\ \mbox{\faClockO}\leq 5$}};
\end{tikzpicture}
	\caption{Illustration of the multi-product heat treatment line of Example~\ref{ex:heat_treatment_AB}.}\label{fi:furnace_AB}
\end{figure}
\begin{figure}
	\centering
	\begin{tikzpicture}[node distance=.5cm and 1cm,>=stealth',bend angle=45,thick]
\tikzstyle{place}=[circle,thick,draw=black,minimum size=6mm]
\tikzstyle{transitionV}=[rectangle,thick,fill=black,minimum height=8mm,inner xsep=1pt]

\footnotesize

\node [transitionV,label=below:$t_1$] (x1) {};
\node [place,tokens=0,label=below:{$[3,4]$}] (p21) [right= of x1] {};
\node [place,tokens=1,label=above:{$[0,+\infty]$}] (p12) [above= of p21] {};
\node [transitionV,label=below:$t_2$] (x2) [right=of p21] {};
\node (p22) [below= of x2] {};
\node [place,tokens=0,label=below:{$[0.5,+\infty]$}] (p32) [right= of x2] {};
\node [place,tokens=1,label=above:{$[0.5,+\infty]$}] (p23) [above= of p32] {};
\node [transitionV,label=below:$t_3$] (x3) [right=of p32] {};
\node [place,tokens=1,label=below:{$[0,5]$}] (p33) [right= of x3] {};
\node [place,tokens=0,label=below:{$[6,+\infty]$}] (p31) [below= of p22] {};

\draw (x1) edge[->] (p21);
\draw (p21) edge[->] (x2);
\draw (x2) edge[->] (p32);
\draw (p32) edge[->] (x3);
\draw (x1) edge[bend right=30,->] (p31);
\draw (p31) edge[bend right=30,->] (x3);
\draw (x2) edge[bend right=30,->] (p12);
\draw (p12) edge[bend right=30,->] (x1);
\draw (x3) edge[bend right=30,->] (p23);
\draw (p23) edge[bend right=30,->] (x2);
\draw (x3) edge[bend right=30,->] (p33);
\draw (p33) edge[bend right=30,->] (x3);
\end{tikzpicture}
	\caption{P-TEG representing the heat treatment line if only parts of type $\wB$ were to be processed. \zor{The case where only parts of type $\wA$ are to be processed is shown in Figure~\ref{fi:TEG_furnace}.}}\label{fi:TEG_furnace_B}
\end{figure}
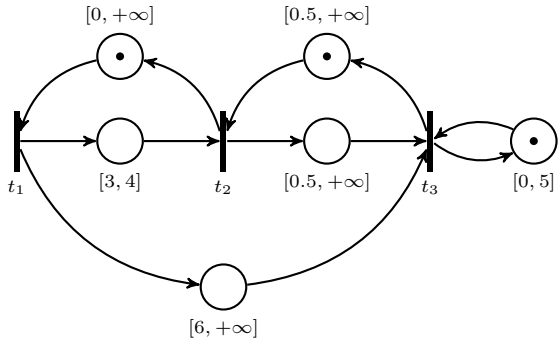

This system can be modeled by SLDIs $\pazocal{S} = (\Sigma,A^0,A^1,B^0,B^1)$, where $\Sigma = \{\wA,\wB\}$,
\[
	A^0_\wA = \begin{bmatrix}
		-\infty & -\infty & -\infty \\
		2 & -\infty & -\infty \\
		6 & 0.5 & -\infty
	\end{bmatrix},\quad
	A^0_\wB = \begin{bmatrix}
		-\infty & -\infty & -\infty \\
		3 & -\infty & -\infty \\
		6 & 0.5 & -\infty
	\end{bmatrix},\quad
	A^1_\wA = A^1_\wB =  \begin{bmatrix}
		-\infty & 0 & -\infty \\
		-\infty & -\infty & 0.5 \\
		-\infty & -\infty & 0
	\end{bmatrix},
\]
\[
	B^0_\wA = \begin{bmatrix}
		+\infty & +\infty & +\infty \\
		3 & +\infty & +\infty \\
		+\infty & +\infty & +\infty
	\end{bmatrix},\quad
	B^0_\wB = \begin{bmatrix}
		+\infty & +\infty & +\infty \\
		4 & +\infty & +\infty \\
		+\infty & +\infty & +\infty
	\end{bmatrix},
\]
\[
	B^1_\wA = \begin{bmatrix}
		+\infty & +\infty & +\infty \\
		+\infty & +\infty & +\infty \\
		+\infty & +\infty & 4
	\end{bmatrix},\quad
	B^1_\wB = \begin{bmatrix}
		+\infty & +\infty & +\infty \\
		+\infty & +\infty & +\infty \\
		+\infty & +\infty & 5
	\end{bmatrix}.
\]
Clearly, if $w = \wA\wA\wA\ldots = \wA^K$ is a concatenation of only mode $\wA$, the dynamics of the system can be described by the P-TEG of Figure~\ref{fi:TEG_furnace}; similarly, if $w = \wB^K$, then the SLDIs are equivalent to the dynamics of the P-TEG of Figure~\ref{fi:TEG_furnace_B}.

In this simple example, the following trajectory satisfies all the time-window constraints, for any schedule $w\in\{\wA,\wB\}^*\cup\{\wA,\wB\}^\omega$:
\[
x(1) = 
\begin{bmatrix}
	0\\3\\6
\end{bmatrix},\quad \forall k\geq 1,\ x(k+1) = \begin{dcases}
	3.5 x(k) & \mbox{if } w_k = \wA,\\
	4 x(k) & \mbox{if } w_k = \wB.
\end{dcases}
\]
The example shows that SLDIs are capable of modeling flow shops with time-window constraints, i.e., manufacturing systems where different jobs (in this case, parts) are processed in each machine (the furnace and the autonomous guided vehicle) of the system in the same order.
\end{example}

\begin{example}[Starving philosophers problem]\label{ex:starving_philosophers}
We present a variant of the famous dining philosophers problem, which we call the \textit{starving philosophers problem}.
There are $p\in\nat$ philosophers sitting at a table eating spaghetti; on the table, there are $p$ chopsticks, and each philosopher $i\in\dint{1,p-1}$ needs both the $i$\textsuperscript{th} and the $(i+1)$\textsuperscript{st} chopstick to start eating, whereas the $p$\textsuperscript{th} philosopher needs the $p$\textsuperscript{th} and the $1$\textsuperscript{st} chopstick.
The $i$\textsuperscript{th} philosopher takes $c_{ij}$ time units to grab the $j$\textsuperscript{th} chopstick and $e_i$ time units to eat; philosophers are allowed to grab two chopsticks at the same time and are not forced to stop eating after $e_i$ time units.
After eating, a philosopher instantaneously puts the chopsticks on the table, so that they can be used by other philosophers.
In our version of the problem, dining can take a "macabre" turn: if the $i$\textsuperscript{th} philosopher does not eat for more than $s_i$ time units, s/he will starve.
The objective of the problem is to find a dining order such that no philosopher starves.

The problem is a metaphor for an issue encountered in concurrent programming, namely, resource starvation.
Consider $p$ critical processes (the philosophers) that need to access some shared resources (the chopsticks); for safety reasons, it might be desirable to prevent some processes from not receiving the requested resources for too long.
Thus, a scheduling plan should guarantee safe operation by granting each process the permission to access the resources at the right time.

Here we suppose that there are $p = 4$ philosophers at the table, and that the following periodic dining order is imposed: initially, \zor{the second and the fourth philosophers eat (they can do so concurrently, as they do not need to share chopsticks); after that, the first philosopher eats once while, in the meantime, the third philosopher eats twice in a row; finally, the eating order repeats from the beginning.}
The order in which philosophers eat before repeating the periodic sequence is referred to as the dining cycle.

Since the considered dining order is periodic, it is possible to describe all trajectories that are valid for the system by means of a P-TEG (with time tags, if we suppose that the $i$\textsuperscript{th} philosopher should start eating for the first time before time $t_0+s_i$ for all $i\in\dint{1,p}$).
The P-TEG for this example is shown in Figure~\ref{fi:P-TEG_philosophers}, where a firing of transitions $t_{s,i}$ and $t_{f,i}$ represents that the $i$\textsuperscript{th} philosopher has started and finished eating for the first time in a dining cycle, and a firing of transitions $t_{s,3}'$ and $t_{f,3}'$ indicate that the \zor{third} philosopher has started and finished eating for the second time in a dining cycle, respectively.
Note that the dimension of the dater function for this problem increases not only with the number of philosophers, but also with the amount of times a philosopher eats in a dining cycle; furthermore, observe that P-TEGs can represent infinite eating orders only if they are periodic.

The same system can be represented more compactly by SLDIs.
Let $\Sigma = \{\winit,\wP_1,\wP_2,\wP_3,\wP_4\}$ be the alphabet associated with the SLDIs, where $\winit$ is the auxiliary initial mode, which will be used to impose strict initial conditions on the system (strict initial conditions are analyzed in more depth in Section~\ref{su:SLDIS_PTEGs}), and $\wP_i$ corresponds to the $i$\textsuperscript{th} philosopher. 
A meaningful schedule for this system is any sequence $w\in\Sigma^{K}$ such that $w_1 = \winit$ and, for all $k\in\dint{2,K}$, $w_k = \wP_{i_k}$ for some $i_k\in\dint{1,p}$.
Whereas the first mode $w_1 = \winit$ does not have physical interpretation \zor{besides mathematically characterizing} the initial conditions for the system, for all $k\in\dint{2,K}$, $w_k$ describes the eating order of philosophers.
The interpretation of schedule $w$ is as follows; consider $k\in\dint{2,K}$, $w_k = \wP_i$ and $w_{k+1} = \wP_j$:
\begin{itemize}
	\item if the $i$\textsuperscript{th} and $j$\textsuperscript{th} philosophers require access to the same chopstick, then philosopher $i$ will eat once before philosopher $j$;
	\item otherwise, the $i$\textsuperscript{th} and $j$\textsuperscript{th} philosophers will eat independently of each other, i.e., philosopher $i$ will start (and finish) eating either before or after or at the same time as philosopher $j$.
	In this case schedules $u w_k w_{k+1} v$ and $u w_{k+1} w_k v$ are representative of the same behavior of the system\footnote{In concurrent systems theory, one would say that $u w_k w_{k+1} v$ and $u w_{k+1} w_k v$ represent the same \textit{trace}~\cite{mazurkiewicz1987trace}.}, for $u\in\Sigma^*$ and $v\in\Sigma^*\cup\Sigma^\omega$.
\end{itemize}
A possible schedule corresponding to the chosen dining order is then given by
\[
	w = \winit \zor{(\wP_2\wP_4\wP_1\wP_3\wP_3)^{\frac{K-1}{5}}}.
\]
We will design $A^0,A^1,B^0,B^1$ such that any dater function $x(k)\in\R^{p+1} = \R^5$ satisfying~\eqref{eq:dynamics} assumes the following interpretation, from which the evolution of the system can be obtained: for all $k\in\dint{2,K}$, if $w_k = \wP_i$, then $x_i(k)$ and $x_5(k)$ represent, respectively, the time at which the $i$\textsuperscript{th} philosopher starts and finishes eating; therefore, assuming $w_{k+1} = \wP_j$, if both the $i$\textsuperscript{th} and the $j$\textsuperscript{th} philosophers require access to the $h$\textsuperscript{th} chopstick, then $x_i(k)+e_i+c_{jh} \leq x_j(k+1)$ (sequential behavior), otherwise, if they do not need to share chopsticks, $x_i(k)$ could also be greater than $x_j(k+1)$ (concurrent behavior). 
For all philosophers $i\in\dint{1,4}$ such that $w_k\neq \wP_i$, $x_i(k)$ is an auxiliary variable that stores the time at which the $i$\textsuperscript{th} philosopher will eat next.
For all $i\in\dint{1,5}$, element $x_i(1)$ will be assigned to the initial time $t_0$, in order to manage the initial conditions (for more details, see Section~\ref{su:SLDIS_PTEGs}).

\newlength{\yuckkyhack}
\settowidth{\yuckkyhack}{$\displaystyle x_4(k+1)$}

For example, consider a value of $k\in\dint{2,K-1}$ such that $w_k = \wP_1$; with the above interpretation, in order to represent the dynamics of the system, the dater function must satisfy
\begin{subequations}
	\begin{alignat}{4}
	x_1(k) + e_1 \leq &\ & \makebox[\yuckkyhack]{$x_5(k)$}  &\ \label{eq:subeq1}\\
	x_5(k) + \max(c_{11},c_{12}) \leq & & x_1(k+1)  & \leq x_5(k) + s_1\label{eq:subeq2}\\
	x_5(k) + c_{22}  \leq  &\ &x_2(k+1)  &\ \label{eq:subeq3}\\
	x_5(k) + c_{41}  \leq &\ & x_4(k+1)  &\ \label{eq:subeq4}\\
	x_2(k)  \leq  &\ &x_2(k+1)  & \leq x_2(k)\label{eq:subeq6}\\
	x_3(k)  \leq  &\ &x_3(k+1)  & \leq x_3(k)\label{eq:subeq7}\\
	x_4(k)  \leq  &\ &x_4(k+1)  & \leq x_4(k),\label{eq:subeq8}
	\end{alignat}
\end{subequations}
\zor{where~\eqref{eq:subeq1} imposes the time between starting and finishing eating for the 1\textsuperscript{st} philosopher, \eqref{eq:subeq2} is used to force the 1\textsuperscript{st} philosopher to start eating again only after s/he has grabbed once more the 1\textsuperscript{st} and 2\textsuperscript{nd} chopsticks but before starving, \eqref{eq:subeq3} and \eqref{eq:subeq4} impose the 2\textsuperscript{nd} and 4\textsuperscript{th} philosophers to start eating only after grabbing the chopsticks left by the 1\textsuperscript{st} philosopher, 
and (\ref{eq:subeq6}--\ref{eq:subeq8}) are auxiliary constraints to impose that $x_i(k+1) = x_i(k)$ for all philosophers $i\in\dint{2,4}$.}
Finally, for the initial condition, we want to impose that
\[
	\begin{array}{rcl}
	\max(c_{11},c_{12}) + t_0 \leq &x_1(2) & \leq s_1 + t_0\\
	\max(c_{22},c_{23}) + t_0 \leq &x_2(2) & \leq s_2 + t_0\\
	\max(c_{33},c_{34}) + t_0 \leq &x_3(2) & \leq s_3 + t_0\\
	\max(c_{44},c_{41}) + t_0 \leq &x_4(2) & \leq s_4 + t_0\\
	\end{array}
\]
to make sure that philosophers start eating for the first time after grabbing the chopsticks and before starving.

In order to get the above interpretation for the dater function, the matrices for the initial mode $\winit$ can be defined as 
\[
	(A^0_\winit)_{ij} = (B^0_\winit)_{ij} = 0\ \forall i,j, 
\]
\[
	A^1_\winit = 
	\begin{bsmallmatrix}
		-\infty & c_{12} & -\infty & c_{11} & -\infty\\
		c_{22} & -\infty & c_{23} & -\infty & -\infty\\
		-\infty & c_{33} & -\infty & c_{34} & -\infty\\
		c_{41} & -\infty & c_{44} & -\infty & -\infty\\
		-\infty & -\infty & -\infty & -\infty & -\infty
	\end{bsmallmatrix},\quad 
	B^1_\winit = 
	\begin{bsmallmatrix}
		+\infty & +\infty & +\infty & +\infty & s_1 \\
		+\infty & +\infty & +\infty & +\infty & s_2 \\
		+\infty & +\infty & +\infty & +\infty & s_3 \\
		+\infty & +\infty & +\infty & +\infty & s_4 \\
		+\infty & +\infty & +\infty & +\infty & +\infty
	\end{bsmallmatrix}.
\]
For mode $\wP_1$ we define
\[
	A^0_{\wP_1} = 
	\begin{bsmallmatrix}
		-\infty & -\infty & -\infty & -\infty & -\infty \\
		-\infty & -\infty & -\infty & -\infty & -\infty \\
		-\infty & -\infty & -\infty & -\infty & -\infty \\
		-\infty & -\infty & -\infty & -\infty & -\infty \\
		e_1 & -\infty & -\infty & -\infty & -\infty
	\end{bsmallmatrix},\quad
	A^1_{\wP_1} = 
	\begin{bsmallmatrix}
		0 & -\infty & -\infty & -\infty & \max(c_{11},c_{12}) \\
		-\infty & 0 & -\infty & -\infty & c_{22} \\
		-\infty & -\infty & 0 & -\infty & -\infty \\
		-\infty & -\infty & -\infty & 0 & c_{41} \\
		-\infty & -\infty & -\infty & -\infty & -\infty
	\end{bsmallmatrix}
\]
\[
	B^0_{\wP_1} = \pazocal{T}, \quad
	B^1_{\wP_1} = 
	\begin{bsmallmatrix}
		+\infty & +\infty & +\infty & +\infty & s_1 \\ 
		+\infty & 0 & +\infty & +\infty & +\infty \\ 
		+\infty & +\infty & 0 & +\infty & +\infty \\ 
		+\infty & +\infty & +\infty & 0 & +\infty \\ 
		+\infty & +\infty & +\infty & +\infty & +\infty
	\end{bsmallmatrix};
\]
for the sake of brevity, we leave it to the reader to derive the matrices for modes $\wP_2,\wP_3,\wP_4$.

\begin{figure}
	\centering
	\resizebox{1\textwidth}{!}{
	\begin{tikzpicture}[node distance=2.6cm and 1.5cm,>=stealth',bend angle=45,thick,on grid]
\tikzstyle{place}=[circle,thick,draw=black,minimum size=5mm]
\tikzstyle{transitionV}=[rectangle,thick,fill=black,minimum height=8mm,inner xsep=1pt]

\small

\node[transitionV,label=above:{$t_{s,4}$}] (ts4) {};
\node[place,right=of ts4,label=below:{$[e_4,\infty]$}] (pe4) {};
\node[place,above=1cm of pe4,label=below:{$[0,s_4]$},tokens=1,myblue] (ps4) {};
\node[transitionV,right=of pe4,label=below:{$t_{f,4}$}] (tf4) {};

\node[place,myred,tokens=0,right=of tf4,label=below:{$[c_{34},\infty]$}] (pc34) {};
\node[place,myred,tokens=0,above=1cm of pc34,label=below:{$[c_{11},\infty]$}] (pc11) {};

\node[transitionV,right=of pc34,label=below:{$t_{s,3}$}] (ts3) {};
\node[place,right=of ts3,label=below:{$[e_3,\infty]$}] (pe3) {};
\node[transitionV,right=of pe3,label=above:{$t_{f,3}$}] (tf3) {};
\node[place,myred,tokens=0,right=of tf3,label=below:{$[\max(c_{33},c_{34}),s_3]$}] (pc3) {};
\node[circle,minimum size=4.4mm,draw,myblue] at (pc3) {};
\node[place,above=1cm of pc3,label=below:{$[0,s_3]$},tokens=1,myblue] (ps3) {};
\node[transitionV,right=of pc3,label=above:{$t_{s,3}'$}] (ts3p) {};
\node[place,right=of ts3p,label=below:{$[e_3,\infty]$}] (pe3p) {};
\node[transitionV,right=of pe3p,label=above:{$t_{f,3}'$}] (tf3p) {};

\node[transitionV,above=of ts4,label=below:{$t_{s,2}$}] (ts2) {};
\node[place,right=of ts2,label=above:{$[e_2,\infty]$}] (pe2) {};
\node[place,below=1cm of pe2,label=above:{$[0,s_2]$},tokens=1,myblue] (ps2) {};
\node[transitionV,right=of pe2,label=above:{$t_{f,2}$}] (tf2) {};

\node[place,myred,tokens=0,right=of tf2,label=above:{$[c_{12},\infty]$}] (pc12) {};
\node[place,myred,tokens=0,below=1cm of pc12,label=above:{$[c_{33},\infty]$}] (pc33) {};

\node[transitionV,label=above:{$t_{s,1}$},above=of ts3] (ts1) {};
\node[place,above=of pc3,label=above:{$[e_1,\infty]$}] (pe1) {};
\node[place,below=1cm of pe1,label=above:{$[0,s_1]$},tokens=1,myblue] (ps1) {};
\node[transitionV,above=of tf3p,label=below:{$t_{f,1}$}] (tf1) {};

\node[place,above=3.9cm of pe3,tokens=1,myred,label=below:{$[c_{22},\infty]$}] (pc22) {};
\node[place,above=1cm of pc22,tokens=1,myred,label=below:{$[c_{41},\infty]$}] (pc41) {};
\node[place,below=1.3cm of pe3,tokens=1,myred,label=below:{$[c_{44},\infty]$}] (pc44) {};
\node[place,below=1cm of pc44,tokens=1,myred,label=below:{$[c_{23},\infty]$}] (pc23) {};

\useasboundingbox ($(ts4)+(-1,-3)$) rectangle ($(tf1)+(1,3)$);

\draw (ts4) edge[-stealth'] (pe4);
\draw (pe4) edge[-stealth'] (tf4);
\draw (tf4) edge[-stealth'] (pc34);
\draw (pc34) edge[-stealth'] (ts3);
\draw (ts3) edge[-stealth'] (pe3);
\draw (pe3) edge[-stealth'] (tf3);
\draw (tf3) edge[-stealth'] (pc3);
\draw (pc3) edge[-stealth'] (ts3p);
\draw (ts3p) edge[-stealth'] (pe3p);
\draw (pe3p) edge[-stealth'] (tf3p);

\draw (tf4.90+10) edge[-stealth',bend right=10] (ps4);
\draw (ps4) edge[-stealth',bend right=10] (ts4.90-10);
\draw (tf3p.90+10) edge[-stealth',bend right=10] (ps3);
\draw (ps3) edge[-stealth',bend right=10] (ts3.90-10);
\draw (tf4.90-10) edge[-stealth',bend left=10] (pc11);
\draw (pc11) edge[-stealth',bend right=10] (ts1.-90-5);

\draw (ts2) edge[-stealth'] (pe2);
\draw (pe2) edge[-stealth'] (tf2);
\draw (tf2) edge[-stealth'] (pc12);
\draw (pc12) edge[-stealth'] (ts1);
\draw (ts1) edge[-stealth'] (pe1);
\draw (pe1) edge[-stealth'] (tf1);

\draw (tf2.-90-10) edge[-stealth',bend left=10] (ps2);
\draw (ps2) edge[-stealth',bend left=10] (ts2.-90+10);
\draw (tf2.-90+10) edge[-stealth',bend right=10] (pc33);
\draw (pc33) edge[-stealth',bend left=10] (ts3.90+5);
\draw (tf1.-90-10) edge[-stealth',bend left=10] (ps1);
\draw (ps1) edge[-stealth',bend left=10] (ts1.-90+10);

\draw [-stealth'] (tf1.90-15) .. controls ($(tf1.90-15)+(.7,0)$) and ($(tf1.90-15)+(.7,1)$) .. (pc22);
\draw [-stealth'] (pc22) .. controls ($(ts2.90+15)+(-.7,1)$) and ($(ts2.90+15)+(-.7,0)$) .. (ts2.90+15);

\draw [-stealth'] (tf1.-90+15) .. controls ($(tf1.-90+15)+(1.3,0)$) and ($(tf1.-90+15)+(1.3,2.4)$) .. (pc41);
\draw [-stealth'] (pc41.180) .. controls ($(ts4.90+15)+(-3.3,4.8)$) and ($(ts4.90+15)+(-1.3,0)$) .. (ts4.90+15);

\draw [-stealth'] (tf3p.-90+15) .. controls ($(tf3p.-90+15)+(.7,0)$) and ($(tf3p.-90+15)+(.7,-1)$) .. (pc44);
\draw [-stealth'] (pc44) .. controls ($(ts4.-90-15)+(-.7,-1)$) and ($(ts4.-90-15)+(-.7,0)$) .. (ts4.-90-15);

\draw [-stealth'] (tf3p.90-15) .. controls ($(tf3p.90-15)+(1.3,0)$) and ($(tf3p.90-15)+(1.3,-2.4)$) .. (pc23);
\draw [-stealth'] (pc23.180) .. controls ($(ts2.-90-15)+(-3.3,-4.8)$) and ($(ts2.-90-15)+(-1.3,0)$) .. (ts2.-90-15);

\end{tikzpicture}}
	\caption{P-TEG for the starving philosophers problem.
	Tokens inside a place colored \textbf{black}, \textcolor{myblue}{\textbf{blue}}, and \textcolor{myred}{\textbf{red}} represent, respectively, a philosopher eating, a philosopher waiting to eat, and a chopstick being grabbed.
	The time tag associated to each place with initial token is $0$.
	\zor{Note that a token in the place with upstream transition $t_{f,3}$ and downstream transition $t_{s,3}'$ indicates that the third philosopher is both grabbing the third and the fourth chopstick and waiting to eat, hence the double color of this place.}}
	\label{fi:P-TEG_philosophers}
\end{figure}
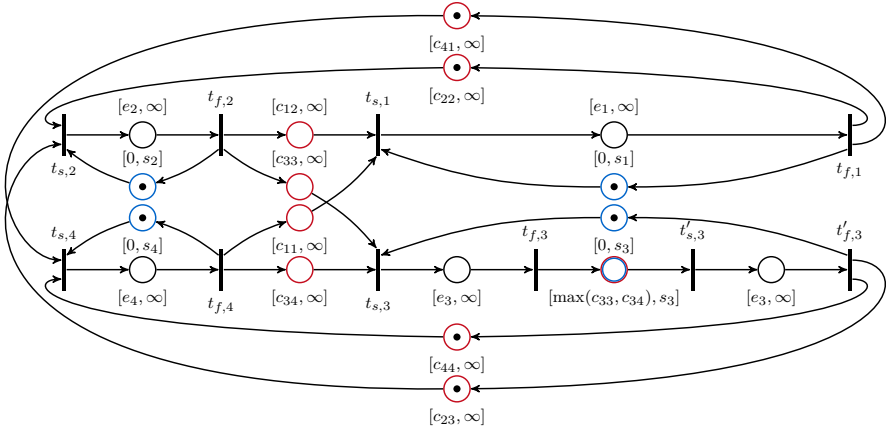

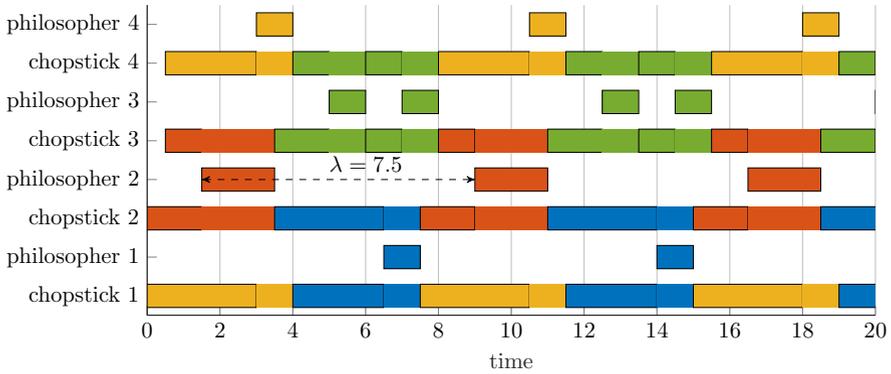
\begin{figure}
	\centering
	\resizebox{1\textwidth}{!}{
%
%
\definecolor{mycolor1}{rgb}{0.85000,0.32500,0.09800}%
\definecolor{mycolor2}{rgb}{0.92900,0.69400,0.12500}%
\definecolor{mycolor3}{rgb}{0.00000,0.44700,0.74100}%
\definecolor{mycolor4}{rgb}{0.46600,0.67400,0.18800}%
\begin{tikzpicture}

\begin{axis}[%
width=4.44in,
height=1.912in,
at={(0.839in,0.409in)},
scale only axis,
xmin=0,
xmax=20,
xlabel style={font=\color{white!15!black}},
xlabel={time},
ymin=0,
ymax=8,
ytick={0.5,1.5,2.5,3.5,4.5,5.5,6.5,7.5,8.5},
yticklabels={{chopstick 1},{philosopher 1},{chopstick 2},{philosopher 2},{chopstick 3},{philosopher 3},{chopstick 4},{philosopher 4},{}},
axis background/.style={fill=white},
axis x line*=bottom,
axis y line*=left,
legend style={legend cell align=left, align=left, draw=white!15!black},
xmajorgrids=true
]
\draw[fill=mycolor1, draw=black] (axis cs:0,2.2) rectangle (axis cs:1.5,2.8);
\draw[opacity=.3,draw=none, fill=mycolor1] (axis cs:1.5,2.2) rectangle (axis cs:3.5,2.8);
\draw[fill=mycolor1, draw=black] (axis cs:1.5,3.2) rectangle (axis cs:3.5,3.8);
\draw[fill=mycolor1, draw=black] (axis cs:0.5,4.2) rectangle (axis cs:1.5,4.8);
\draw[opacity=.3,draw=none, fill=mycolor1] (axis cs:1.5,4.2) rectangle (axis cs:3.5,4.8);
\draw[fill=mycolor2, draw=black] (axis cs:0.5,6.2) rectangle (axis cs:3,6.8);
\draw[opacity=.3,draw=none, fill=mycolor2] (axis cs:3,6.2) rectangle (axis cs:4,6.8);
\draw[fill=mycolor2, draw=black] (axis cs:3,7.2) rectangle (axis cs:4,7.8);
\draw[fill=mycolor2, draw=black] (axis cs:0,0.2) rectangle (axis cs:3,0.8);
\draw[opacity=.3,draw=none, fill=mycolor2] (axis cs:3,0.2) rectangle (axis cs:4,0.8);
\draw[fill=mycolor3, draw=black] (axis cs:4,0.2) rectangle (axis cs:6.5,0.8);
\draw[opacity=.3,draw=none, fill=mycolor3] (axis cs:6.5,0.2) rectangle (axis cs:7.5,0.8);
\draw[fill=mycolor3, draw=black] (axis cs:6.5,1.2) rectangle (axis cs:7.5,1.8);
\draw[fill=mycolor3, draw=black] (axis cs:3.5,2.2) rectangle (axis cs:6.5,2.8);
\draw[opacity=.3,draw=none, fill=mycolor3] (axis cs:6.5,2.2) rectangle (axis cs:7.5,2.8);
\draw[fill=mycolor4, draw=black] (axis cs:3.5,4.2) rectangle (axis cs:5,4.8);
\draw[opacity=.3,draw=none, fill=mycolor4] (axis cs:5,4.2) rectangle (axis cs:6,4.8);
\draw[fill=mycolor4, draw=black] (axis cs:5,5.2) rectangle (axis cs:6,5.8);
\draw[fill=mycolor4, draw=black] (axis cs:4,6.2) rectangle (axis cs:5,6.8);
\draw[opacity=.3,draw=none, fill=mycolor4] (axis cs:5,6.2) rectangle (axis cs:6,6.8);
\draw[fill=mycolor4, draw=black] (axis cs:6,4.2) rectangle (axis cs:7,4.8);
\draw[opacity=.3,draw=none, fill=mycolor4] (axis cs:7,4.2) rectangle (axis cs:8,4.8);
\draw[fill=mycolor4, draw=black] (axis cs:7,5.2) rectangle (axis cs:8,5.8);
\draw[fill=mycolor4, draw=black] (axis cs:6,6.2) rectangle (axis cs:7,6.8);
\draw[opacity=.3,draw=none, fill=mycolor4] (axis cs:7,6.2) rectangle (axis cs:8,6.8);
\draw[fill=mycolor1, draw=black] (axis cs:7.5,2.2) rectangle (axis cs:9,2.8);
\draw[opacity=.3,draw=none, fill=mycolor1] (axis cs:9,2.2) rectangle (axis cs:11,2.8);
\draw[fill=mycolor1, draw=black] (axis cs:9,3.2) rectangle (axis cs:11,3.8);
\draw[fill=mycolor1, draw=black] (axis cs:8,4.2) rectangle (axis cs:9,4.8);
\draw[opacity=.3,draw=none, fill=mycolor1] (axis cs:9,4.2) rectangle (axis cs:11,4.8);
\draw[fill=mycolor2, draw=black] (axis cs:8,6.2) rectangle (axis cs:10.5,6.8);
\draw[opacity=.3,draw=none, fill=mycolor2] (axis cs:10.5,6.2) rectangle (axis cs:11.5,6.8);
\draw[fill=mycolor2, draw=black] (axis cs:10.5,7.2) rectangle (axis cs:11.5,7.8);
\draw[fill=mycolor2, draw=black] (axis cs:7.5,0.2) rectangle (axis cs:10.5,0.8);
\draw[opacity=.3,draw=none, fill=mycolor2] (axis cs:10.5,0.2) rectangle (axis cs:11.5,0.8);
\draw[fill=mycolor3, draw=black] (axis cs:11.5,0.2) rectangle (axis cs:14,0.8);
\draw[opacity=.3,draw=none, fill=mycolor3] (axis cs:14,0.2) rectangle (axis cs:15,0.8);
\draw[fill=mycolor3, draw=black] (axis cs:14,1.2) rectangle (axis cs:15,1.8);
\draw[fill=mycolor3, draw=black] (axis cs:11,2.2) rectangle (axis cs:14,2.8);
\draw[opacity=.3,draw=none, fill=mycolor3] (axis cs:14,2.2) rectangle (axis cs:15,2.8);
\draw[fill=mycolor4, draw=black] (axis cs:11,4.2) rectangle (axis cs:12.5,4.8);
\draw[opacity=.3,draw=none, fill=mycolor4] (axis cs:12.5,4.2) rectangle (axis cs:13.5,4.8);
\draw[fill=mycolor4, draw=black] (axis cs:12.5,5.2) rectangle (axis cs:13.5,5.8);
\draw[fill=mycolor4, draw=black] (axis cs:11.5,6.2) rectangle (axis cs:12.5,6.8);
\draw[opacity=.3,draw=none, fill=mycolor4] (axis cs:12.5,6.2) rectangle (axis cs:13.5,6.8);
\draw[fill=mycolor4, draw=black] (axis cs:13.5,4.2) rectangle (axis cs:14.5,4.8);
\draw[opacity=.3,draw=none, fill=mycolor4] (axis cs:14.5,4.2) rectangle (axis cs:15.5,4.8);
\draw[fill=mycolor4, draw=black] (axis cs:14.5,5.2) rectangle (axis cs:15.5,5.8);
\draw[fill=mycolor4, draw=black] (axis cs:13.5,6.2) rectangle (axis cs:14.5,6.8);
\draw[opacity=.3,draw=none, fill=mycolor4] (axis cs:14.5,6.2) rectangle (axis cs:15.5,6.8);
\draw[fill=mycolor1, draw=black] (axis cs:15,2.2) rectangle (axis cs:16.5,2.8);
\draw[opacity=.3,draw=none, fill=mycolor1] (axis cs:16.5,2.2) rectangle (axis cs:18.5,2.8);
\draw[fill=mycolor1, draw=black] (axis cs:16.5,3.2) rectangle (axis cs:18.5,3.8);
\draw[fill=mycolor1, draw=black] (axis cs:15.5,4.2) rectangle (axis cs:16.5,4.8);
\draw[opacity=.3,draw=none, fill=mycolor1] (axis cs:16.5,4.2) rectangle (axis cs:18.5,4.8);
\draw[fill=mycolor2, draw=black] (axis cs:15.5,6.2) rectangle (axis cs:18,6.8);
\draw[opacity=.3,draw=none, fill=mycolor2] (axis cs:18,6.2) rectangle (axis cs:19,6.8);
\draw[fill=mycolor2, draw=black] (axis cs:18,7.2) rectangle (axis cs:19,7.8);
\draw[fill=mycolor2, draw=black] (axis cs:15,0.2) rectangle (axis cs:18,0.8);
\draw[opacity=.3,draw=none, fill=mycolor2] (axis cs:18,0.2) rectangle (axis cs:19,0.8);
\draw[fill=mycolor3, draw=black] (axis cs:19,0.2) rectangle (axis cs:21.5,0.8);
\draw[opacity=.3,draw=none, fill=mycolor3] (axis cs:21.5,0.2) rectangle (axis cs:22.5,0.8);
\draw[fill=mycolor3, draw=black] (axis cs:21.5,1.2) rectangle (axis cs:22.5,1.8);
\draw[fill=mycolor3, draw=black] (axis cs:18.5,2.2) rectangle (axis cs:21.5,2.8);
\draw[opacity=.3,draw=none, fill=mycolor3] (axis cs:21.5,2.2) rectangle (axis cs:22.5,2.8);
\draw[fill=mycolor4, draw=black] (axis cs:18.5,4.2) rectangle (axis cs:20,4.8);
\draw[opacity=.3,draw=none, fill=mycolor4] (axis cs:20,4.2) rectangle (axis cs:21,4.8);
\draw[fill=mycolor4, draw=black] (axis cs:20,5.2) rectangle (axis cs:21,5.8);
\draw[fill=mycolor4, draw=black] (axis cs:19,6.2) rectangle (axis cs:20,6.8);
\draw[opacity=.3,draw=none, fill=mycolor4] (axis cs:20,6.2) rectangle (axis cs:21,6.8);
\draw[fill=mycolor4, draw=black] (axis cs:21,4.2) rectangle (axis cs:22,4.8);
\draw[opacity=.3,draw=none, fill=mycolor4] (axis cs:22,4.2) rectangle (axis cs:23,4.8);
\draw[fill=mycolor4, draw=black] (axis cs:22,5.2) rectangle (axis cs:23,5.8);
\draw[fill=mycolor4, draw=black] (axis cs:21,6.2) rectangle (axis cs:22,6.8);
\draw[opacity=.3,draw=none, fill=mycolor4] (axis cs:22,6.2) rectangle (axis cs:23,6.8);

\draw[stealth'-stealth',dashed] (axis cs:1.5,3.5) to node [above,pos=.6] {\normalsize$\lambda = 7.5$} (axis cs:9,3.5);
\end{axis}

\end{tikzpicture}
	\caption{Gantt chart of a possible trajectory for the starving philosophers problem. Opaque bars indicate either that a chopstick is being grabbed or that a philosopher is eating; transparent bars represent chopsticks being used by a philosopher to eat. Different colors correspond to different philosophers. The dashed line indicates the period of the trajectory.}
	\label{fi:gantt_philosophers}
\end{figure}

We consider the following numerical parameters:
\[
	c_{11} = 2,\ c_{12} = 3,\ e_1 = 1,\ s_1 = 10,\ c_{22} = 1,\ c_{23} = 1,\ e_2 = 2,\ s_2 = 10,
\]
\[
	c_{33} = 1,\ c_{34} = 1,\ e_3 = 1,\ s_3 = 15,\ c_{44} = 2,\ c_{41} = 3,\ e_4 = 1,\ s_4 = 12.
\]
The Gantt chart of Figure~\ref{fi:gantt_philosophers} represents a valid trajectory for the P-TEG\footnote{Note that the trajectory is 1-periodic for the P-TEG.} of Figure~\ref{fi:P-TEG_philosophers} and for the SLDIs defined above, supposing that the initial time $t_0$ is equal to $0$.

The first 5 elements of the dater trajectory for the SLDIs are:
\[
	x(1) = \begin{bsmallmatrix}
		0\\0\\0\\0\\0
	\end{bsmallmatrix},\ 
	x(2) = \begin{bsmallmatrix}
		6.5\\1.5\\5\\3\\3.5
	\end{bsmallmatrix},\ 
	x(3) = \begin{bsmallmatrix}
		6.5\\9\\5\\3\\4
	\end{bsmallmatrix},\
	x(4) = \begin{bsmallmatrix}
		6.5\\9\\5\\10.5\\7.5
	\end{bsmallmatrix},\ 
	x(5) = \begin{bsmallmatrix}
		14\\9\\5\\10.5\\6
	\end{bsmallmatrix}.
\]
It is worth noting that, differently from P-TEGs, the dater function of SLDIs does not need to be non-decreasing: for instance, in this example we have
\[
	x_5(4) = 7.5 \geq 6 = x_5(5),
\]
as the time in which the first philosopher ($w_4 = \wP_1$) stops eating for the first time is after the time in which the third philosopher ($w_5 = \wP_3$) stops eating for the first time.

The three main advantages of using SLDIs instead of P-TEGs for this problem are the following:
\begin{enumerate}
	\item higher computational efficiency: the dater function for the SLDIs has smaller dimension compared to the P-TEG.
	As we shall see, this corresponds to lower computational complexity for analyzing trajectories of the system;
	\item lower modeling effort: the P-TEG in Figure~\ref{fi:P-TEG_philosophers} can only represent the dining order specified above; to analyze a different dining order, a new P-TEG needs to be provided.
	On the other hand, for the SLDIs different dining orders simply correspond to different schedules $w$;
	\item larger modeling expressiveness: only SLDIs are able to represent dining orders that are not periodic (with a dater function of finite dimension).
\end{enumerate}
	
\end{example}

When schedule $w$ is fixed, we can extend the definition of some properties of P-TEGs to SLDIs in a natural way.
For instance, if there exists a trajectory of the dater $\{x(k)\}_{k\in\dint{1,K}}$ that satisfies~\eqref{eq:dynamics}, then the trajectory is consistent for the SLDIs under schedule $w$, and we say that $w$ is a consistent schedule for the SLDIs (or that the SLDIs are consistent under schedule $w$).
Note that, different from the simple case of Example~\ref{ex:heat_treatment_AB}, there are SLDIs for which not all schedules admit consistent trajectories.

The definitions of delay-bounded trajectory and bounded consistency are generalized to SLDIs in a similar fashion.
The interpretation of bounded consistency of a schedule $w$ is analogous to the one of P-TEGs; when a process consisting of several tasks (the start and end of which are represented by events) is modeled by SLDIs under a schedule $w$ that is not boundedly consistent, then either the execution of every possible sequence of tasks following $w$ 
will lead to the violation of some time window constraints (if $w$ is not even consistent), or we will certainly observe an infinite accumulation of delay between the start or end of some tasks (if the only consistent trajectories are not delay-bounded).

\subsection{SLDIs and P-TEGs with strict initial conditions}\label{su:SLDIS_PTEGs}

As discussed in Subsection~\ref{su:strict}, the dynamics of P-TEGs with strict initial conditions are not pure LDIs.
In this subsection, we prove that they can be expressed by means of SLDIs under specific types of schedules, with the immediate consequence that any property of SLDIs also holds for P-TEGs with strict initial conditions.

We want to prove that inequalities
\begin{equation}\label{eq:dynamics_PTEGs_strict}
	\begin{array}{rrcl}
		& \underline{\Delta}\otimes t_0\tilde{e} \preceq & x(1) & \preceq \overline{\Delta} \stimes t_0\tilde{e},\\
	\forall k\in\dint{1,K}, &
		A^0\otimes x(k) \preceq & x(k) & \preceq B^0\stimes x(k),\\
	\forall k\in\dint{1,K-1}, &
		\quad\quad A^1\otimes x(k) \preceq & x(k+1) & \preceq B^1\stimes x(k)
	\end{array}
\end{equation}
can be written as SLDIs.
For this aim, let us define an auxiliary variable $x(0)\in\R^n$.
Note that the first inequality of~\eqref{eq:dynamics_PTEGs_strict} is equivalent to 
\[
	\begin{array}{rcl}
		\mathbb{E}\otimes x(0) \preceq & x(0) & \preceq \mathbb{E} \stimes x(0),\\
		\underline{\Delta}\otimes x(0) \preceq & x(1) & \preceq \overline{\Delta} \stimes x(0),
	\end{array}
\]
where $\mathbb{E}_{ij} = 0$ for all $i,j\in\dint{1,n}$.
Indeed, the first of the latter inequalities can be rewritten as
\[
	\forall i\in\dint{1,n},\quad \max(x_1(0),\ldots,x_n(0)) \preceq x_i(0) \preceq \min(x_1(0),\ldots,x_n(0)),
\]
which admits as solution all $x(0)$ that satisfy $x_i(0) = x_j(0)$ for all $i,j\in\dint{1,n}$; therefore, solutions can be parametrized in $t_0\in\R$ as $x(0) = t_0 \tilde{e}$.
Recalling that P-TEGs (as well as SLDIs) are time-invariant systems (see Subsection~\ref{su:strict}), this proves that~\eqref{eq:dynamics_PTEGs_strict} is equivalent to SLDIs $\pazocal{S} = (\{\winit,\wA\},A^0,A^1,B^0,B^1)$ under schedule $w = \winit\wA\wA\wA\wA\ldots = \winit\wA^{K}$ (i.e., mode $\winit$ followed by a sequence of $K$ modes $\wA$), where
\[
	A^0_{\winit} = B^0_{\winit}=\mathbb{E},\quad A^1_{\winit} = \underline{\Delta},\quad B^1_{\winit} = \overline{\Delta},
\] 
\[
	A^0_{\wA} = A^0,\quad A^1_{\wA} = A^1,\quad B^0_{\wA} =B^0, \quad B^1_{\wA} = B^1.
\] 
The following result is an immediate consequence of this fact.

\begin{proposition}\label{pr:SLDI_strict}
The P-TEG characterized by matrices $A^0,A^1,B^0,B^1$ under strict initial conditions determined by matrices $\underline{\Delta}$ and $\overline{\Delta}$ is (boundedly) consistent if and only if schedule $w = \winit\wA^{+\infty}$ is (boundedly) consistent for the SLDIs $\pazocal{S}$ defined as above.
\end{proposition}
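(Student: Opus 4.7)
The proof is essentially assembled from the construction described in the two paragraphs immediately preceding the proposition, so my plan is to make that construction rigorous and then verify that both consistency and bounded consistency carry through the correspondence.

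First, I would establish a bijection between trajectories. Starting from the P-TEG side, given a consistent trajectory $\{x(k)\}_{k\in\dint{1,K}}$ for~\eqref{eq:dynamics_PTEGs_strict}, I extend it by setting $x(0) = t_0 \tilde e$ and verify that $\{x(k)\}_{k\in\dint{0,K}}$ is consistent for $\pazocal{S}$ under $w = \winit \wA^K$. The key observations are: (i) $\mathbb{E}\otimes(t_0\tilde e) = t_0\tilde e = \mathbb{E}\stimes(t_0\tilde e)$, so the mode-$\winit$ self-loop inequality at $k=0$ is satisfied; (ii) by construction $A^1_{\winit} = \underline{\Delta}$ and $B^1_{\winit} = \overline{\Delta}$, so the inequalities linking $x(0)$ to $x(1)$ reduce exactly to $\underline{\Delta}\otimes t_0\tilde e \preceq x(1) \preceq \overline{\Delta}\stimes t_0\tilde e$; (iii) for $k\geq 1$, the matrices $A^0_{\wA},A^1_{\wA},B^0_{\wA},B^1_{\wA}$ coincide with $A^0,A^1,B^0,B^1$, so the remaining inequalities of~\eqref{eq:dynamics} match exactly those of~\eqref{eq:dynamics_PTEGs_strict}.

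Conversely, given a consistent dater $\{x(k)\}_{k\in\dint{0,K}}$ for $\pazocal{S}$ under $w=\winit\wA^K$, the inequality $\mathbb{E}\otimes x(0) \preceq x(0) \preceq \mathbb{E}\stimes x(0)$ rewrites componentwise as
\[
\max_j x_j(0) \leq x_i(0) \leq \min_j x_j(0) \quad \forall i,
\]
forcing $x(0) = t_0\tilde e$ for a unique $t_0\in\R$. Dropping $x(0)$ and using this value of $t_0$ yields a consistent trajectory for~\eqref{eq:dynamics_PTEGs_strict} with initial time $t_0$; by time-invariance of P-TEGs (which was already noted in Section~\ref{su:strict}), this is tantamount to a consistent trajectory of the original P-TEG. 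Taking $K=+\infty$ gives the consistency part.

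For the bounded consistency part, I would observe that the bijection above maps delay-bounded trajectories to delay-bounded trajectories: indeed, $x(0) = t_0\tilde e$ has zero spread between its components, so adding (or removing) it does not change the supremum $\sup_{i,j,k} (x_i(k)-x_j(k))$ over the trajectory, except possibly for $k=0$ where the difference is $0$. Hence the delay bound $M$ is preserved (up to taking the maximum with $0$) in both directions, and bounded consistency of the P-TEG is equivalent to bounded consistency of $w=\winit\wA^{+\infty}$ for $\pazocal{S}$.

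I do not anticipate any substantial obstacle; the argument is essentially bookkeeping once the reduction of the initial-condition inequality to the max/min form in the first display above is noted. The only subtle point is that one must explicitly invoke time-invariance to argue that the particular value of $t_0$ picked by $x(0)$ is irrelevant for (bounded) consistency, rather than being an additional choice the P-TEG formulation allows.
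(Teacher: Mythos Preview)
Your proposal is correct and follows exactly the approach the paper intends: the paper states that the proposition is ``an immediate consequence'' of the preceding construction and gives no separate proof, so your argument is simply the explicit unpacking of that construction---the trajectory bijection via $x(0)=t_0\tilde e$, the matching of inequalities mode by mode, time-invariance to dispose of the particular $t_0$, and the trivial preservation of the delay bound. The only cosmetic point is indexing: in the paper's SLDI convention the dater starts at $k=1$, so under $w=\winit\wA^K$ one would write $x(1)$ for the $\winit$-mode vector rather than $x(0)$, but this is a harmless relabeling.
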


Although Proposition~\ref{pr:SLDI_strict} alone does not answer the question about how to verify (bounded) consistency of P-TEGs with strict initial conditions\footnote{The question seems to be related to the topic of transience bounds in max-plus linear systems (see, e.g.,~\cite{nowak2014overview,abate2020computation}).}, it suggests that any result regarding SLDIs under schedules of the form $w = \winit\wA^{K}$ holds automatically for P-TEGs with strict initial conditions.

In the following, we provide an example for the application of this fact.
Let us study the existence of 1-periodic trajectories for P-TEGs with strict initial conditions.
From the above discussion, such trajectories correspond to "ultimately" 1-periodic trajectories of the form
\[
	x(1),x(2)\in\R^n,\quad \forall k\in\dint{1,K-1}\quad x(k+2) = \lambda^kx(2)
\]
for SLDIs $\pazocal{S} = (\{\winit,\wA\},A^0,A^1,B^0,B^1)$ under schedule $w = \winit\wA^K$.
We proceed with a strategy similar to the one seen in Subsection~\ref{su:LDIs}: substituting formula $x(k+2) = \lambda^{k}x(2)$ for all $k\in\dint{1,K-1}$ into~\eqref{eq:dynamics}, we get
\[
	\begin{array}{rrcl}
		& A^0_{\winit}\otimes x(1) \preceq & x(1) & \preceq B^0_\winit \stimes x(1),\\
		& A^1_{\winit}\otimes x(1) \preceq & x(2) & \preceq B^1_\winit \stimes x(1),\\
	\forall k\in\dint{1,K}, &
		A^0_\wA\otimes \lambda^{k-1} x(2) \preceq & \lambda^{k-1} x(2) & \preceq B^0_\wA\stimes \lambda^{k-1} x(2),\\
	\forall k\in\dint{1,K-1}, &
		\quad\quad A^1_\wA\otimes \lambda^{k-1} x(2) \preceq & \lambda^k x(2) & \preceq B^1_\wA\stimes \lambda^{k-1} x(2);
	\end{array}
\] 
multiplying by $\lambda^{-k+1}$ in the third and fourth inequalities, we obtain
\[
	\begin{array}{rcl}
		A^0_{\winit}\otimes x(1) \preceq & x(1) & \preceq B^0_\winit \stimes x(1),\\
		A^1_{\winit}\otimes x(1) \preceq & x(2) & \preceq B^1_\winit \stimes x(1),\\
		A^0_\wA\otimes  x(2) \preceq &  x(2) & \preceq B^0_\wA\stimes  x(2),\\
		\quad\quad A^1_\wA\otimes  x(2) \preceq & \lambda x(2) & \preceq B^1_\wA\stimes x(2).
	\end{array}
\] 
By defining the extended dater vector $\tilde{x} = [x^\top (1)\ \ x^\top(2)]^\top$ and using Proposition~\ref{pr:max-min}, the inequalities can be rewritten in terms of $\tilde{x}$ as
\begin{equation}\label{eq:PIC-NCP_strict}
	(\lambda P \oplus \lambda^{-1} I \oplus C)\otimes \tilde{x} \preceq\tilde{x},
\end{equation}
where
\[
	P = \begin{bmatrix}
		\pazocal{E} & \pazocal{E}\\
		\pazocal{E} & B^{1\sharp}_{\wA}
	\end{bmatrix},\quad 
	I = \begin{bmatrix}
		\pazocal{E} & \pazocal{E}\\
		\pazocal{E} & A^1_{\wA}
	\end{bmatrix},\quad 
	C = \begin{bmatrix}
		A_\winit^0\oplus B_\winit^{0\sharp} & B_\winit^{1\sharp}\\
		A^1_\winit & A_\wA^0\oplus B_\wA^{0\sharp}
	\end{bmatrix}. 
\]
Clearly,~\eqref{eq:PIC-NCP_strict} defines a PIC-NCP, whose solution can be found in time $\pazocal{O}(n^4)$ using Algorithm~\ref{al:PIC-NCP}.
The conclusion is that periods of consistent 1-periodic trajectories for P-TEGs under strict initial conditions can be obtained in the same strongly polynomial time complexity as for P-TEGs under loose initial conditions.

\subsection{Analysis of periodic schedules}\label{su:analysis_of_fixed_schedules}

In this subsection, we analyze bounded consistency and cycle times of SLDIs when schedule $w\in\Sigma^*\cup\Sigma^\omega$ is periodic, i.e., when it can be written as $w = v^K$\zor{, $K\in\nat\cup\{+\infty\}$,} and $v=v_1\cdots v_V\in\Sigma^*$ is a finite \textit{subschedule} of length $V$.
We define $v$-periodic trajectories of period $\lambda\in\R_{\geq 0}$ for SLDIs under schedule $w=v^K$ as those dater trajectories that, for all $k\in\dint{1,K-1}$, $h\in\dint{1,V}$, satisfy 
\[
	x(Vk + h) = \lambda x(V(k-1) + h);
\]
$\solSLDI$ denotes the set of all periods (or cycle times) $\lambda$ for which there exists a consistent $v$-periodic trajectory.
Their relationship with 1-periodic trajectories in P-TEGs is \zor{illustrated in} the following example.

\begin{example}\label{ex:2}
	Let us analyze the SLDIs $\pazocal{S}$, with $\Sigma=\{\wA,\wB,\wC\}$, and $A^0_\wZ,A^1_\wZ,B^0_\wZ,B^1_\wZ$ defined as in Example~\ref{ex:P-TEGs}; now label $\wZ\in\Sigma$ is to be interpreted as a mode. 
	Thus, for each event $k$, the dynamics of the SLDIs may switch among those specified by the P-TEGs labeled $\wA$, $\wB$, and $\wC$.
	We consider periodic schedules $(\wA\wC)^K$ and $(\wA\wB)^K$; observe that for $w=v^K$, with $v\in\{\wA\wC,\wA\wB\}$ (i.e., $v_1=\wA$ and $v_2=\wC$ or $v_2=\wB$), the SLDIs following $w$ can be written as: 
	\begin{equation}\label{eq:example_1}
		\begin{array}{lrcl}
			\forall k\in\dint{1,K},&A^0_{v_1}\otimes x(2(k-1)+1) \preceq & x(2(k-1)+1) & \preceq B^0_{v_1}\stimes x(2(k-1)+1),\\
			\forall k\in\dint{1,K},&A^1_{v_1}\otimes x(2(k-1)+1) \preceq & x(2(k-1)+2) & \preceq B^1_{v_1}\stimes x(2(k-1)+1),\\
			\forall k\in\dint{1,K},&A^0_{v_2}\otimes x(2(k-1)+2) \preceq & x(2(k-1)+2) & \preceq B^0_{v_2}\stimes x(2(k-1)+2),\\
			\forall k\in\dint{1,K-1},&A^1_{v_2}\otimes x(2(k-1)+2) \preceq & x(2k+1) & \preceq B^1_{v_2}\stimes x(2(k-1)+2).
		\end{array}
	\end{equation}
	By defining $\tilde{x}(k) = [x^\top(2(k-1)+1),x^\top(2(k-1)+2)]^\top$, the above set of inequalities can be rewritten as LDIs: 
	\begin{subequations}\label{eq:example_2}
		\begin{empheq}{align}
			\forall k\in\dint{1,K}, \quad 
			&A^0_{v}
	\otimes \tilde{x}(k) \preceq \,\,\,\,\, \tilde{x}(k) \,\,\,\,\, \preceq B^0_v \stimes \tilde{x}(k) \label{eq:example_2-1}\\
			\forall k\in\dint{1,K-1},\quad
			&A^1_v \otimes \tilde{x}(k) \preceq  \tilde{x}(k+1) \preceq B^1_v \stimes \tilde{x}(k)\label{eq:example_2-2}
		\end{empheq}
	\end{subequations}
	where
	\[
		A^0_v = \begin{bmatrix}
			A^0_{v_1} & \pazocal{E} \\
			A^1_{v_1} & A^0_{v_2}
		\end{bmatrix}, \quad
		A^1_v = \begin{bmatrix}
			\pazocal{E} & A^1_{v_2} \\
			\pazocal{E} & \pazocal{E}
		\end{bmatrix},
	\]
	\[
		B^0_v = \begin{bmatrix}
			B^0_{v_1} & \pazocal{T} \\
			B^1_{v_1} & B^0_{v_2}
		\end{bmatrix},\quad 
		B^1_v = \begin{bmatrix}
			\pazocal{T} & B^1_{v_2} \\
			\pazocal{T} & \pazocal{T}
		\end{bmatrix}.
	\]
	To see the equivalence of~\eqref{eq:example_1} and~\eqref{eq:example_2}, observe that the second block of~\eqref{eq:example_2-1} reads
	\[
		\begin{array}{rcl}
		A_{v_1}^1 \otimes x(2(k-1)+1) \oplus \!\! & A_{v_2}^0 \otimes x(2(k-1)+2)  & \\ & \preceq x(2(k-1)+2) \preceq & \\ & B_{v_1}^1 \stimes x(2(k-1)+1) \splus & \!\! B_{v_2}^0 \stimes x(2(k-1)+1).
		\end{array}
	\]
	From this transformation, we can easily conclude that $\pazocal{S}$ is boundedly consistent under $v^{+\infty}$ if and only if the LDIs with characteristic matrices $A^0_v,A^1_v,B^0_v,B^1_v$ are boundedly consistent, and that all consistent $v$-periodic trajectories of $\pazocal{S}$ coincide with consistent 1-periodic trajectories of the LDIs; hence, using Algorithm~\ref{al:PIC-NCP} we obtain
	\[
		\solSLDI[\wA\wC] = 
		\solNCP{\lambda B^{1\sharp}_{\wA\wC}\oplus \lambda^{-1} A^1_{\wA\wC}\oplus A^0_{\wA\wC}\oplus B^{0\sharp}_{\wA\wC}} = \emptyset,
	\]\[
		\solSLDI[\wA\wB] = 
		\solNCP{\lambda B^{1\sharp}_{\wA\wB} \oplus \lambda^{-1}A^1_{\wA\wB} \oplus A^0_{\wA\wB}\oplus B^{0\sharp}_{\wA\wB}} = [3,3].
	\]

	It is worth noting that, although P-TEGs labeled $\wA$ and $\wB$ are not boundedly consistent, the SLDIs under schedule $(\wA\wB)^{+\infty}$ are.
	Thus, in general it is not possible to infer bounded consistency of SLDIs under a fixed schedule $w$ solely based on the analysis of each mode appearing in $w$.
\end{example}

By generalizing the procedure shown in Example~\ref{ex:2}, we can derive the following proposition through some algebraic manipulations (to set up equivalent LDIs) and applying Theorem~\ref{th:bounded_consistency} \zor{(for a formal proof, see Appendix~\ref{ap:Proposition5})}.

\begin{proposition}\label{pr:SLDI-P-TEGs}
SLDIs $\pazocal{S}$ are boundedly consistent under schedule $w=v^{+\infty}$ if and only if they admit a $v$-periodic trajectory.
Moreover, set $\solSLDI$ coincides with $\solPIC[\MP_v][\MI_v][\MC_v]$, where
\[
	\lambda\MP_v \oplus \lambda^{-1}\MI_v \oplus \MC_v = 
	\begin{bmatrix}
		C_1 & P_{1} & \pazocal{E} & \pazocal{E} & \pazocal{E} & \cdots & \pazocal{E} & \lambda^{-1}I_{V}\\ 
		I_{1} & C_{2} & P_{2} & \pazocal{E} & \pazocal{E} & \cdots & \pazocal{E} & \pazocal{E}\\ 
		\pazocal{E} & I_{2} & C_{3} & P_{3} & \pazocal{E} & \cdots & \pazocal{E} & \pazocal{E}\\ 
		\pazocal{E} & \pazocal{E} & I_{3} & C_{4} & P_{4} & \cdots & \pazocal{E} & \pazocal{E}\\ 
		\pazocal{E} & \pazocal{E} & \pazocal{E} & I_{4} & C_{5} & \cdots & \pazocal{E} & \pazocal{E}\\ 
		\vdots & \vdots & \vdots & \vdots & \vdots & \ddots & \vdots & \vdots \\ 
		\pazocal{E} & \pazocal{E} & \pazocal{E} & \pazocal{E} & \pazocal{E} & \cdots & C_{{V-1}} & P_{{V-1}}\\ 
		\lambda P_{V} & \pazocal{E} & \pazocal{E} & \pazocal{E} & \pazocal{E} & \cdots & I_{{V-1}} & C_{V}
	\end{bmatrix} \in\Rmax^{Vn\times Vn},
\]
$P_h = B_{v_h}^{1\sharp}$, $I_h = A_{v_h}^1$, and $C_h = A_{v_h}^0\oplus B_{v_h}^{0\sharp}$ for all $h\in\dint{1,V}$.
\end{proposition}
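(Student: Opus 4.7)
The plan is to mimic and generalize the reduction of Example~\ref{ex:2}. First, I would stack $V$ consecutive daters into the extended state $\tilde{x}(k) = [x(V(k-1)+1)^\top, \ldots, x(Vk)^\top]^\top \in \R^{Vn}$ and then rewrite~\eqref{eq:dynamics} under $w=v^K$, grouped period-by-period, as LDIs
\[
\tilde{A}^0_v \otimes \tilde{x}(k) \preceq \tilde{x}(k) \preceq \tilde{B}^0_v \stimes \tilde{x}(k), \quad \tilde{A}^1_v \otimes \tilde{x}(k) \preceq \tilde{x}(k+1) \preceq \tilde{B}^1_v \stimes \tilde{x}(k),
\]
in which $\tilde{A}^0_v$ is lower block-bidiagonal with diagonal blocks $A^0_{v_h}$ and subdiagonal blocks $A^1_{v_h}$ (for $h=1,\ldots,V-1$), with $\tilde{B}^0_v$ built analogously from $B^0_{v_h}$ and $B^1_{v_h}$; meanwhile, $\tilde{A}^1_v$ and $\tilde{B}^1_v$ are zero everywhere except for their top-right blocks $A^1_{v_V}$ and $B^1_{v_V}$, which encode the crossing between consecutive periods of $v$. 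Consistency of a dater of $\pazocal{S}$ under $w=v^K$ is then equivalent to consistency of $\tilde{x}$, and, block by block, $v$-periodicity of $\{x(k)\}_k$ with period $\lambda$ corresponds to $1$-periodicity of $\{\tilde{x}(k)\}_k$ with the same $\lambda$.

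The first claim now follows from Theorem~\ref{th:bounded_consistency} applied to the extended LDIs: bounded consistency under $v^{+\infty}$ is equivalent to the existence of a consistent $1$-periodic trajectory of the LDIs, which via the bijection above is equivalent to a consistent $v$-periodic trajectory of $\pazocal{S}$.

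For the second claim, I would follow the recipe recalled at the end of Section~\ref{su:LDIs}: a $1$-periodic trajectory of period $\lambda$ for the extended LDIs exists iff $(\lambda \tilde{B}^{1\sharp}_v \oplus \lambda^{-1} \tilde{A}^1_v \oplus \tilde{A}^0_v \oplus \tilde{B}^{0\sharp}_v) \otimes \tilde{x}(1) \preceq \tilde{x}(1)$ has a solution in $\R^{Vn}$ (by Propositions~\ref{pr:nonegset_inequality} and~\ref{pr:max-min}), so $\solSLDI$ equals $\solNCP{\lambda \tilde{B}^{1\sharp}_v \oplus \lambda^{-1} \tilde{A}^1_v \oplus \tilde{A}^0_v \oplus \tilde{B}^{0\sharp}_v}$. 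What is left is to expand this $Vn\times Vn$ matrix block by block and verify it matches the statement: the diagonal blocks collect $A^0_{v_h}\oplus B^{0\sharp}_{v_h} = C_h$; the subdiagonal of $\tilde{A}^0_v$ gives $A^1_{v_h} = I_h$; the superdiagonal of $\tilde{B}^{0\sharp}_v$ gives $B^{1\sharp}_{v_h} = P_h$; finally, $\lambda^{-1} \tilde{A}^1_v$ places $\lambda^{-1} I_V$ in position $(1,V)$, and $\lambda \tilde{B}^{1\sharp}_v$ places $\lambda P_V$ in position $(V,1)$. The main obstacle I foresee is pure bookkeeping: tracking the $\sharp$-transposition when converting each upper-bound inequality $x(k+1) \preceq B^1 \stimes x(k)$ into its equivalent max-plus form $B^{1\sharp} \otimes x(k+1) \preceq x(k)$ via Proposition~\ref{pr:max-min}, and ensuring $\lambda$ enters exactly once, at the wrap-around, rather than at every intra-period transition.
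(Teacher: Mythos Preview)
Your proposal is correct and follows essentially the same approach as the paper: stack $V$ consecutive daters into $\tilde{x}(k)$, rewrite the SLDIs as equivalent LDIs with the block-bidiagonal matrices you describe, invoke Theorem~\ref{th:bounded_consistency} for the bounded-consistency equivalence, and then apply the recipe of Section~\ref{su:LDIs} to identify $\solSLDI$ with the PIC-NCP solution set, expanding $\lambda\tilde{B}^{1\sharp}_v\oplus\lambda^{-1}\tilde{A}^1_v\oplus\tilde{A}^0_v\oplus\tilde{B}^{0\sharp}_v$ block by block. The paper's proof in Appendix~\ref{ap:Proposition5} proceeds identically, so your only remaining work is indeed the bookkeeping you anticipate.
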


Proposition~\ref{pr:SLDI-P-TEGs} directly provides an algorithm to compute the minimum and maximum cycle times of SLDIs under a fixed periodic schedule.
Indeed, these values come from solving the NCP for the parametric precedence graph $\graph(\lambda\MP_v\oplus\lambda^{-1}\MI_v\oplus\MC_v)$.
However, this approach results in a slow (although strongly polynomial time) algorithm when the length of subschedule $v$ is large; indeed, its time complexity is $\pazocal{O}((Vn)^4) = \pazocal{O}(V^4n^4)$, as the considered precedence graph has $Vn$ nodes.

To speed up the computation of $\solSLDI$, we may note the following fact: the longer subschedule $v$ is, the larger is the number of $-\infty$'s compared to real numbers in $\lambda\MP_v\oplus\lambda^{-1}\MI_v\oplus\MC_v$.
In other words, the matrix becomes sparser and sparser with larger values of $V$.
Moreover, the real entries of the matrix have a recognizable pattern.
The following theorem, proven in Appendix~\ref{ap:improved_algorithm}, exploits this observation, achieving time complexity $\pazocal{O}(Vn^3+n^4)$ for computing the set $\solSLDI$.
The resulting complexity is thus linear in the length of subschedule $v$.

\begin{theorem}\label{th:improved_algorithm}
Precedence graph $\graph(\lambda \MP_v \oplus \lambda^{-1}\MI_v \oplus \MC_v)$ does not contain circuits with positive weight if and only if the following conditions hold:
\begin{enumerate}
	\item\label{en:1} for all $h\in\dint{1,V}$, $\graph(C_{h})\in\nonegset$;
	\item\label{en:2} for all $h\in\dint{1,V-1}$, $\graph(\mathbb{C}^{P}_h)\in\nonegset$ and $\graph(\mathbb{C}^{I}_{h+1})\in\nonegset$, where
\[\def\arraystretch{1.5}
	\begin{array}{ll}
	\forall h\in\dint{1,V-1}, \quad & \mathbb{C}^{P}_h = \mathbb{P}_h(\mathbb{P}_{h+1}(\cdots(\mathbb{P}_{V-1}\mathbb{I}_{V-1})^*\cdots)^*\mathbb{I}_{h+1})^*\mathbb{I}_{h},\\
	\forall h\in\dint{2,V}, &\mathbb{C}^{I}_h = \mathbb{I}_h(\mathbb{I}_{h-1}(\cdots(\mathbb{I}_{2}\mathbb{P}_{2})^*\cdots)^*\mathbb{P}_{h-1})^*\mathbb{P}_{h},\\
	\forall h\in\dint{1,V},& \mathbb{P}_h = C_{h}^*P_{h}C_{{h+1}}^*, \quad\quad \mathbb{I}_h = C_{{h+1}}^*I_{h}C_{h}^*,\\
	& C_{V+1} = C_1;
\end{array}
\]
	\item\label{en:3} $\lambda\in\solPIC[\mathbb{M}^P][\mathbb{M}^I][\mathbb{M}^C]$, where
\[\def\arraystretch{1.5}
\begin{array}{rcl}
	\mathbb{M}^P &=& \mathbb{P}_1(\mathbb{C}^{P}_2)^*\mathbb{P}_{2}(\mathbb{C}^{P}_{3})^*\cdots(\mathbb{C}^{P}_{V-1})^*\mathbb{P}_{V-1} \mathbb{P}_V,\\
	\mathbb{M}^I &=&\mathbb{I}_V(\mathbb{C}^{I}_{V-1})^*\mathbb{I}_{V-1}(\mathbb{C}^{I}_{V-2})^*\cdots(\mathbb{C}^{I}_{2})^*\mathbb{I}_2\mathbb{I}_1,\\
	\mathbb{M}^C &=& \mathbb{C}^{P}_1\oplus \mathbb{C}^{I}_V.
\end{array}
\]
\end{enumerate}

\end{theorem}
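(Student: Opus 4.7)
The proof strategy is to iteratively reduce the precedence graph by block elimination, exploiting the fundamental identity that for any matrix of block form $A=\bigl[\begin{smallmatrix}A_{11}&A_{12}\\A_{21}&A_{22}\end{smallmatrix}\bigr]$, one has $\graph(A)\in\nonegset$ if and only if $\graph(A_{11})\in\nonegset$ and $\graph(A_{22}\oplus A_{21}A_{11}^*A_{12})\in\nonegset$. Intuitively, eliminating the block $A_{11}$ amounts to shortcutting every path through it via $A_{11}^*$ and absorbing the contribution into the residual self-loops of $A_{22}$. Applied to our tridiagonal-with-wraparound matrix $\lambda\MP_v\oplus\lambda^{-1}\MI_v\oplus\MC_v$, this principle drives a three-phase reduction that produces exactly conditions~\ref{en:1}--\ref{en:3}.

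\textbf{Phase 1 (diagonal layers).} First I would eliminate each diagonal block $C_h$ separately. Since these blocks only couple through the off-diagonal $P$ and $I$ edges and never directly with one another, the eliminations commute and the necessary/sufficient condition for this phase is $\graph(C_h)\in\nonegset$ for all $h\in\dint{1,V}$, i.e., condition~\ref{en:1}. After the elimination, every forward edge becomes $\mathbb{P}_h = C_h^* P_h C_{h+1}^*$ and every backward edge becomes $\mathbb{I}_h = C_{h+1}^* I_h C_h^*$, where indices are taken with $C_{V+1}=C_1$ as prescribed; the wrap-around edges become $\lambda \mathbb{P}_V$ and $\lambda^{-1}\mathbb{I}_V$, while the layers themselves lose their internal structure.

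\textbf{Phase 2 (intermediate layers).} Next I would eliminate the intermediate layers $V-1,V-2,\ldots,2$ in decreasing index order. Eliminating layer $V-1$ contributes a new self-loop $\mathbb{P}_{V-1}\mathbb{I}_{V-1}$ to layer $V-2$ (forward-then-back) and, symmetrically, $\mathbb{I}_{V-1}\mathbb{P}_{V-1}$ to layer $V$. Proceeding inductively, when eliminating layer $h+1$ the accumulated self-loop on layer $h$ is precisely the matrix inside the outermost Kleene star in the definition of $\mathbb{C}^P_h$, so the self-loop after elimination becomes $\mathbb{C}^P_h = \mathbb{P}_h(\cdots)^*\mathbb{I}_h$; the dual elimination seen from the other end generates $\mathbb{C}^I_{h+1} = \mathbb{I}_{h+1}(\cdots)^*\mathbb{P}_{h+1}$ accumulating on layer $V$ (which, after chaining, behaves symmetrically to layer $1$ modulo wrap-around). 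The Kleene-star of each intermediate contribution must be well-defined at every step, which is precisely condition~\ref{en:2}.

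\textbf{Phase 3 (collapse to a single layer).} After Phase~2 only layers $1$ and $V$ remain, carrying self-loops $\mathbb{C}^P_1$ and $\mathbb{C}^I_V$ respectively, connected by an aggregated forward channel through all eliminated layers whose weight matrix is $\lambda\,\mathbb{M}^P$ and an aggregated backward channel $\lambda^{-1}\,\mathbb{M}^I$. A final block elimination of layer $V$ collapses everything onto layer $1$, yielding the effective matrix $\lambda\mathbb{M}^P\oplus\lambda^{-1}\mathbb{M}^I\oplus\mathbb{M}^C$ with $\mathbb{M}^C=\mathbb{C}^P_1\oplus\mathbb{C}^I_V$. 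By the PIC-NCP characterization of Section~\ref{su:NCP}, the absence of positive-weight circuits in this reduced graph is equivalent to $\lambda\in\solPIC[\mathbb{M}^P][\mathbb{M}^I][\mathbb{M}^C]$, i.e., condition~\ref{en:3}.

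The main obstacle will be certifying by induction that the recursive definitions of $\mathbb{C}^P_h$, $\mathbb{C}^I_h$, $\mathbb{M}^P$ and $\mathbb{M}^I$ coincide exactly with the self-loop and channel weights produced by the sequential eliminations above. This requires standard but careful use of max-plus identities such as $(A\oplus B)^*=A^*(BA^*)^*$ and $A(BA)^*=(AB)^*A$, together with the fact that the tridiagonal structure forces the sums of contributions from forward/backward excursions to telescope into exactly the nested Kleene-star expressions in the statement. A secondary delicate point is to verify that the two apparently asymmetric accumulations $\mathbb{C}^P_1$ (built top-down) and $\mathbb{C}^I_V$ (built bottom-up) combine consistently at the very last step into $\mathbb{M}^C=\mathbb{C}^P_1\oplus\mathbb{C}^I_V$; this follows from the fact that, under wrap-around, layers $1$ and $V$ play symmetric roles, and the union of self-loops collected on either side captures all non-wrap-around closed walks exactly once.
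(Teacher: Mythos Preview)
Your high-level tool---the Schur complement identity $\graph(A)\in\nonegset$ iff $\graph(A_{11})\in\nonegset$ and $\graph(A_{22}\oplus A_{21}A_{11}^*A_{12})\in\nonegset$---is sound, and Phase~1 is essentially correct: decomposing $A=C\oplus E$ with $C=\mathrm{diag}(C_1,\ldots,C_V)$ and observing that $\graph(A)\in\nonegset$ iff $\graph(C)\in\nonegset$ and $\graph(C^*EC^*)\in\nonegset$ does yield condition~\ref{en:1} and replaces the edges by $\mathbb{P}_h,\mathbb{I}_h$.

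However, Phases~2 and~3 do not work as you describe. First, there is a concrete indexing error: eliminating layer $V-1$ adds to layer~$V-2$ the self-loop $\mathbb{P}_{V-2}\mathbb{I}_{V-2}$ (the detour $V{-}2\to V{-}1\to V{-}2$), not $\mathbb{P}_{V-1}\mathbb{I}_{V-1}$, which is a map on layer~$V{-}1$ itself. More seriously, a single elimination order cannot produce both the $\mathbb{C}^P_h$ and the $\mathbb{C}^I_h$ families. The recursion $\mathbb{C}^P_h=\mathbb{P}_h(\mathbb{C}^P_{h+1})^*\mathbb{I}_h$ would require layer $h{+}1$ to already carry the self-loop $\mathbb{C}^P_{h+1}$, which in turn encodes excursions all the way up to layer~$V$; but in your sweep $V{-}1,V{-}2,\ldots$ layer~$V$ is never absorbed, so the self-loop accumulated on layer~$h$ only captures excursions through $h{+}1,\ldots,V{-}1$. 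Consequently, in Phase~3 the self-loops remaining on layers~$1$ and~$V$ are \emph{not} $\mathbb{C}^P_1$ and $\mathbb{C}^I_V$. (In fact $\mathbb{C}^I_V=\mathbb{I}_V(\mathbb{C}^I_{V-1})^*\mathbb{P}_V$ is a self-loop on layer~$1$---it uses the wrap-around edges---not on layer~$V$.) The aggregated channels you obtain will therefore not coincide with $\lambda\mathbb{M}^P$ and $\lambda^{-1}\mathbb{M}^I$ without substantial further manipulation.

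The paper avoids this difficulty by \emph{not} eliminating sequentially. It singles out layer~$1$ and writes the top-left $n\times n$ block $N$ of the full Kleene star via a single Schur step (Lemma~\ref{le:aux1}); this requires the four corner blocks $\pazocal{M}^{22},\pazocal{M}^{2V},\pazocal{M}^{V2},\pazocal{M}^{VV}$ of the Kleene star of the tridiagonal submatrix on layers $2,\ldots,V$. Closed formulas for these corners are established by induction (Lemmas~\ref{le:aux22}--\ref{le:auxV2}), and it is precisely here that the two families $\mathbb{C}^P_h$ and $\mathbb{C}^I_h$ appear naturally---one for each pair of opposite corners. The circuit conditions then come from partitioning all circuits by the lowest-index layer they visit, which yields condition~\ref{en:1}, the $\mathbb{C}^P_k$ conditions for $k\in\dint{2,V-1}$ (a subset of condition~\ref{en:2}; the remaining $\mathbb{C}^I$ conditions are implied), and condition~\ref{en:3}. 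If you want to rescue the elimination approach, you should mirror this: do one Schur step isolating layer~$1$, then prove the corner-block formulas for the inner tridiagonal matrix by induction rather than by sequential layer removal.
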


The time complexity is analyzed as follows.
Item~\ref{en:1} from Theorem~\ref{th:improved_algorithm} requires to check the existence of circuits with positive weight in $V$ precedence graphs consisting of $n$ nodes each; since each verification takes time $\pazocal{O}(n^3)$, all operations in item~\ref{en:1} are computed in time $\pazocal{O}(Vn^3)$.
As for item~\ref{en:2}, since $\mathbb{C}^P_h = \mathbb{P}_h(\mathbb{C}^P_{h+1})^* \mathbb{I}_h$ and $\mathbb{C}^I_h = \mathbb{I}_h (\mathbb{C}^I_{h-1})^* \mathbb{P}_h$, there are $\pazocal{O}(V)$ multiplications and Kleene star operations to be performed, each of which requires $\pazocal{O}(n^3)$ operations; the total computational time is thus $\pazocal{O}(Vn^3)$.
In item~\ref{en:3}, $\mathbb{M}^P$, $\mathbb{M}^I$, and $\mathbb{M}^C$ can be obtained performing $\pazocal{O}(V)$ multiplications and Kleene star operations and $\pazocal{O}(1)$ additions on $n\times n$ matrices; finally, set $\solPIC[\mathbb{M}^P][\mathbb{M}^I][\mathbb{M}^C]$ is computable in time $\pazocal{O}(n^4)$ using Algorithm~\ref{al:PIC-NCP}.

The formulas of Theorem~\ref{th:improved_algorithm} generalize those found in~\cite[Theorem 5.2]{gaubert1999modeling} for the throughput evaluation in max-plus automata (or heap models).
The formula in~\cite[Theorem 5.2]{gaubert1999modeling} can indeed be recovered from Theorem~\ref{th:improved_algorithm} considering the special case when $P_h = C_h = \pazocal{E}$ for all $h$, i.e., when no upper bound constraints or relations between $x_i(k)$ and $x_j(k)$ for all $i,j,k$ exist; in order to do so, observe that in this particular case Algorithm~\ref{al:PIC-NCP} simplifies significantly, see~\cite[Remark 3]{zorzenon2021nonpositive}.

\subsection{Analysis of intermittently periodic schedules}\label{su:piecewise_schedules}

We conclude this section with the analysis of particular trajectories of SLDIs with \textit{intermittently periodic schedules}, i.e., schedules \zor{that can be factorized in} the form 
\begin{equation}\label{eq:definition_intermittently_periodic_schedule}
	w = u_0 v_1^{m_1} u_1 v_2^{m_2} u_2 \cdots v_q^{m_q}u_{q},
\end{equation} 
where 
\begin{itemize}
	\item $u_0,\ldots,u_{q},v_1,\ldots,v_{q}$ are finite subschedules of lengths $U_0,\ldots,U_q\in\nato$ and $V_1,\ldots,V_{q}\in\nat$, respectively,
	\item $2\leq m_1,\ldots,m_{q-1}<+\infty$,
	\item $m_q$ is either an element of $\nat$ or $+\infty$; in the second case, $u_q$ is the empty string (of length $U_q=0$) and the schedule is called \textit{ultimately periodic}. 
\end{itemize}
\zor{
We call $u_0,\ldots,u_q$ the \textit{transient subschedules} and $v_1,\ldots,v_q$ the \textit{periodic subschedules} of the schedule $w$.
Observe that the factorization of an intermittently periodic schedule into transient and periodic subschedules may be not unique; for example, schedule $w = \wA\wA\wB\wA\wB\wA$ can be factorized into $w = \wA (\wA\wB)^2 \wA$ ($u_0=\wA$, $v_1=\wA\wB$, $u_1 = \wA$, $m_1=2$), into $w = \wA \wA (\wB\wA)^2$ ($u_0 = \wA\wA$, $v_1 = \wB\wA$, $m_1 = 2$), or even into $w = (\wA)^2 (\wB\wA)^2$ ($U_0 = U_1 = U_2 = 0$, $v_1 = \wA$, $v_2 = \wB\wA$, $m_1=m_2=2$).
In the reminder of the paper, to unequivocally indicate the intended schedule factorization into periodic and transient subschedules, we adopt the practice of writing explicitly exponents $m_h$ that elevate periodic subschedules $v_h$, and of writing extensively transient subschedules $u_h$ (even when $u_h$ could be written as a concatenation of a sequence of modes).}

The interpretation of intermittently periodic schedules in systems modeled by SLDIs is as follows: after the start-up of the system ($u_0$), a number of operations are executed cyclically ($v_1^{m_1}$), after which the system is re-initialized ($u_1$) before starting a new sequence of cyclical tasks ($v_2^{m_2}$), and so on; finally, after a finite number ($q$) of alternations between transient and cyclic working regimes, the system is either shut down ($u_q$) if $m_q\in\nat$ or works in periodic regime forever if $m_q = +\infty$.

\zor{In the reminder of the paper, we let $K_h = U_0+\sum_{j = 1}^{h} \left(m_jV_j+U_j\right)$ for all $h\in\dint{0,q}$.}
The objective of this section is to show that trajectories with important practical relevance under intermittently periodic schedules can be efficiently analyzed in SLDIs.
Namely, we study the existence of \textit{intermittently periodic trajectories}, that is, trajectories of the dater function \zor{$\{x(k)\}_{k\in\dint{1,K_q}}$} such that 
\[
	\zor{\{x(k)\}_{k\in\dint{K_{h-1}+1,K_{h-1} + m_{h}V_{h}}}}
\]
are $v_h$-periodic trajectories of period $\lambda_h$ for all $h\in\dint{1,q}$.
\zor{In other words, for all $h\in\dint{1,q}$, $j\in\dint{1,m_h-1}$, $k\in\dint{1,V_h}$, an intermittently periodic trajectory satisfies, for some $\lambda_h\in\mathbb{R}_{\geq 0}$,}
\begin{equation}\label{eq:intermittently_periodic}
	\zor{x\left(K_{h-1}+ j V_h + k\right) = \lambda_{h}x\left(K_{h-1}+(j-1)V_h+k\right).}
\end{equation}
Intermittently periodic trajectories in which $m_q = +\infty$ are referred to as \textit{ultimately periodic trajectories}.
Intermittently periodic trajectories generalize $v$-periodic trajectories, as they are $v_h$-periodic in each sequence of cyclical tasks $v_h^{m_h}$.

\zor{
Note that the definition of intermittently periodic trajectory depends on the specific factorization of schedule $w$ into transient ($u_h$) and periodic ($v_h$) subschedules.
For instance, let $w = \wA\wA\wB\wA\wB\wA$. 
A trajectory $\{x(k)\}_{k\in\dint{1,6}}$ for schedule $w$ factorized as $\wA(\wA\wB)^2\wA$ is intermittently periodic if it satisfies, for some $\lambda_1\in\R_{\geq 0}$,}
\[
	\zor{x(4) = \lambda_1 x(2),\quad x(5) = \lambda_1 x(3).}
\]
\zor{Considering the factorization $w = \wA \wA (\wB\wA)^2$, the trajectory is intermittently periodic if, for some $\lambda_1\in\R_{\geq 0}$,}
\[
	\zor{x(5) = \lambda_1 x(3),\quad x(6) = \lambda_1 x(4).}
\]
\zor{According to factorization $w = (\wA)^2 (\wB\wA)^2$, instead, the trajectory is intermittently periodic if, for some $\lambda_1,\lambda_2 \in\R_{\geq 0}$,}
\[
	\zor{x(2) = \lambda_1 x(1),\quad x(5) = \lambda_2 x(3),\quad x(6) = \lambda_2 x(4).}
\]

Trajectories of this type appear frequently in applications.
Typical examples are batch manufacturing systems, where each batch is processed in a periodic workflow, but switching between different batches leads to pauses and irregular transients (see, e.g., \cite{lee2012scheduling,froehlich2011transient}).
Urban railway systems also operate on a similar principle, with trains arriving at stations in a periodic manner during peak and off-peak hours, although the period may vary based on the time of day.
Moreover, note that, as shown in Subsection~\ref{su:SLDIS_PTEGs}, $1$-periodic trajectories of P-TEGs with strict initial conditions correspond to a particular case of ultimately periodic trajectories of SLDIs, in which $q=1$ and $U_0 = 1$.

\begin{example}
	\zor{This example shows how to transform the periods of finding intermittently periodic trajectories into an MPIC-NCP.
	We consider again the SLDIs of Example~\ref{ex:2} under the intermittently periodic schedule}
	\[
		w = \wA (\wC)^{m_1} \wB
	\]
	\zor{for some $m_1\geq 2$.
	The inequalities corresponding to schedule $w$ are: for all $k\in\dint{1,m_1}$,}
	\begin{equation}\label{eq:example_intermittently}
		\begin{array}{rcl}
		A_{\wA}^0\otimes x(1) \leq & x(1) & \leq B_{\wA}^0\stimes x(1)\\
		A_{\wA}^1\otimes x(1) \leq & x(2) & \leq B_{\wA}^1\stimes x(1)\\
		A_{\wC}^0\otimes x(k+1) \leq & x(k+1) & \leq B_{\wC}^0\stimes x(k+1)\\
		A_{\wC}^1\otimes x(k+1) \leq & x(k+2) & \leq B_{\wC}^1\stimes x(k+1)\\
		A_{\wB}^0\otimes x(m_1+2) \leq & x(m_1+2) & \leq B_{\wB}^0\stimes x(m_1+2).
		\end{array}
	\end{equation}
	\zor{To analyze the existence of intermittently periodic trajectories, we substitute in~\eqref{eq:example_intermittently}, for all $k\in\dint{1,m_1-1}$, $x(k+2) = \lambda_1^k x(2)$, obtaining: for all $k\in\dint{1,m_1}$,}
	\begin{equation}\label{eq:example_intermittently_trajectory}
		\begin{array}{rcl}
		A_{\wA}^0\otimes x(1) \leq & x(1) & \leq B_{\wA}^0\stimes x(1)\\
		A_{\wA}^1\otimes x(1) \leq & x(2) & \leq B_{\wA}^1\stimes x(1)\\
		A_{\wC}^0\otimes \cancel{\lambda_1^{k-1}} x(2) \leq & \cancel{\lambda_1^{k-1}} x(2) & \leq B_{\wC}^0\stimes \cancel{\lambda_1^{k-1}} x(2)\\
		A_{\wC}^1\otimes \cancel{\lambda_1^{k-1}} x(2) \leq &\lambda_1^{\cancel{k}}  x(2) & \leq B_{\wC}^1\stimes\cancel{\lambda_1^{k-1}}  x(2)\\
		A_{\wC}^1\otimes \lambda_1^{m_1-1} x(2) \leq & x(m_1+2) & \leq B_{\wC}^1\stimes\lambda_1^{m_1-1}  x(2)\\
		A_{\wB}^0\otimes x(m_1+2) \leq & x(m_1+2) & \leq B_{\wB}^0\stimes x(m_1+2).
		\end{array}
	\end{equation}
	\zor{It is possible to get rid of the term $\lambda_1^{m_1-1}$ in the penultimate inequalities by performing a change of variable.
	Let $\xi:\dint{1,3}\rightarrow \R^2$ be defined by}
	\[
		\xi(k) = \begin{dcases}
			x(k) & \mbox{if } k\in\{1,2\},\\
			\lambda_1^{-(m_1-1)} x(m_1+2) & \mbox{if } k=3.
		\end{dcases}
	\]
	\zor{By substituting $\xi$ into~\eqref{eq:example_intermittently_trajectory}, we obtain}
	\[
		\begin{array}{rcl}
		A_{\wA}^0\otimes \xi(1) \leq & \xi(1) & \leq B_{\wA}^0\stimes \xi(1)\\
		A_{\wA}^1\otimes \xi(1) \leq & \xi(2) & \leq B_{\wA}^1\stimes \xi(1)\\
		A_{\wC}^0\otimes \xi(2) \leq & \xi(2) & \leq B_{\wC}^0\stimes \xi(2)\\
		A_{\wC}^1\otimes \xi(2) \leq & \lambda_1 \xi(2) & \leq B_{\wC}^1\stimes\xi(2)\\
		A_{\wC}^1\otimes \cancel{\lambda_1^{m_1-1}} \xi(2) \leq & \cancel{\lambda_1^{m_1-1}} \xi(3) & \leq B_{\wC}^1 \stimes \cancel{\lambda_1^{m_1-1}}  \xi(2)\\
		A_{\wB}^0\otimes \cancel{\lambda_1^{m_1-1}}\xi(3) \leq &\cancel{\lambda_1^{m_1-1}} \xi(3) & \leq B_{\wB}^0\stimes\cancel{\lambda_1^{m_1-1}} \xi(3).
		\end{array}
	\]

\zor{Finally, by rewriting the inequalities in terms of the extended vector $[\xi^\top(1),\ \xi^{\top}(2),\ \xi^{\top}(3)]^\top$ and using Proposition~\ref{pr:max-min}, we can conclude that the set of $\lambda_1$'s for which the SLDIs admit an intermittent periodic trajectory under schedule $w$ coincides with the solution set $\solNCP{\lambda_1 P \oplus \lambda_1^{-1} I \oplus C}$ of the PIC-NCP defined by matrices}
	\[
		P = \begin{bmatrix}
			\mathcal{E} & \mathcal{E} & \mathcal{E} \\
			\mathcal{E} & P_{\wC} & \mathcal{E} \\
			\mathcal{E} & \mathcal{E} & \mathcal{E}
		\end{bmatrix},\quad
		I = \begin{bmatrix}
			\mathcal{E} & \mathcal{E} & \mathcal{E} \\
			\mathcal{E} & I_{\wC} & \mathcal{E} \\
			\mathcal{E} & \mathcal{E} & \mathcal{E}
		\end{bmatrix},\quad
		C = \begin{bmatrix}
			C_{\wA} & P_{\wA} & \mathcal{E} \\
			I_{\wA} & C_{\wC} & P_{\wC} \\
			\mathcal{E} & I_{\wC} & C_{\wB}
		\end{bmatrix},
	\]
	\zor{where $P_\wZ = B_{\wZ}^{1\sharp}$, $I_\wZ = A^1_\wZ$, $C_\wZ = A^0_\wZ \oplus B^{0\sharp}_\wZ$ for all $\wZ\in\{\wA,\wB,\wC\}$.
	Using Algorithm~\ref{al:PIC-NCP} we can compute}
	\[
		\solNCP{\lambda_1 P \oplus \lambda_1^{-1} I \oplus C} = [1,1].
	\]
	\zor{In this simple example, the set of admissible periods coincides with $\solSLDI[\wC]$; at a superficial glance one could think that this is a general fact, i.e., that for any intermittently periodic schedule of the form~\eqref{eq:definition_intermittently_periodic_schedule}, the set of valid periods $\lambda_h$ defined as in~\eqref{eq:intermittently_periodic} is simply $\solSLDI[v_h]$, for all $h\in\dint{1,q}$.
	This is however not the case, as shown in the following example.
}
\end{example}

\newcommand{\tikzmark}[1]{\tikz[overlay,remember picture] \node (#1) {};}%
\newcommand{\DrawBox}[3][]{%
    \tikz[overlay,remember picture]{%
    \draw[#1] ($({#2})+(-0.5em,0.7em)$) rectangle ($({#3})+(0.5em,-0.3em)$);}%
}%

\begin{example}[starving philosophers problem, cont.]\label{ex:3}
	We analyze the existence of intermittently and ultimately periodic trajectories for Example~\ref{ex:starving_philosophers} under schedule
	\[
		w = \winit (\wP_1\wP_1\wP_3\wP_2\wP_4)^{m_1} \wP_1\wP_3\wP_2\wP_4 (\wP_2\wP_4\wP_1\wP_3\wP_3)^{m_2},
	\]
	where $m_1\in\nat$ and $m_2 = +\infty$.
	\zor{For the first $m_1$ dining cycles, schedule $w$ forces the first philosopher to eat twice in a row simultaneously with the third philosopher, who eats once, after which it is the turn of philosophers $2$ and $4$ to eat; the subsequent dining cycle is identical to the previous $m_1$ cycles except that the $1$\textsuperscript{st} philosopher eats only once; thereafter the dining order is the one from Exercise~\ref{ex:starving_philosophers}.}
	Through computations analogous to those seen in \zor{the previous example}, we can show that all periods $\lambda_1$ and $\lambda_2$ (corresponding to the two periodic subschedules $v_1=\wP_1\wP_1\wP_3\wP_2\wP_4$ and $v_2=\wP_2\wP_4\wP_1\wP_3\wP_3$ in $w$) of consistent intermittently periodic trajectories are those in $\solNCP{A(\lambda_1,\lambda_2)} = \solNCP{\lambda_1P_1 \oplus \lambda_1^{-1}I_1 \oplus \lambda_2 P_2 \oplus \lambda_2^{-1}I_2 \oplus C}$, where
	\begin{equation}\label{eq:intermittent_periodic_matrix}
		\scriptstyle A(\lambda_1,\lambda_2) = 
		\begin{bsmallmatrix}
			\phantom{\lambda}\tikzmark{leftdot} C_\winit \tikzmark{rightdot}\phantom{\lambda} & P_\winit & \pazocal{E} & \pazocal{E} & \pazocal{E} & \pazocal{E} & \pazocal{E} & \pazocal{E} & \pazocal{E} & \pazocal{E} & \pazocal{E} & \pazocal{E} & \pazocal{E} & \pazocal{E} & \pazocal{E}\\
			I_\winit & \tikzmark{left} C_{\wP_1} & P_{\wP_1} & \pazocal{E} & \pazocal{E} & \lambda_1^{-1}I_{\wP_4} & \pazocal{E} & \pazocal{E} & \pazocal{E} & \pazocal{E} & \pazocal{E} & \pazocal{E} & \pazocal{E} & \pazocal{E} & \pazocal{E}\\
			\pazocal{E} & I_{\wP_1} & C_{\wP_1} & P_{\wP_1} & \pazocal{E} & \pazocal{E} & \pazocal{E} & \pazocal{E} & \pazocal{E} & \pazocal{E} & \pazocal{E} & \pazocal{E} & \pazocal{E} & \pazocal{E}& \pazocal{E} \\
			\pazocal{E} & \pazocal{E} & I_{\wP_1} & C_{\wP_3} & P_{\wP_3} & \pazocal{E} & \pazocal{E} & \pazocal{E} & \pazocal{E} & \pazocal{E} & \pazocal{E} & \pazocal{E} & \pazocal{E} & \pazocal{E}& \pazocal{E} \\
			\pazocal{E} & \pazocal{E} & \pazocal{E} & I_{\wP_3} & C_{\wP_2} & P_{\wP_2} & \pazocal{E} & \pazocal{E} & \pazocal{E} & \pazocal{E} & \pazocal{E} & \pazocal{E} & \pazocal{E} & \pazocal{E}& \pazocal{E} \\
			\pazocal{E} & \lambda_1P_{\wP_4} & \pazocal{E} & \pazocal{E} & I_{\wP_2} & C_{\wP_4} \tikzmark{right} & P_{\wP_4} & \pazocal{E} & \pazocal{E} & \pazocal{E} & \pazocal{E} & \pazocal{E} & \pazocal{E} & \pazocal{E} & \pazocal{E}\\
			\pazocal{E} & \pazocal{E} & \pazocal{E} & \pazocal{E} & \pazocal{E} & I_{\wP_4} & \phantom{\lambda}\tikzmark{leftdot1} C_{\wP_1}\phantom{\lambda} & P_{\wP_1} & \pazocal{E} & \pazocal{E} & \pazocal{E} & \pazocal{E} & \pazocal{E} & \pazocal{E} & \pazocal{E}\\
			\pazocal{E} & \pazocal{E} & \pazocal{E} & \pazocal{E} & \pazocal{E} & \pazocal{E} & I_{\wP_1} & C_{\wP_3} & P_{\wP_3} & \pazocal{E} & \pazocal{E} & \pazocal{E} & \pazocal{E} & \pazocal{E} & \pazocal{E}\\
			\pazocal{E} & \pazocal{E} & \pazocal{E} & \pazocal{E} & \pazocal{E} & \pazocal{E} & \pazocal{E} & I_{\wP_3} & C_{\wP_2} & P_{\wP_2} & \pazocal{E} & \pazocal{E} & \pazocal{E} & \pazocal{E}& \pazocal{E} \\
			\pazocal{E} & \pazocal{E} & \pazocal{E} & \pazocal{E} & \pazocal{E} & \pazocal{E} & \pazocal{E} & \pazocal{E} & I_{\wP_2} &\phantom{\lambda} C_{\wP_4} \tikzmark{rightdot1}\phantom{\lambda} & P_{\wP_4} & \pazocal{E} & \pazocal{E} & \pazocal{E}& \pazocal{E} \\
			\pazocal{E} & \pazocal{E} & \pazocal{E} & \pazocal{E} & \pazocal{E} & \pazocal{E} & \pazocal{E} & \pazocal{E} & \pazocal{E} & I_{\wP_4} & \tikzmark{left1} C_{\wP_2} & P_{\wP_2} & \pazocal{E} & \pazocal{E} & \lambda_2^{-1}I_{\wP_3} \\
			\pazocal{E} & \pazocal{E} & \pazocal{E} & \pazocal{E} & \pazocal{E} & \pazocal{E} & \pazocal{E} & \pazocal{E} & \pazocal{E} & \pazocal{E} & I_{\wP_2} & C_{\wP_4} & P_{\wP_4} & \pazocal{E} & \pazocal{E} \\
			\pazocal{E} & \pazocal{E} & \pazocal{E} & \pazocal{E} & \pazocal{E} & \pazocal{E} & \pazocal{E} & \pazocal{E} & \pazocal{E} & \pazocal{E} & \pazocal{E} & I_{\wP_4} & C_{\wP_1} & P_{\wP_1} & \pazocal{E} \\
			\pazocal{E} & \pazocal{E} & \pazocal{E} & \pazocal{E} & \pazocal{E} & \pazocal{E} & \pazocal{E} & \pazocal{E} & \pazocal{E} & \pazocal{E} & \pazocal{E} & \pazocal{E} & I_{\wP_{1}} & C_{\wP_3} & P_{\wP_3} \\
			\pazocal{E} & \pazocal{E} & \pazocal{E} & \pazocal{E} & \pazocal{E} & \pazocal{E} & \pazocal{E} & \pazocal{E} & \pazocal{E} & \pazocal{E} & \lambda_2 P_{\wP_3} & \pazocal{E} & \pazocal{E} & I_{\wP_{3}} & C_{\wP_3} \tikzmark{right1}
		\end{bsmallmatrix},
	\end{equation}
	\DrawBox[dashed]{left}{right}%
	\DrawBox[dashed]{left1}{right1}%
	\DrawBox[dotted]{leftdot}{rightdot}%
	\DrawBox[dotted]{leftdot1}{rightdot1}%
	$P_\wZ = B_\wZ^{1\sharp}$, $I_\wZ=A_\wZ^1$, and $C_\wZ = A_\wZ^0\oplus B_\wZ^{0\sharp}$ for all $\wZ\in\{\winit,\wP_1,\wP_2,\wP_3,\wP_4\}$.
	Thus, the existence of intermittently periodic trajectories can be checked by solving an MPIC-NCP.
	For instance, an intermittently periodic trajectory that minimizes the sum of periods $\lambda_1$ and $\lambda_2$ can be found by solving the following linear programming problem\footnote{The problem is written in the max-plus algebra for simplicity. To get the standard linear programming formulation, we refer to Section~\ref{su:MPIC-NCP} and recall that $\lambda_1\otimes \lambda_2$ coincides with $\lambda_1 + \lambda_2$.}:
	\begin{equation}\label{eq:LP_philosophers}
		\begin{array}{cl}
		\displaystyle\min_{x\in\R^{75},(\lambda_1,\lambda_2)\in\R^2_{\geq 0}} \quad & \lambda_1 \otimes \lambda_2 \\
		\mbox{subject to} \quad & A(\lambda_1,\lambda_2) \otimes x \preceq x.
		\end{array}
	\end{equation}
	The constraints of the above problem can be expressed as 212 inequalities (with each inequality corresponding to an element of $A(\lambda_1,\lambda_2)$ different from $-\infty$) in 77 variables.
	Clearly, the optimization problem can be solved efficiently by linear programming solvers such as interior point methods or the simplex method.

	However, since the number of optimization variables and constraints increases (linearly) with the length of subschedules $u_0,\ldots,u_q$ and $v_1,\ldots,v_q$, this method can be impractical for larger problems.
	This becomes especially evident when considering hard scheduling problems such as the following one: find the schedule $w$, from a prescribed set of intermittently periodic schedules, which admits the intermittently periodic trajectory with minimal performance index $\bigotimes_{i=1}^q \lambda_i$.
	The intrinsic NP-hardness of the problem requires the use of enumeration methods for its exact solution, and relying on standard solutions approaches for problems like~\eqref{eq:LP_philosophers} to evaluate the performance index can result in a slow optimization procedure.
	Therefore, it is natural to ask whether an alternative, less expensive technique to solve problems of the form~\eqref{eq:LP_philosophers} exploiting the structure of the matrix in~\eqref{eq:intermittent_periodic_matrix} exists, similarly to what seen in Theorem~\ref{th:improved_algorithm} for periodic schedules; an affirmative answer to this question will be provided in Theorem~\ref{th:intermittent_periodic}.

	Returning to~\eqref{eq:LP_philosophers}, its solution reveals that the optimal value of $\lambda_1 \otimes \lambda_2$ is $19$, corresponding to $\lambda_1 = 11$ and $\lambda_2 = 8$.
	The Gantt chart of Figure~\ref{fi:gantt_philosophers_intermittent} shows a trajectory corresponding to a solution of the optimization problem, for $m_1 = 2$.

	Before concluding the example note that, for the SLDIs $\pazocal{S}$ modeling the starving philosophers problem, $\solSLDI[\wP_2\wP_4\wP_1\wP_3\wP_3] = [7.5,16]$ \zor{(the lower bound is the period of the trajectory shown in the Gantt chart of Figure~\ref{fi:gantt_philosophers})} whereas the value of $\lambda_2$ solving~\eqref{eq:LP_philosophers} is $8> 7.5$.
	This shows a remarkable fact that is exclusive to systems with upper bound constraints: the cycle times valid in intermittently periodic trajectories may be different from those obtained by considering periodic subschedules independently.
	The reason for this phenomenon has to be searched in the structure of the precedence graph of matrix $A(\lambda_1,\lambda_2)$ (schematically illustrated in Figure~\ref{fi:intermittent_graph} on page~\pageref{fi:intermittent_graph}).
	Here the critical circuit (i.e., the circuit with largest weight) can be generated from arcs connecting portions of the graphs related to different regimes (periodic or transient) of the schedule.

	\begin{figure}
		\centering
		\resizebox{1\textwidth}{!}{
		\input{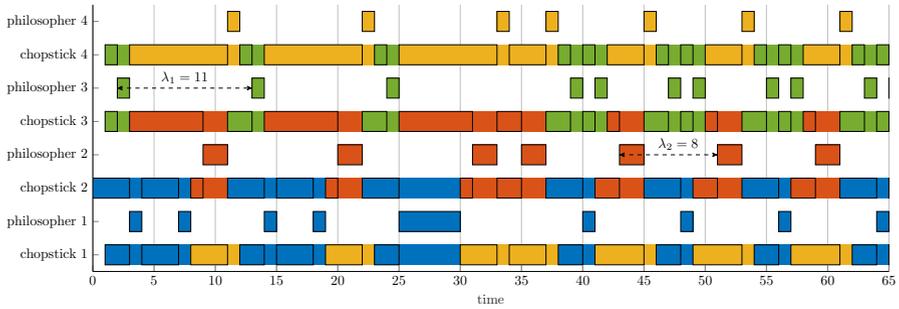}}
		\caption{Gantt chart representing an intermittently periodic trajectory for the starving philosophers problem.}\label{fi:gantt_philosophers_intermittent}
	\end{figure}
\end{example}

The latter two examples suggested the following proposition\zor{, which is proven in Appendix~\ref{ap:Proposition7}}.

\begin{proposition}\label{pr:intermittent_periodic}
The set of periods $(\lambda_1,\ldots,\lambda_q)\in\R_{\geq 0}^q$ of intermittently periodic trajectories that are consistent for a given intermittently periodic schedule coincides with $\solNCP{\bigoplus_{h=1}^q (\lambda_h P_h\oplus \lambda_h^{-1} I_h)\oplus C}$, where $P_h,I_h,C$ are appropriately defined square matrices with $(U_0+\sum_{h=1}^q V_h + U_h)n$ rows and columns.
\end{proposition}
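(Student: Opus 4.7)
The plan is to generalize the computations carried out in the last two examples. Starting from the SLDI dynamics~\eqref{eq:dynamics} specialized to the intermittently periodic schedule $w=u_0v_1^{m_1}u_1\cdots v_q^{m_q}u_q$, I will substitute the intermittently periodic ansatz
\[
    x(K_{h-1}+jV_h+k)=\lambda_h^{j}\,x(K_{h-1}+k),\qquad h\in\dint{1,q},\ j\in\dint{0,m_h-1},\ k\in\dint{1,V_h},
\]
and then collapse the resulting system into a single max-plus linear inequality of the form $A(\lambda_1,\ldots,\lambda_q)\otimes\tilde{x}\preceq \tilde{x}$, to which Propositions~\ref{pr:nonegset_inequality} and~\ref{pr:max-min} can be applied directly.

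First I would separate the inequalities in~\eqref{eq:dynamics} into three groups: (i) inequalities internal to a single periodic block $v_h^{m_h}$, (ii) inequalities internal to a transient block $u_h$, and (iii) boundary inequalities that bridge a block to the next one. For the internal periodic inequalities of group (i), substituting the ansatz and multiplying both sides by $\lambda_h^{-j}$ (as done in equation~\eqref{eq:example_intermittently_trajectory}) eliminates all powers of $\lambda_h$ except $\lambda_h^{\pm 1}$, reducing them to a $v_h$-periodic LDI analogous to the one analyzed in Theorem~\ref{th:improved_algorithm}, augmented by the ``wrap-around'' link between the first and last mode of $v_h$ carrying factors $\lambda_h$ and $\lambda_h^{-1}$. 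For group (ii) the substitution is trivial since $u_h$ contains no repetition. The delicate case is group (iii): the inequalities linking the last occurrence of $v_h^{m_h}$ to the first mode of $u_h$ carry a factor $\lambda_h^{m_h-1}$, and those linking the last mode of $u_{h-1}$ to the first occurrence of $v_h^{m_h}$ carry no power. To bring everything into $\lambda_h^{\pm 1}$-form I would perform a change of variable, replacing each $x(K_{h-1}+m_hV_h+k)$ by $\xi=\lambda_h^{-(m_h-1)}x(K_{h-1}+m_hV_h+k)$, exactly as in the first example of Section~\ref{su:piecewise_schedules}; this absorbs the stray $\lambda_h^{m_h-1}$ on both sides of the boundary inequality.

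After these manipulations, the system becomes a finite family of inequalities involving the extended dater vector $\tilde{x}$ made of one copy of each transient position $U_0,\ldots,U_q$ and one representative copy of each periodic position $V_1,\ldots,V_q$, hence of dimension $\bigl(U_0+\sum_{h=1}^{q}(V_h+U_h)\bigr)n$, matching the claimed size. Using Proposition~\ref{pr:max-min} I would combine each pair of lower/upper bound inequalities (the $A$-type and $B$-type constraints) into a single max-plus inequality, collecting all terms that carry $\lambda_h$ into a matrix $P_h$, all terms that carry $\lambda_h^{-1}$ into a matrix $I_h$, and all remaining constant terms into a matrix $C$. The resulting inequality has exactly the form
\[
    \Bigl(\bigoplus_{h=1}^{q}(\lambda_h P_h\oplus \lambda_h^{-1} I_h)\oplus C\Bigr)\otimes \tilde{x}\preceq \tilde{x}.
\]
By Proposition~\ref{pr:nonegset_inequality}, this inequality admits a solution $\tilde{x}\in\R^{(U_0+\sum_h(V_h+U_h))n}$ if and only if the associated precedence graph has no positive-weight circuits, that is, if and only if $(\lambda_1,\ldots,\lambda_q)\in\solNCP{\bigoplus_{h=1}^{q}(\lambda_h P_h\oplus \lambda_h^{-1} I_h)\oplus C}$.

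The main obstacle I foresee is purely bookkeeping: the sparsity pattern of $P_h$, $I_h$, $C$ must be described explicitly (with $P_h$ and $I_h$ only supported on the two indices joining the endpoints of the block corresponding to $v_h$, as shown in the sketch of matrix $A(\lambda_1,\lambda_2)$ of Example~\ref{ex:3}), so that one can check that the correspondence between consistent intermittently periodic trajectories and solutions $\tilde{x}$ of the reduced inequality is bijective after the change of variable. Once this indexing is fixed, the rest of the argument is a direct application of the tools already used in Example~\ref{ex:2} and Section~\ref{su:LDIs}.
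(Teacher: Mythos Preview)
Your approach is essentially the one the paper takes in Appendix~\ref{ap:Proposition7}: write out the SLDI inequalities block by block, substitute the periodic ansatz, cancel the $\lambda_h^{j}$ factors, perform a change of variable to kill the residual powers $\lambda_h^{m_h-1}$, stack into an extended vector, and apply Propositions~\ref{pr:max-min} and~\ref{pr:nonegset_inequality}.

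There is one point where your sketch, as written, would fail for $q\geq 2$. You propose to shift only the positions inside $u_h$ by $\lambda_h^{-(m_h-1)}$. This does absorb the stray power on the boundary $v_h^{m_h}\to u_h$, but it \emph{reintroduces} a factor $\lambda_h^{m_h-1}$ on the next boundary $u_h\to v_{h+1}^{m_{h+1}}$, since the first position of $v_{h+1}$ is left unshifted. The fix---and this is what the paper does---is to make the shift \emph{cumulative}: every position from $u_h$ onward (including all of $v_{h+1}^{m_{h+1}}, u_{h+1},\ldots$) must be multiplied by $\bigotimes_{j=1}^{h}\lambda_j^{-(m_j-1)}$. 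With this cumulative rescaling all boundary inequalities become $\lambda$-free, and only the internal wrap-around links of each $v_h$-block carry $\lambda_h^{\pm 1}$, exactly as you want. This is precisely the ``bookkeeping'' you flag, but it is not optional; without it the reduction does not produce an MPIC-NCP.
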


Similarly to Theorem~\ref{th:improved_algorithm}, it is possible to reduce the complexity of the problem by leveraging the sparsity and modularity of matrix $\bigoplus_{h=1}^q (\lambda_h P_h\oplus \lambda_h^{-1} I_h)\oplus C$.
With the term modularity, we mean the recognizable block structure of the matrix; for an illustrative example see the dashed and dotted blocks in matrix $A(\lambda_1,\lambda_2)$ of~\eqref{eq:intermittent_periodic_matrix}.
The main result, which is based on an extensive application of Theorem~\ref{th:improved_algorithm}, is stated below.
Details of this result and its proof is provided for the particular case of Example~\ref{ex:3} in Appendix~\ref{ap:intermittent_periodic_improved}.

\begin{theorem}\label{th:intermittent_periodic}
The MPIC-NCP of Proposition~\ref{pr:intermittent_periodic} can be transformed into an equivalent one where the precedence graph to be studied has $qn$ nodes (instead of $(U_0+\sum_{h=1}^q V_h + U_h)n$).
The reduction requires $\pazocal{O}\left(\left(U_0+\sum_{h=1}^q V_h + U_h\right)n^3\right)$ operations.
\end{theorem}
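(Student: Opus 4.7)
The plan is to iteratively collapse the $(U_0+\sum_{h=1}^q (V_h+U_h))n$-node precedence graph of Proposition~\ref{pr:intermittent_periodic} into a $qn$-node one, handling each periodic block by Theorem~\ref{th:improved_algorithm} and each transient block by standard Kleene-star elimination. The key observation is that the matrix $\bigoplus_{h=1}^q (\lambda_h P_h \oplus \lambda_h^{-1} I_h) \oplus C$ has a banded block structure (clearly illustrated by~\eqref{eq:intermittent_periodic_matrix}): along the diagonal it is partitioned into $q$ "periodic bands", each of size $V_h n$ and carrying the parameter-dependent back-edge $\lambda_h P_h$ / forward-edge $\lambda_h^{-1} I_h$ pair, interleaved with $q+1$ "transient bands" of sizes $U_0 n, U_1 n, \ldots, U_q n$ that contain no parameters. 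Adjacent bands are coupled only through constant $(I,P)$ blocks of the appropriate characteristic matrices. The goal of the reduction is to eliminate the $U_h$-blocks completely and to collapse each $V_h$-block to a single $n \times n$ representative preserving the NCP.

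First, I would process each periodic band in isolation. Fixing $h$, I would observe that the submatrix corresponding to the $m_h$ repetitions of $v_h$, together with its $\lambda_h$-weighted wrap-around, is structurally identical to the matrix $\lambda \MP_v \oplus \lambda^{-1} \MI_v \oplus \MC_v$ of Proposition~\ref{pr:SLDI-P-TEGs}. Thus Theorem~\ref{th:improved_algorithm} directly provides matrices $\mathbb{M}^P_h$, $\mathbb{M}^I_h$, $\mathbb{M}^C_h$ of size $n \times n$ such that the NCP for the band with parameter $\lambda_h$ is equivalent to $\solPIC[\mathbb{M}^P_h][\mathbb{M}^I_h][\mathbb{M}^C_h]$. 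However, I cannot apply this blindly, because the periodic band is not isolated in the global graph: it is joined to its neighboring transient bands through constant $(I,P)$ arcs. I would therefore use the Kleene-star compression formulas of Theorem~\ref{th:improved_algorithm} with the boundary $C$-blocks suitably augmented so that the reduced $n \times n$ block also records the paths entering and leaving the band. Second, for each transient band $u_h$, since no parameter $\lambda_h$ appears, I would simply form the Kleene star of its constant submatrix and splice it into the two adjacent blocks via the identity $(C^*)$-absorption used in the preprocessing lines $3$--$4$ of Algorithm~\ref{al:PIC-NCP}; this eliminates the $U_h n$ internal nodes at a cost of $\pazocal{O}(U_h n^3)$ Floyd--Warshall-style operations, again preserving circuit weights between the remaining nodes.

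For correctness, I would argue inductively: after eliminating one band, the global graph's circuit-weight structure is unchanged for every circuit crossing that band, by Proposition~\ref{pr:nonegset_inequality} and the standard fact that $(A^*)_{ij}$ equals the largest weight of all paths from $j$ to $i$. Hence the set $\solNCP{\cdot}$ of the remaining (smaller) matrix coincides with that of the original one. After $q$ periodic compressions and $q{+}1$ transient eliminations, one is left with $q$ blocks of size $n$, each carrying one pair $(\mathbb{M}^P_h, \mathbb{M}^I_h)$ of parameterized blocks together with a common constant matrix assembled from the leftover cross-terms, i.e.\ an MPIC-NCP on a graph with $qn$ nodes. Complexity-wise, each periodic compression costs $\pazocal{O}(V_h n^3)$ (the estimate already derived after Theorem~\ref{th:improved_algorithm}) and each transient elimination costs $\pazocal{O}(U_h n^3)$; summing yields the claimed $\pazocal{O}((U_0+\sum_{h=1}^q V_h + U_h)n^3)$.

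The main obstacle is bookkeeping at the band boundaries: the proof of Theorem~\ref{th:improved_algorithm} was carried out for a single cyclic schedule with a single parameter and matching "wrap-around", so extending it to an open band that must hand off paths to adjacent transient blocks (which then hand off to other periodic bands with different parameters) requires verifying that the $(\mathbb{C}^P_h, \mathbb{C}^I_h)$ closures stay well defined (i.e.\ belong to $\nonegset$) when computed in the enlarged context, and that the order in which transient and periodic reductions are performed does not affect the result. The appendix carries this out explicitly on the matrix $A(\lambda_1,\lambda_2)$ of Example~\ref{ex:3}, and a direct induction on $q$ then yields the general statement.
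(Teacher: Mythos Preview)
Your band-by-band plan can be made to work, but it is not the route the paper takes, and the paper's route neatly sidesteps exactly the boundary-bookkeeping obstacle you flagged. Rather than iteratively applying Theorem~\ref{th:improved_algorithm} to each periodic block, the paper performs a single global Schur complement: it selects one $n$-block per periodic subschedule (the first block of each $v_h$) as an \emph{anchor}, permutes these $q$ anchors to the top-left so that the big matrix becomes $\begin{bsmallmatrix}a&b\\c&d\end{bsmallmatrix}$ with $a$ of size $qn\times qn$, and then applies Lemma~\ref{le:aux1} once to obtain $\pazocal{M}^{11}=a^*(a^*bd^*ca^*)^*a^*$. The reduced MPIC-NCP therefore lives on the $qn\times qn$ matrix $a^*bd^*ca^*$. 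The point is that $a$ and $d$ are parameter-free while all $\lambda_h^{\pm1}$-edges sit in $b$ and $c$; moreover, removing the anchors disconnects $\graph(d)$ into a disjoint union of tridiagonal chains, so the needed blocks of $d^*$ are read off directly from Remarks~\ref{re:auxii}--\ref{re:auxi2} in linear time in $U_0+\sum_h(V_h+U_h)$. In other words, Theorem~\ref{th:improved_algorithm} enters through the closed-form tridiagonal Kleene-star identities in its \emph{proof}, not as a black box on each band; your proposal to invoke Theorem~\ref{th:improved_algorithm} per band would have to reproduce precisely those path-weight formulas to splice bands together.

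One point in your description is slightly off and worth correcting: the cross-terms between the $q$ surviving blocks are \emph{not} in general constant. A path from the anchor of band $h$ to the anchor of a neighboring band can use the $\lambda_h$-weighted wrap-around edge on its way through the rest of $v_h$, so the off-diagonal blocks of the reduced matrix can carry $\lambda_h^{\pm1}$. What guarantees the reduced problem is still of MPIC form is that $d^*$ is parameter-free and each entry of $b$ (respectively $c$) contains at most a single factor $\lambda_h^{-1}$ (respectively $\lambda_h$), with the $\lambda_h$-entries of $b$ and $c$ landing in the component of $\graph(d)$ containing the rest of $v_h$; since these components are disjoint for distinct $h$, no mixed products $\lambda_i^{\pm1}\lambda_j^{\pm1}$ with $i\neq j$ can arise in $bd^*c$.
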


Observe that performing the reduction takes a number of operations that is linear in the sum of the lengths of subschedules $u_0,\ldots,u_q$ and $v_1,\ldots,v_q$.
As no linear programming solver with linear worst-case complexity has ever been found, the advantage of the reduction becomes more prominent with longer subschedules.

\begin{example}[starving philosophers problem, cont.]

Let $A(\lambda_1,\lambda_2)$ be as in~\eqref{eq:intermittent_periodic_matrix}.
Applying Theorem~\ref{th:intermittent_periodic} as explained in Appendix~\ref{ap:intermittent_periodic_improved}, it can be shown that $\graph(A(\lambda_1,\lambda_2))\in\nonegset$ if and only if $\graph(\widetilde{A}(\lambda_1,\lambda_2))\in\nonegset$, where 
\[
	\widetilde{A}(\lambda_1,\lambda_2) = \lambda_1 P_1 \oplus \lambda_1^{-1}I_1 \oplus \lambda_2 P_2 \oplus \lambda_2^{-1} I_2 \oplus C,
\]
and $P_1,I_1,P_2,I_2,C$ are matrices of dimension $10\times 10$ with coefficients in $\Rmax$.
In particular, the linear programming problem in~\eqref{eq:LP_philosophers} has the same optimal value as the following one:
\begin{equation}\label{eq:LP_philosophers_improved}
	\begin{array}{cl}
	\displaystyle\min_{x\in\R^{10},(\lambda_1,\lambda_2)\in\R^2_{\geq 0}} \quad & \lambda_1 \otimes \lambda_2 \\
	\mbox{subject to} \quad & \widetilde{A}(\lambda_1,\lambda_2) \otimes x \preceq x.
	\end{array}
\end{equation}
It can be verified that the constraints of the above problem can be expressed as 146 inequalities in 12 variables.
Hence, compared with the constraints in~\eqref{eq:LP_philosophers}, we have a $31\%$ reduction in the number of inequalities and $84\%$ reduction in the number of variables.
\end{example}

\section{Practically motivated example}\label{se:example}

The example we present is a multi-product processing network taken from~\cite{KATS20081196}.
Examples of such networks are electroplating lines and cluster tools.
Consider a manufacturing system consisting of 5 processing stations $S_1,\ldots,S_5$ and a robot of capacity one.
The system can treat two types of parts, part $\wA$, which requires to be processed on $S_1$, $S_3$, and $S_5$ (in this order), and part $\wB$, which must follow route $S_2$, $S_1$, $S_4$, $S_5$.
The task of the robot is to transport parts of type $\wA$ and $\wB$ from an input storage $S_0$ to their first processing stations, between the processing stations (in the right order), and finally from the last processing station to an output storage $S_6$.
The time the robot needs to travel from $S_i$ to $S_j$ is $\tau_{ij}$ when it is not carrying any part, and $\tau_{ij}^\wZ$ when it is carrying part $\wZ\in\{\wA,\wB\}$.
Moreover, the processing time for part $\wZ$ in station $S_i$ must be within the interval $\iota_i^\wZ=[L_{i}^\wZ,R_{i}^\wZ]\subset \R_{\geq 0}$.

We consider the following parameters for the processing network: $\tau_{ij} = |i-j|$, $\tau_{ij}^\wA = \tau_{ij}+1$, $\tau_{ij}^\wB=\tau_{ij}+2$, $\iota_1^\wA=[10,15]$, $\iota_3^\wA=[40,140]$, $\iota_5^\wA=[20,30]$, $\iota_2^\wB=[50,150]$, $\iota_1^\wB=[10,20]$, $\iota_4^\wB=[30,150]$, $\iota_5^\wB=[20,30]$.

\subsection{Cycle time analysis}\label{su:cycle_time_analysis}

A classical scheduling problem in this kind of processing networks is to find a periodic robot operation sequence that minimizes the cycle time and avoids time-window constraints violations; in the literature, this is referred to as the \textit{single-robot} (or hoist) \textit{cyclic scheduling problem}~\cite{KATS20081196}.
Such an optimization problem, which is strongly NP-hard, can be divided into two subproblems:
\begin{enumerate}[label=\textbf{P\arabic*}]
	\item\label{en:P1} the cycle time minimization, given a fixed sequence of robot operations,
	\item\label{en:P2} the search for the optimal robot operations sequence.  
\end{enumerate}
In this section we focus on~\ref{en:P1}, which is polynomial-time solvable: clearly, an efficient algorithm for~\ref{en:P1} can be used as subroutine in search procedures to solve the more general scheduling problem (\ref{en:P1}+\ref{en:P2}).

To find the cycle time for a given robot operation sequence, we will firstly assume, as is standard in robotic cyclic scheduling problems~\cite{KATS20081196,levner2010complexity,kats2002cyclic}, that the system is already treating parts in a periodic manner from the initial time; the application of Theorem~\ref{th:improved_algorithm} will then provide the possible periods of such treatment plans.
In particular, we suppose that initially station $S_3$ is processing a part of type $\wA$, and $S_2$, $S_4$ are processing parts of type $\wB$.
Of course, this assumption is not met by real systems at start-up time (when all stations are empty), and will thus be relaxed in the next section.

We denote by $S_i\xrightarrow{\wZ} S_j$ robot operation "unload a part of type $\wZ$ from $S_i$, transport it to and load it into $S_j$" and by $\rightarrow S_j$ operation "travel from the current location to $S_j$ and wait if necessary".
A schedule for this process is a sequence of modes $w\in\{\wA,\wB\}^\omega$, where mode $\wA$ represents the sequence of operations
\[
	\rightarrow S_3, S_3\xrightarrow{\wA} S_5,\rightarrow S_0,S_0 \xrightarrow{\wA} S_1,\rightarrow S_5,S_5 \xrightarrow{\wA} S_6,\rightarrow S_1,S_1 \xrightarrow{\wA} S_3
\]
and mode $\wB$ represents 
\[
	\rightarrow S_4, S_4\! \xrightarrow{\wB} S_5, \rightarrow S_2, S_2\! \xrightarrow{\wB} S_1, \rightarrow S_5, S_5\! \xrightarrow{\wB} S_6, \rightarrow S_0, S_0\! \xrightarrow{\wB} S_2, \rightarrow S_1, S_1\! \xrightarrow{\wB} S_4.
\]
Initially, the robot is positioned at $S_3$ if $w_1=\wA$ or at $S_4$ if $w_1=\wB$.

\begin{figure*}[ht]
	\centering
\subfigure[$\mbox{P-TEG}_\wA$ for parts of type $\wA$.]{
		\centering
		\resizebox{.8\textwidth}{!}{
		\begin{tikzpicture}[node distance=.5cm and .8cm,>=stealth',bend angle=30,thick]
\scriptsize
\node[transitionV,label=below:{$t_0$}] (t0) {};
\node[place,right=of t0,label=above:{$[\tau_{01}^\wA,\infty]$}] (p01) {};
\node[transitionV,right=of p01,label=below:{$t_{1in}$}] (t1in) {};
\node[place,myred,right=of t1in,label=below:{$[L_1^\wA,R_1^\wA]$}] (p1) {};
\node[transitionV,right=of p1,label=below:{$t_{1out}$}] (t1out) {};
\node[place,right=of t1out,label=above:{$[\tau_{13}^\wA,\infty]$}] (p13) {};
\node[transitionV,right=of p13,label=above:{$t_{3in}$}] (t3in) {};
\node[place,myred,tokens=1,right=of t3in,label=above:{$[L_3^\wA,R_3^\wA]$}] (p3) {};
\node[transitionV,right=of p3,label=above:{$t_{3out}$}] (t3out) {};
\node[place,right=of t3out,label=above:{$[\tau_{35}^\wA,\infty]$}] (p35) {};
\node[transitionV,right=of p35,label=below:{$t_{5in}$}] (t5in) {};
\node[place,myred,right=of t5in,label=below:{$[L_5^\wA,R_5^\wA]$}] (p5) {};
\node[transitionV,right=of p5,label=below:{$t_{5out}$}] (t5out) {};
\node[place,right=of t5out,label=above:{$[\tau_{56}^\wA,\infty]$}] (p56) {};
\node[transitionV,right=of p56,label=below:{$t_6$}] (t6) {};

\node[place,myblue,tokens=1,below=.4cm of p3,label=above:{$[\tau_{33},\infty]$}] (p33r) {};
\node[place,myblue,below=1cm of p13,label=above:{$[\tau_{50},\infty]$}] (p50r) {};
\node[place,myblue,above=of p3,label=above:{$[\tau_{15},\infty]$}] (p15r) {};
\node[place,myblue,below=1cm of p35,label=above:{$[\tau_{61},\infty]$}] (p61r) {};

\draw (t0) edge[->] (p01);
\draw (p01) edge[->] (t1in);
\draw (t1in) edge[->] (p1);
\draw (p1) edge[->] (t1out);
\draw (t1out) edge[->] (p13);
\draw (p13) edge[->] (t3in);
\draw (t3in) edge[->] (p3);
\draw (p3) edge[->] (t3out);
\draw (t3out) edge[->] (p35);
\draw (p35) edge[->] (t5in);
\draw (t5in) edge[->] (p5);
\draw (p5) edge[->] (t5out);
\draw (t5out) edge[->] (p56);
\draw (p56) edge[->] (t6);

\draw (t3in.-75) edge[bend right=10,->] (p33r);
\draw (p33r) edge[bend right=10,->] (t3out.180+75);
\draw (t5in.180+75) edge[bend left=10,->] (p50r);
\draw (p50r) edge[bend left=10,->] (t0.-75);
\draw (t1in.75) edge[bend left=10,->] (p15r);
\draw (p15r) edge[bend left=10,->] (t5out.180-75);
\draw (t6.180+75) edge[bend left=10,->] (p61r);
\draw (p61r) edge[bend left=10,->] (t1out.-75);

\end{tikzpicture}
		}
		\label{sub:1}}
	\subfigure[$\mbox{P-TEG}_\wB$ for parts of type $\wB$.]{
		\centering
		\resizebox{1\textwidth}{!}{
		\begin{tikzpicture}[node distance=.5cm and .8cm,>=stealth',bend angle=30,thick]
\scriptsize
\node[transitionV,label=below:{$t_0$}] (t0) {};
\node[place,right=of t0,label=above:{$[\tau_{02}^\wB,\infty]$}] (p02) {};
\node[transitionV,right=of p02,label=below:{$t_{2in}$}] (t2in) {};
\node[place,myred,tokens=1,right=of t2in,label=below:{$[L_2^\wB,R_2^\wB]$}] (p2) {};
\node[transitionV,right=of p2,label=below:{$t_{2out}$}] (t2out) {};
\node[place,right=of t2out,label=above:{$[\tau_{21}^\wB,\infty]$}] (p21) {};
\node[transitionV,right=of p21,label=below:{$t_{1in}$}] (t1in) {};
\node[place,myred,right=of t1in,label=below:{$[L_1^\wB,R_1^\wB]$}] (p1) {};
\node[transitionV,right=of p1,label=below:{$t_{1out}$}] (t1out) {};
\node[place,right=of t1out,label=above:{$[\tau_{14}^\wB,\infty]$}] (p14) {};
\node[transitionV,right=of p14,label=above:{$t_{4in}$}] (t4in) {};
\node[place,myred,tokens=1,right=of t4in,label=above:{$[L_4^\wB,R_4^\wB]$}] (p4) {};
\node[transitionV,right=of p4,label=above:{$t_{4out}$}] (t4out) {};
\node[place,right=of t4out,label=above:{$[\tau_{45}^\wB,\infty]$}] (p45) {};
\node[transitionV,right=of p45,label=below:{$t_{5in}$}] (t5in) {};
\node[place,myred,right=of t5in,label=below:{$[L_5^\wB,R_5^\wB]$}] (p5) {};
\node[transitionV,right=of p5,label=below:{$t_{5out}$}] (t5out) {};
\node[place,right=of t5out,label=above:{$[\tau_{56}^\wB,\infty]$}] (p56) {};
\node[transitionV,right=of p56,label=below:{$t_6$}] (t6) {};

\node[place,myblue,tokens=1,below=.4cm of p4,label=above:{$[\tau_{44},\infty]$}] (p44r) {};
\node[place,myblue,below=1cm of p14,label=above:{$[\tau_{52},\infty]$}] (p52r) {};
\node[place,myblue,above=of p21,label=above:{$[\tau_{21},\infty]$}] (p21r) {};
\node[place,myblue,above=of p4,label=above:{$[\tau_{15},\infty]$}] (p15r) {};
\node[place,myblue,below=.4cm of p52r,label=above:{$[\tau_{60},\infty]$}] (p60r) {};

\draw (t0) edge[->] (p02);
\draw (p02) edge[->] (t2in);
\draw (t2in) edge[->] (p2);
\draw (p2) edge[->] (t2out);
\draw (t2out) edge[->] (p21);
\draw (p21) edge[->] (t1in);
\draw (t1in) edge[->] (p1);
\draw (p1) edge[->] (t1out);
\draw (t1out) edge[->] (p14);
\draw (p14) edge[->] (t4in);
\draw (t4in) edge[->] (p4);
\draw (p4) edge[->] (t4out);
\draw (t4out) edge[->] (p45);
\draw (p45) edge[->] (t5in);
\draw (t5in) edge[->] (p5);
\draw (p5) edge[->] (t5out);
\draw (t5out) edge[->] (p56);
\draw (p56) edge[->] (t6);

\draw (t4in.-75) edge[bend right=10,->] (p44r);
\draw (p44r) edge[bend right=10,->] (t4out.180+75);
\draw (t5in.180+75) edge[bend left=10,->] (p52r);
\draw (p52r) edge[bend left=10,->] (t2out.-75);
\draw (t1in.75) edge[bend left=10,->] (p15r);
\draw (p15r) edge[bend left=10,->] (t5out.180-75);
\draw (t6.180+75) edge[bend left=10,->] (p60r);
\draw (p60r) edge[bend left=10,->] (t0.-75);
\draw (t2in.75) edge[bend left=10,->] (p21r);
\draw (p21r) edge[bend left=10,->] (t1out.180-75);

\end{tikzpicture}
		}
		\label{sub:2}}
	\caption{P-TEGs modeling the processing network considering only types of one part.
	A token in a place colored \textcolor{myred}{\textbf{red}} and \textcolor{black}{\textbf{black}}, respectively \textcolor{myblue}{\textbf{blue}}, represents a part being processed in a station and the robot moving with, respectively without carrying a part.}
	\label{fig:P-TEGs_processing_network}
\end{figure*}

Let us first model the processing network when only part $\wA$, respectively, $\wB$ is considered.
In this way, we obtain two P-TEGs, $\mbox{P-TEG}_\wA$ and $\mbox{P-TEG}_\wB$ (shown in Figure~\ref{fig:P-TEGs_processing_network}), each of which represents the behavior of the system when processing only parts of one type.
Using Algorithm~\ref{al:PIC-NCP}, we can find that the cycle times of the network when processing only parts of type $\wA$, $\wB$ are all values in $[73,+\infty[$, and $[72,192]$, respectively.
Now, from the obtained P-TEGs, we can model the processing network in the case where both part-types are considered as SLDIs $\pazocal{S}=(\{\wA,\wB\},A^0,A^1,B^0,B^1)$.
To do so, we must define matrices $A^0_\wZ,A^1_\wZ,B^0_\wZ,B^1_\wZ\in\Rmax^{n\times n}$ for $\wZ\in\{\wA,\wB\}$ appropriately: we start by adding in $\mbox{P-TEG}_\wA$ (respectively, $\mbox{P-TEG}_\wB$) the missing transitions from $\mbox{P-TEG}_\wB$ (respectively, $\mbox{P-TEG}_\wA$) -- the obtained P-TEGs have both $n = 12$ transitions (in  general, $n=2+2\ \times$ number of processing stations).
For each new transition $t_i$ of $\mbox{P-TEG}_\wZ$, we define $(\MA^1_{\wZ})_{ii}=(\MB^1_{\wZ})_{ii} = 0$; this is done to store in auxiliary variables $x_i(k)$ the last entrance and exit times of parts in stations that are not used in mode $\wZ$.
Moreover, to model the transportation of the robot from $S_3$ to $S_4$ (respectively, from $S_4$ to $S_3$) after each switching of mode from $\wA$ to $\wB$ (respectively, from $\wB$ to $\wA$), we set $(\MA^1_\wA)_{4out,3in}=\tau_{34}$ (respectively, $(\MA^1_\wB)_{3out,4in}=\tau_{43}$).
The other elements of $\MA^0_\wZ,\MA^1_\wZ,\MB^0_\wZ,\MB^1_\wZ$ are taken from the characteristic matrices of $\mbox{P-TEG}_\wZ$, for $\wZ\in\{\wA,\wB\}$.
\setcounter{MaxMatrixCols}{14}
This modeling procedure results in the following matrices for mode $\wA$:
\[
	A^0_\wA \oplus B^{0\sharp}_\wA = 
	\begin{bNiceMatrix}[small,r,first-row,last-col=13]
	0 & 1in & 1out & 2in & 2out & 3in & 3out & 4in & 4out & 5in & 5out & 6 & \\
-\infty & -\infty & -\infty & -\infty & -\infty & -\infty & -\infty & -\infty & -\infty &    5 & -\infty & -\infty & 0\\
2 &  -\infty & -15 & -\infty & -\infty & -\infty & -\infty & -\infty & -\infty & -\infty & -\infty & -\infty & 1in\\
-\infty &  10 & -\infty & -\infty & -\infty & -\infty & -\infty & -\infty & -\infty & -\infty & -\infty &    5 & 1out\\
-\infty & -\infty & -\infty & -\infty & -\infty & -\infty & -\infty & -\infty & -\infty & -\infty & -\infty & -\infty & 2in \\
-\infty & -\infty & -\infty & -\infty & -\infty & -\infty & -\infty & -\infty & -\infty & -\infty & -\infty & -\infty & 2out \\
-\infty & -\infty &    3 & -\infty & -\infty & -\infty & -\infty & -\infty & -\infty & -\infty & -\infty & -\infty & 3in \\
-\infty & -\infty & -\infty & -\infty & -\infty & -\infty & -\infty & -\infty & -\infty & -\infty & -\infty & -\infty & 3out\\
-\infty & -\infty & -\infty & -\infty & -\infty & -\infty & -\infty & -\infty & -\infty & -\infty & -\infty & -\infty & 4in\\
-\infty & -\infty & -\infty & -\infty & -\infty & -\infty & -\infty & -\infty & -\infty & -\infty & -\infty & -\infty & 4out \\
-\infty & -\infty & -\infty & -\infty & -\infty & -\infty &    3 & -\infty & -\infty & -\infty & -30 & -\infty & 5in\\
-\infty &    4 & -\infty & -\infty & -\infty & -\infty & -\infty & -\infty & -\infty &   20 & -\infty & -\infty & 5out\\
-\infty & -\infty & -\infty & -\infty & -\infty & -\infty & -\infty & -\infty & -\infty & -\infty &   2  & -\infty & 6
	\end{bNiceMatrix},
\] 
\[
	A^1_\wA = 
	\begin{bNiceMatrix}[small,r,first-row,last-col=13]
	0 & 1in & 1out & 2in & 2out & 3in & 3out & 4in & 4out & 5in & 5out & 6 & \\
  -\infty & -\infty & -\infty & -\infty & -\infty & -\infty & -\infty & -\infty & -\infty & -\infty & -\infty & -\infty & 0\\
    -\infty & -\infty & -\infty & -\infty & -\infty & -\infty & -\infty & -\infty & -\infty & -\infty & -\infty & -\infty & 1in\\
  -\infty & -\infty & -\infty & -\infty & -\infty & -\infty & -\infty & -\infty & -\infty & -\infty & -\infty & -\infty & 1out\\
  -\infty & -\infty & -\infty &    0 & -\infty & -\infty & -\infty & -\infty & -\infty & -\infty & -\infty & -\infty & 2in\\
  -\infty & -\infty & -\infty & -\infty &    0 & -\infty & -\infty & -\infty & -\infty & -\infty & -\infty & -\infty & 2out\\
  -\infty & -\infty & -\infty & -\infty & -\infty & -\infty & -\infty & -\infty & -\infty & -\infty & -\infty & -\infty & 3in\\
  -\infty & -\infty & -\infty & -\infty & -\infty &   40 & -\infty & -\infty & -\infty & -\infty & -\infty & -\infty & 3out\\
  -\infty & -\infty & -\infty & -\infty & -\infty & -\infty & -\infty &    0 & -\infty & -\infty & -\infty & -\infty & 4in\\
  -\infty & -\infty & -\infty & -\infty & -\infty &    1 & -\infty & -\infty &    0 & -\infty & -\infty & -\infty & 4out\\
  -\infty & -\infty & -\infty & -\infty & -\infty & -\infty & -\infty & -\infty & -\infty & -\infty & -\infty & -\infty & 5in\\
  -\infty & -\infty & -\infty & -\infty & -\infty & -\infty & -\infty & -\infty & -\infty & -\infty & -\infty & -\infty & 5out\\
  -\infty & -\infty & -\infty & -\infty & -\infty & -\infty & -\infty & -\infty & -\infty & -\infty & -\infty & -\infty & 6
	\end{bNiceMatrix},
\] 
\[
	B^{1\sharp}_\wA = 
	\begin{bNiceMatrix}[small,r,first-row,last-col=13]
	0 & 1in & 1out & 2in & 2out & 3in & 3out & 4in & 4out & 5in & 5out & 6 & \\
   -\infty &  -\infty &  -\infty &  -\infty &  -\infty &  -\infty &  -\infty &  -\infty &  -\infty &  -\infty &  -\infty &  -\infty & 0\\
   -\infty &  -\infty &  -\infty &  -\infty &  -\infty &  -\infty &  -\infty &  -\infty &  -\infty &  -\infty &  -\infty &  -\infty & 1in\\
   -\infty &  -\infty &  -\infty &  -\infty &  -\infty &  -\infty &  -\infty &  -\infty &  -\infty &  -\infty &  -\infty &  -\infty & 1out\\
   -\infty &  -\infty &  -\infty &    0 &  -\infty &  -\infty &  -\infty &  -\infty &  -\infty &  -\infty &  -\infty &  -\infty & 2in\\
   -\infty &  -\infty &  -\infty &  -\infty &    0 &  -\infty &  -\infty &  -\infty &  -\infty &  -\infty &  -\infty &  -\infty & 2out\\
   -\infty &  -\infty &  -\infty &  -\infty &  -\infty &  -\infty &  -140 &  -\infty &  -\infty &  -\infty &  -\infty &  -\infty & 3in\\
   -\infty &  -\infty &  -\infty &  -\infty &  -\infty & -\infty &  -\infty &  -\infty &  -\infty &  -\infty &  -\infty &  -\infty & 3out\\
   -\infty &  -\infty &  -\infty &  -\infty &  -\infty &  -\infty &  -\infty &    0 &  -\infty &  -\infty &  -\infty &  -\infty & 4in\\
   -\infty &  -\infty &  -\infty &  -\infty &  -\infty &  -\infty &  -\infty &  -\infty &    0 &  -\infty &  -\infty &  -\infty & 4out\\
   -\infty &  -\infty &  -\infty &  -\infty &  -\infty &  -\infty &  -\infty &  -\infty &  -\infty &  -\infty &  -\infty &  -\infty & 5in\\
   -\infty &  -\infty &  -\infty &  -\infty &  -\infty &  -\infty &  -\infty &  -\infty &  -\infty &  -\infty &  -\infty &  -\infty & 5out\\
   -\infty &  -\infty &  -\infty &  -\infty &  -\infty &  -\infty &  -\infty &  -\infty &  -\infty &  -\infty &  -\infty &  -\infty & 6
	\end{bNiceMatrix}
\]
The modeling effort required to define $\pazocal{S}$ is repaid by the possibility to use Theorem~\ref{th:improved_algorithm} for computing the minimum and maximum cycle times corresponding to a schedule $w=v^K$, for $K\in\nat\cup\{+\infty\}$.
For instance, we get $\solSLDI[\wB\wA] = [77,192]$.
This means that, using schedule $(\wB\wA)^\omega$, the time between subsequent completions of a product of the same type is at least $72$ and at most $192$ time units.

To appreciate the advantage of using the algorithm derived from Theorem~\ref{th:improved_algorithm}, in Figure~\ref{fig:comparison} we compare the computational time to obtain $\solSLDI$ with increasing subschedule length $V$ using Theorem~\ref{th:improved_algorithm} and other methods; specifically, we also consider the algorithm derived from Proposition~\ref{pr:SLDI-P-TEGs} directly, the algorithm developed in~\cite{KATS20081196}, and a linear programming solver.
The implementations were made on Matlab R2019a, and for solving the linear programs we used CPLEX's dual simplex method; the tests were executed on a PC with an Intel i7 processor at 2.20Ghz.
From the results, we can see that the most time-consuming approach is the one using Proposition~\ref{pr:SLDI-P-TEGs} directly, while the algorithm from Theorem~\ref{th:improved_algorithm} achieves the fastest computation.
The advantage becomes more evident with larger subschedule lengths: for instance, when $V = 300$, the dual simplex method takes $11.0\cdot 10^{-2}$ seconds to solve the problem, whereas the algorithm derived from Theorem~\ref{th:improved_algorithm} only $1.25 \cdot 10^{-2}$ seconds.
Such computational time reduction may have considerable impact for the solution of cyclic scheduling problems.

\begin{figure}
\centering
	\resizebox{\linewidth}{!}{
    \input{figures/comparison.tex}
    }
	\caption{Time to compute $\solSLDI$ for increasing values of $V$ using different methods.}
	\label{fig:comparison}
\end{figure}

\subsection{Considering start-up and shut-down transients}

At the start-up, the system cannot follow the periodic trajectories found in the previous section as all stations are initially empty.
Moreover, the periodic workflow must be interrupted for the system shut-down, at the end of which all stations are left empty.
In the following, we also suppose that the initial position of the robot coincides with the location of the input storage $S_0$.

To represent the complete dynamics of the processing network, from the start-up to the shut-down, we introduce additional modes of operations modeling the initial and final transients.
In particular, we add three modes for the start-up: $\winit_{\wB_1}$ corresponds to the sequence of operations
\[
	S_0 \xrightarrow{\wB} S_2, \rightarrow S_2, S_2 \xrightarrow{\wB} S_1, \rightarrow S_0,
\]
mode $\winit_{\wB_2}$ corresponds to
\[
	S_0 \xrightarrow{\wB} S_2, \rightarrow S_1, S_1 \xrightarrow{\wB} S_4, \rightarrow S_0,
\]
and mode $\winit_{\wA}$ is associated to
\[
	S_0 \xrightarrow{\wA} S_1, \rightarrow S_1, S_1 \xrightarrow{\wA} S_3.
\]
The subschedule $\winit_{\wB_1}\winit_{\wB_2}\winit_{\wA}$ represents the initial transient of the processing network, consisting of the transportation inside the system of the first two parts of type $\wB$ and the first part of type $\wA$; thus, at the end of the sequence of operations corresponding to $\winit_{\wB_1}\winit_{\wB_2}\winit_{\wA}$, a part of type $\wB$ is in station $S_2$, another part of the same type is in $S_4$, and a part of type $\wA$ is in $S_3$. 
Similarly, three modes are added for the shut-down: mode $\wfin_{\wB_1}$ corresponds to
\[
	\rightarrow S_4, S_4 \xrightarrow{\wB} S_5, \rightarrow S_2, S_2 \xrightarrow{\wB} S_1, \rightarrow S_5, S_5 \xrightarrow{\wB} S_6, \rightarrow S_1, S_1 \xrightarrow{\wB} S_4,
\]
mode $\wfin_{\wA}$ is associated to
\[
	\rightarrow S_3, S_3 \xrightarrow{\wA} S_5, \rightarrow S_5, S_5 \xrightarrow{\wA} S_6,
\]
and mode $\wfin_{\wB_2}$ corresponds to
\[
	\rightarrow S_4, S_4 \xrightarrow{\wB} S_5, \rightarrow S_5, S_5 \xrightarrow{\wB} S_6.
\]
Similarly to modes $\wA$ and $\wB$, we can derive matrices $A^0_\wZ$, $A^1_\wZ$, $B^0_\wZ$, and $B^1_\wZ$ for the additional modes $\winit_{\wB_1},\winit_{\wB_2},\winit_\wA,\wfin_{\wB_1},\wfin_\wA,\wfin_{\wB_2}$.

An example of complete schedule for the processing network is the intermittently periodic schedule
\[
	w = \winit_{\wB_1}\winit_{\wB_2}\winit_\wA (\wB \wA)^{m} \wfin_{\wB_1} \wfin_\wA \wfin_{\wB_2},
\]
where $m\in\nat$ is the number of repetitions of subschedule $\wB \wA$.
To find the admissible periods $\lambda$ of an intermittently periodic trajectory following schedule $w$, we solve the PIC-NCP on matrix
\[
	\lambda P \oplus \lambda^{-1} I \oplus C = \begin{bsmallmatrix}
		C_{\winit_{\wB_1}} & P_{\winit_{\wB_1}} & \pazocal{E} & \pazocal{E} & \pazocal{E} & \pazocal{E} & \pazocal{E} & \pazocal{E} \\
		I_{\winit_{\wB_1}} & C_{\winit_{\wB_2}} & P_{\winit_{\wB_2}} & \pazocal{E} & \pazocal{E} & \pazocal{E} & \pazocal{E} & \pazocal{E} \\
		\pazocal{E} & I_{\winit_{\wB_2}} & C_{\winit_{\wA}} & P_{\winit_{\wA}} & \pazocal{E} & \pazocal{E} & \pazocal{E} & \pazocal{E} \\
		\pazocal{E} & \pazocal{E} & I_{\winit_{\wA}} & C_{\wB} & P_\wB\oplus \lambda^{-1}I_\wA & \pazocal{E} & \pazocal{E}  & \pazocal{E} \\
		\pazocal{E} & \pazocal{E} & \pazocal{E} & I_\wB\oplus \lambda P_{\wA} & C_\wA & P_\wA & \pazocal{E}  & \pazocal{E} \\
		\pazocal{E} & \pazocal{E} & \pazocal{E} & \pazocal{E} & I_{\wA} & C_{\wfin_{\wB_1}} & P_{\wfin_{\wB_1}} & \pazocal{E} \\
		\pazocal{E} & \pazocal{E} & \pazocal{E} & \pazocal{E} & \pazocal{E} & I_{\wfin_{\wB_1}} & C_{\wfin_{\wA}} & P_{\wfin_\wA} \\
		\pazocal{E} & \pazocal{E} & \pazocal{E} & \pazocal{E} & \pazocal{E} & \pazocal{E} & I_{\wfin_{\wA}} & C_{\wfin_{\wB_2}}
	\end{bsmallmatrix}
\]
using Theorem~\ref{th:intermittent_periodic}; the outcome is that the valid periods are those in interval $[77,192]$.
Finally, a consistent intermittently periodic trajectory can be obtained from any column of $(\lambda P \oplus \lambda^{-1} I \oplus C)^*$ for $\lambda\in [77,192]$ (see Proposition~\ref{pr:nonegset_inequality}).
For instance, the trajectory displayed in Figure~\ref{fi:processing_network_trajectory} is derived through some simple manipulations from the first column of matrix $A=(77 P \oplus 77^{-1} I \oplus C)^*$; in particular, we consider $w = \winit_{\wB_1}\winit_{\wB_2}\winit_\wA \wB \wA \wB \wA \wfin_{\wB_1} \wfin_\wA \wfin_{\wB_2}$ (i.e., $m = 2)$ and impose:
\[
	\forall k\in\dint{1,5},\ x(k) = A_{\dint{12\times (k-1),12\times k},1},\quad x(6) = 77 x(4),\ x(7) = 77 x(5),
	\]\[ \forall h\in\dint{8,10},\ x(h) = 77 A_{\dint{12\times (h-3),12\times (h-2)},1},
\]
where $A_{\dint{i,j},1}$ indicates the vector containing elements $i,i+1,\ldots,j$ of the first column of matrix $A$.

\begin{figure}
	\centering
	\resizebox{1\textwidth}{!}{
%
%
\definecolor{mycolor1}{rgb}{0.85098,0.32549,0.09804}%
\definecolor{mycolor2}{rgb}{0.46667,0.67451,0.18824}%
\definecolor{mycolor3}{rgb}{0.00000,0.44706,0.74118}%
\begin{tikzpicture}

\Large

\begin{axis}[%
width=9.268in,
height=3.015in,
at={(1.555in,0.448in)},
scale only axis,
xmin=0,
xmax=350,
xlabel style={font=\color{white!15!black}},
xlabel={Time},
ymin=0.5,
ymax=6.5,
ytick={0.5,1.5,2.5,3.5,4.5,5.5,6.5},
yticklabels={{${S}_{0}$},{${S}_{1}$},{${S}_{2}$},{${S}_{3}$},{${S}_{4}$},{${S}_{5}$},{${S}_{6}$}},
ylabel style={font=\color{white!15!black}},
ylabel={Stations},
axis background/.style={fill=white},
ymajorgrids,
legend style={at={(0.03,0.97)}, anchor=north west, legend cell align=left, align=left, draw=white!15!black}
]
\addplot [color=mycolor2, line width=1.5pt]
  table[row sep=crcr]{%
0	0.5\\
4	2.5\\
54	2.5\\
57	1.5\\
58	1.5\\
62	1.5\\
67	1.5\\
72	4.5\\
76	4.5\\
78	4.5\\
88	4.5\\
91	4.5\\
101	4.5\\
102	4.5\\
105	5.5\\
112	5.5\\
119	5.5\\
125	5.5\\
128	6.5\\
};
\addlegendentry{Part $\wB$ moves}

\addplot [color=mycolor3, line width=1.5pt,dashed,dash pattern={on 10pt off 5pt}]
  table[row sep=crcr]{%
76	0.5\\
78	1.5\\
88	1.5\\
91	3.5\\
101	3.5\\
102	3.5\\
105	3.5\\
112	3.5\\
119	3.5\\
125	3.5\\
128	3.5\\
134	3.5\\
138	3.5\\
139	3.5\\
144	3.5\\
145	3.5\\
148	5.5\\
153	5.5\\
160	5.5\\
168	5.5\\
170	6.5\\
};
\addlegendentry{Part $\wA$ moves}

\addplot [color=mycolor2, line width=1.5pt, forget plot]
  table[row sep=crcr]{%
58	0.5\\
62	2.5\\
67	2.5\\
72	2.5\\
76	2.5\\
78	2.5\\
88	2.5\\
91	2.5\\
101	2.5\\
102	2.5\\
105	2.5\\
112	2.5\\
119	1.5\\
125	1.5\\
128	1.5\\
134	1.5\\
138	1.5\\
139	1.5\\
144	4.5\\
145	4.5\\
148	4.5\\
153	4.5\\
160	4.5\\
168	4.5\\
170	4.5\\
175	4.5\\
178	4.5\\
179	4.5\\
182	5.5\\
189	5.5\\
196	5.5\\
202	5.5\\
205	6.5\\
};
\addplot [color=mycolor2, line width=1.5pt, forget plot]
  table[row sep=crcr]{%
134	0.5\\
138	2.5\\
139	2.5\\
144	2.5\\
145	2.5\\
148	2.5\\
153	2.5\\
160	2.5\\
168	2.5\\
170	2.5\\
175	2.5\\
178	2.5\\
179	2.5\\
182	2.5\\
189	2.5\\
196	1.5\\
202	1.5\\
205	1.5\\
211	1.5\\
215	1.5\\
216	1.5\\
221	4.5\\
222	4.5\\
225	4.5\\
230	4.5\\
237	4.5\\
245	4.5\\
247	4.5\\
252	4.5\\
255	4.5\\
256	4.5\\
259	5.5\\
266	5.5\\
269	5.5\\
279	5.5\\
282	6.5\\
};
\addplot [color=mycolor3, line width=1.5pt, forget plot,dashed,dash pattern={on 10pt off 5pt}]
  table[row sep=crcr]{%
153	0.5\\
160	1.5\\
168	1.5\\
170	1.5\\
175	1.5\\
178	3.5\\
179	3.5\\
182	3.5\\
189	3.5\\
196	3.5\\
202	3.5\\
205	3.5\\
211	3.5\\
215	3.5\\
216	3.5\\
221	3.5\\
222	3.5\\
225	5.5\\
230	5.5\\
237	5.5\\
245	5.5\\
247	6.5\\
};
\addplot [color=mycolor2, line width=1.5pt, forget plot]
  table[row sep=crcr]{%
211	0.5\\
215	2.5\\
216	2.5\\
221	2.5\\
222	2.5\\
225	2.5\\
230	2.5\\
237	2.5\\
245	2.5\\
247	2.5\\
252	2.5\\
255	2.5\\
256	2.5\\
259	2.5\\
266	2.5\\
269	1.5\\
279	1.5\\
282	1.5\\
287	1.5\\
292	4.5\\
295	4.5\\
298	4.5\\
318	4.5\\
320	4.5\\
322	4.5\\
325	5.5\\
345	5.5\\
348	6.5\\
};
\addplot [color=mycolor3, line width=1.5pt, forget plot,dashed,dash pattern={on 10pt off 5pt}]
  table[row sep=crcr]{%
230	0.5\\
237	1.5\\
245	1.5\\
247	1.5\\
252	1.5\\
255	3.5\\
256	3.5\\
259	3.5\\
266	3.5\\
269	3.5\\
279	3.5\\
282	3.5\\
287	3.5\\
292	3.5\\
295	3.5\\
298	5.5\\
318	5.5\\
320	6.5\\
};

\addplot [color=mycolor1, loosely dotted, line width=2.5pt]
  table[row sep=crcr]{%
0	0.5\\
4	2.5\\
54	2.5\\
57	1.5\\
58	0.5\\
62	2.5\\
67	1.5\\
72	4.5\\
76	0.5\\
78	1.5\\
88	1.5\\
91	3.5\\
101	3.5\\
102	4.5\\
105	5.5\\
112	2.5\\
119	1.5\\
125	5.5\\
128	6.5\\
134	0.5\\
138	2.5\\
139	1.5\\
144	4.5\\
145	3.5\\
148	5.5\\
153	0.5\\
160	1.5\\
168	5.5\\
170	6.5\\
175	1.5\\
178	3.5\\
179	4.5\\
182	5.5\\
189	2.5\\
196	1.5\\
202	5.5\\
205	6.5\\
211	0.5\\
215	2.5\\
216	1.5\\
221	4.5\\
222	3.5\\
225	5.5\\
230	0.5\\
237	1.5\\
245	5.5\\
247	6.5\\
252	1.5\\
255	3.5\\
256	4.5\\
259	5.5\\
266	2.5\\
269	1.5\\
279	5.5\\
282	6.5\\
287	1.5\\
292	4.5\\
295	3.5\\
298	5.5\\
318	5.5\\
320	6.5\\
322	4.5\\
325	5.5\\
345	5.5\\
348	6.5\\
};
\addlegendentry{Robot moves}

\draw[dashed,very thick,stealth'-stealth'] (125,5.5) to node [above,pos=.3] {$\lambda = 77$} (202,5.5);
\end{axis}
\end{tikzpicture}
	\caption{Intermittently periodic trajectory for the multi-product processing network. In this trajectory, 3 parts of type $\wA$ and 4 of type $\wB$ are processed.}\label{fi:processing_network_trajectory}
\end{figure}
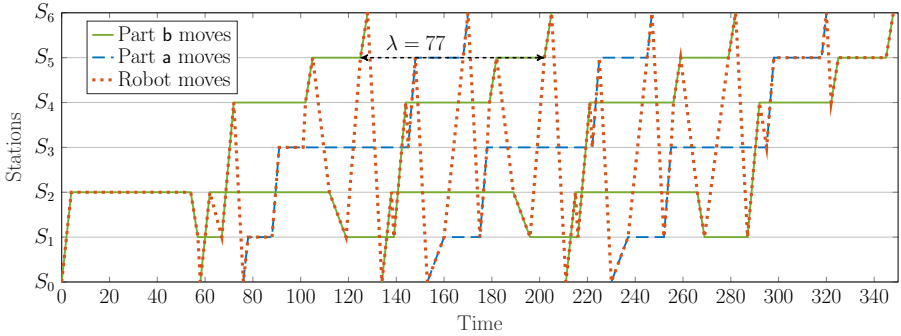

\section{Final remarks}\label{se:conclusions}

A number of results regarding the cycle time analysis in systems with time-window constraints have been presented.
Because of the generality of SLDIs, the formulas in Theorem~\ref{th:improved_algorithm} and the complexity-reduction technique of Theorem~\ref{th:intermittent_periodic} are applicable to a wide range of cyclic scheduling problems.
\zor{However,} plenty of problems of theoretical and practical relevance regarding SLDIs remain open, such as the decidability and complexity of verifying the existence of a schedule $w$ under which the SLDIs is boundedly consistent, or the development of feedback-control techniques for this class of systems.

In the following, we draw comparisons between SLDIs and other related dynamical systems to provide a broader context for this new class of systems and to outline possible research directions inspired from previous work.


SLDIs have strong connections with several other dynamical systems\zor{, with interval weighted automata being the most closely related in terms of modeling expressiveness~\cite{vspacek2010modeling,KOMENDA2020187}.
Interval weighted automata represent the natural extension of max-plus automata to the case of time-window constraints, which force the dater function to satisfy inequalities of the form}
\begin{equation}\label{eq:IWA}A^1_{w_k} \otimes x(k) \leq x(k+1) \leq B^1_{w_k} \stimes x(k).\end{equation}
Expanding the seminal work~\cite{gaubert1999modeling} of Gaubert and Mairesse, in~\cite{KOMENDA2020187} it was shown that safe P-time Petri nets, in which the number of tokens per place cannot exceed 1, are a subclass of interval weighted automata.
\zor{Comparing~\eqref{eq:dynamics} with~\eqref{eq:IWA}}, it is not difficult to see that SLDIs are an even larger class of systems compared to interval weighted automata, implying that also SLDIs can represent the dynamics of safe P-time Petri nets.

\zor{If from~\eqref{eq:IWA} we eliminate the upper bound constraints (by defining $B_{w_k}^1=\mathcal{T}$) and take the least consistent trajectory of the dater function (by substituting the left "$\leq$" sign in~\eqref{eq:IWA} by "$=$"), we get the dynamics of a max-plus automaton~\cite{gaubert1995performance}:}
\begin{equation}\label{eq:MPA}
	A^1_{w_k} \otimes x(k) = x(k+1).
\end{equation}
\zor{This shows that the behavior of any max-plus automaton corresponds to the "fastest" trajectory of specific switched max-plus linear-dual inequalities, in which $A^0_{w_k} = \pazocal{E}$ and $B^0_{w_k} = B^1_{w_k} = \mathcal{T}$.}
From this relationship one can observe that the algorithm derived in this paper for the cycle time computation in periodic schedules (Theorem~\ref{th:improved_algorithm}) generalizes~\cite[Theorem 5.2]{gaubert1999modeling}, which gives a simple formula for the cycle time of safe job shops (without upper bound constraints).
The involvement of more complex formulas featured in Theorem~\ref{th:improved_algorithm} has to be attributed to the greater number of circuits in the precedence graph when accounting for upper-bound constraints (see Figure~\ref{fi:switching_graph}; the case considered in~\cite{gaubert1999modeling} corresponds to the same type of graphs but without arcs labeled $P_i$, $\lambda P_i$, and $C_i$).

\zor{In~\cite{VANDENBOOM20061199}, van den Boom and De Schutter extended the capabilities of max-plus automata by introducing an input signal $u(k)\in\R^m$ and an input-state matrix $D_{w_k}\in\Rmax^{n\times m}$ to~\eqref{eq:MPA}; the resulting dynamical system is referred to as a switching max-plus linear system:}
\begin{equation}\label{eq:switching_max_plus}
	A^1_{w_k} \otimes x(k) \oplus D_{w_k} \otimes u(k+1) = x(k+1).
\end{equation}
\zor{Now the system is not anymore forced to evolve according to the fastest possible trajectory represented by~\eqref{eq:MPA}, but it is possible to construct a controller that, by selecting the appropriate input signal, can delay the occurrence of events with the aim of regulating the behavior of the system.
This feature makes switching max-plus linear systems, in a certain sense, more similar to SLDIs.
We explain this in more detail in the following.
Take a switching max-plus linear system and assume that the input signal is able to delay "directly" the occurrence of each event.
Mathematically, this corresponds to having $D_{w_k} = E_{\otimes}\in\Rmax^{n\times n}$ for all $k$.
Then, recalling that $\oplus$ in~\eqref{eq:switching_max_plus} is the elementwise maximization, by taking $u(k+1)$ sufficiently large (in particular, $A^1_{w_k}\otimes x(k)\leq u(k+1)$), $x(k+1)$ and $u(k+1)$ coincide, and~\eqref{eq:switching_max_plus} simplifies to}
\[
	A^1_{w_k} \otimes x(k) \oplus x(k+1) = x(k+1),
\]
\zor{which is equivalent to}
\[
	A^1_{w_k} \otimes x(k) \leq x(k+1).
\]
\zor{In this way, we have recovered a portion of the inequalities that appear in SLDIs.
Thus, observe that:
\begin{itemize}
	\item on the one hand, unlike SLDIs, switching max-plus linear systems are not able to represent upper-bound constraints,
	\item on the other hand, only systems where all events are directly controllable (i.e., delayable) can be modeled by SLDIs, whereas this does not need to be the case for switching max-plus linear systems.
\end{itemize}
It follows that the modeling expressiveness of switching max-plus linear systems and SLDIs is not comparable.
The generalization of the results of the present paper to switching dynamical systems driven by time-window constraints and where the assumption of direct controllability of all events is relaxed is left as future work.}

Regarding control approaches, it would be valuable to extend to SLDIs techniques already investigated in switching max-plus linear systems and max-plus automata: we mention model predictive control~\cite{VANDENBOOM20061199}, just-in-time control based on residuation theory~\cite{alsaba2006just}, geometric control~\cite{animobono2022model}, and supervisory control~\cite{komenda2009supervisory}.

To conclude, we emphasize that interesting connections can be drawn between the concepts of stability in discrete-time switched linear systems\footnote{\zor{Recall that discrete-time switched linear systems follow $x(k+1) = A^1_{w_k} \cdot x(k)$, where $\cdot$ indicates the standard matrix-vector multiplication.}} (see, e.g., \cite{jungers2009joint}) and bounded consistency in SLDIs.
Considering the case of "unswitched" systems (i.e., discrete-time linear systems and LDIs), the two properties are both verifiable in strongly-polynomial time (using Jury's test for the former property~\cite{aastrom2013computer}, using Theorem~\ref{th:bounded_consistency} for the latter) and are related to the spectral radius in the respective algebras.
In their switching counterparts, stability and bounded consistency are -- unsurprisingly -- interconnected with the notion of joint spectral radius in the standard and max-plus algebras.
Further investigation of this link presents an exciting avenue for future research.

\appendix

\section{\zor{Proof of Proposition~\ref{pr:SLDI-P-TEGs}}}\label{ap:Proposition5}

Let us consider SLDIs $\pazocal{S}$ under schedule $w = v^{+\infty}$, where $v = v_1v_2\ldots v_V$: for all $k\in\mathbb{N}$,
\[
	\left\{
	\begin{array}{rcccl}
	A_{v_1}^0 \otimes x(V(k-1)+1) & \leq & x(V(k-1)+1) &\leq & B_{v_1}^0 \stimes x(V(k-1)+1),\\
	A_{v_1}^1 \otimes x(V(k-1)+1) & \leq& x(V(k-1)+2) &\leq & B_{v_1}^1 \stimes x(V(k-1)+1),\\
	A_{v_2}^0 \otimes x(V(k-1)+2) & \leq& x(V(k-1)+2) &\leq & B_{v_2}^0 \stimes x(V(k-1)+2),\\
	A_{v_2}^1 \otimes x(V(k-1)+2) & \leq& x(V(k-1)+3) &\leq & B_{v_2}^1 \stimes x(V(k-1)+2),\\
	& &\vdots& &\\
	A_{v_V}^0 \otimes x(Vk) & \leq& x(Vk) &\leq & B_{v_V}^0 \stimes x(Vk),\\
	A_{v_V}^1 \otimes x(Vk) & \leq& x(Vk+1) &\leq & B_{v_V}^1 \stimes x(Vk).
	\end{array}
	\right.
\]
By defining 
\[
	\tilde{x}(k) = \begin{bmatrix}
		x(V(k-1)+1) \\ x(V(k-1)+2) \\ \vdots \\ x(Vk)
	\end{bmatrix}
\]
and using Proposition~\ref{pr:max-min}, it is possible to rewrite the SLDIs above as the following LDIs: for all $k\in\mathbb{N}$,
\[
	\begin{array}{rcl}
	A_{v}^0 \otimes \tilde{x}(k) \leq & \tilde{x}(k) &\leq B_{v}^0 \stimes \tilde{x}(k),\\
	A_{v}^1 \otimes \tilde{x}(k) \leq & \tilde{x}(k+1) &\leq B_{v}^1 \stimes \tilde{x}(k),
	\end{array}
\]
where
\[
	A_v^0 = \begin{bsmallmatrix}
		A_{v_1}^0 & \pazocal{E} & \pazocal{E} & \cdots & \pazocal{E} & \pazocal{E}\\
		A_{v_1}^1 & A_{v_2}^0 & \pazocal{E} & \cdots & \pazocal{E} & \pazocal{E}\\
		\pazocal{E} & A_{v_2}^1 & A_{v_3}^0 & \cdots & \pazocal{E} & \pazocal{E}\\
		\pazocal{E} & \pazocal{E} & A_{v_3}^1 & \cdots & \pazocal{E} & \pazocal{E}\\
		\svdots & \svdots &\svdots &\sddots &\svdots &\svdots \\
		\pazocal{E} & \pazocal{E} &\pazocal{E} &\cdots &A_{v_{V-1}}^1 & A_{v_V}^0
	\end{bsmallmatrix},\quad
	B_v^0 = \begin{bsmallmatrix}
		B_{v_1}^0 & \pazocal{T} & \pazocal{T} & \cdots & \pazocal{T} & \pazocal{T}\\
		B_{v_1}^1 & B_{v_2}^0 & \pazocal{T} & \cdots & \pazocal{T} & \pazocal{T}\\
		\pazocal{T} & B_{v_2}^1 & B_{v_3}^0 & \cdots & \pazocal{T} & \pazocal{T}\\
		\pazocal{T} & \pazocal{T} & B_{v_3}^1 & \cdots & \pazocal{T} & \pazocal{T}\\
		\svdots & \svdots &\svdots &\sddots &\svdots &\svdots \\
		\pazocal{T} & \pazocal{T} &\pazocal{T} &\cdots &B_{v_{V-1}}^1 & B_{v_V}^0
	\end{bsmallmatrix},
\]
\[
	A_v^1 = \begin{bsmallmatrix}
		\pazocal{E} & \cdots & \pazocal{E} & A_{v_V}^1\\
		\pazocal{E} & \cdots & \pazocal{E} & \pazocal{E}\\
		\svdots & & \svdots & \svdots\\
		\pazocal{E} & \cdots & \pazocal{E} & \pazocal{E} 
	\end{bsmallmatrix},\quad
	B_v^1 = \begin{bsmallmatrix}
		\pazocal{T} & \cdots & \pazocal{T} & B_{v_V}^1\\
		\pazocal{T} & \cdots & \pazocal{T} & \pazocal{T}\\
		\svdots & & \svdots & \svdots\\
		\pazocal{T} & \cdots & \pazocal{T} & \pazocal{T} 
	\end{bsmallmatrix}.
\]
Hence, since SLDIs under periodic schedule $v^{+\infty}$ are equivalent to the above LDIs, according to Theorem~\ref{th:bounded_consistency} they are boundedly consistent if and only if they admit, for some $\lambda\in\R$, a trajectory of the form
\[
	\forall k\in\nat,\quad \tilde{x}(k) = \lambda^{k-1} \tilde{x}(1);
\]
observe that such a trajectory is $v$-periodic as, in terms of dater $x$, it satisfies the following property:
\[
	\forall k\in\nat,\ \forall h\in\dint{1,V},\quad x(Vk+h) = \lambda x(V(k-1)+h).
\]

To find the set $\solSLDI$ of admissible periods for the SLDIs under schedule $v^{+\infty}$, we proceed as in Section~\ref{su:LDIs}, obtaining:
\[
	\solSLDI = \solNCP{\lambda B_v^{1\sharp}\oplus \lambda^{-1} A_v^1 \oplus A_v^0\oplus B_v^{0\sharp}}.
\]
Therefore, the set $\solSLDI$ coincides with all the $\lambda$'s such that the precedence graph $\graph(\lambda P_v \oplus \lambda^{-1} I_v \oplus C_v )$ does not contain circuits with positive weight, where
\[
	P_v = B_v^{1\sharp} = 
	\begin{bsmallmatrix}
		\pazocal{E} & \pazocal{E} & \cdots & \pazocal{E}\\
		\svdots & \svdots &  & \svdots\\
		\pazocal{E} & \pazocal{E} & \cdots & \pazocal{E}\\
		P_V & \pazocal{E} & \cdots & \pazocal{E} 
	\end{bsmallmatrix},\quad 
	I_v = A_v^1 = 
	\begin{bsmallmatrix}
		\pazocal{E} & \cdots & \pazocal{E} & I_V\\
		\pazocal{E} & \cdots & \pazocal{E} & \pazocal{E}\\
		\svdots & & \svdots & \svdots\\
		\pazocal{E} & \cdots & \pazocal{E} & \pazocal{E} 
	\end{bsmallmatrix},
\]
\[
	C_v = A_v^0\oplus B_v^{0\sharp} = 
	\begin{bsmallmatrix}
		C_1 & P_{1} & \pazocal{E} & \pazocal{E} & \pazocal{E} & \cdots & \pazocal{E} & \pazocal{E}\\ 
		I_{1} & C_{2} & P_{2} & \pazocal{E} & \pazocal{E} & \cdots & \pazocal{E} & \pazocal{E}\\ 
		\pazocal{E} & I_{2} & C_{3} & P_{3} & \pazocal{E} & \cdots & \pazocal{E} & \pazocal{E}\\ 
		\pazocal{E} & \pazocal{E} & I_{3} & C_{4} & P_{4} & \cdots & \pazocal{E} & \pazocal{E}\\ 
		\pazocal{E} & \pazocal{E} & \pazocal{E} & I_{4} & C_{5} & \cdots & \pazocal{E} & \pazocal{E}\\ 
		\svdots & \svdots & \svdots & \svdots & \svdots & \sddots & \svdots & \svdots \\ 
		\pazocal{E} & \pazocal{E} & \pazocal{E} & \pazocal{E} & \pazocal{E} & \cdots & C_{{V-1}} & P_{{V-1}}\\ 
		\pazocal{E} & \pazocal{E} & \pazocal{E} & \pazocal{E} & \pazocal{E} & \cdots & I_{{V-1}} & C_{V}
	\end{bsmallmatrix},
\]
$P_h = B_{v_h}^{1\sharp}$, $I_h = A_{v_h}^1$, and $C_h = A_{v_h}^0\oplus B_{v_h}^{0\sharp}$ for all $h\in\dint{1,V}$.

\section{Proof of Theorem~\ref{th:improved_algorithm}}\label{ap:improved_algorithm}

In this section, we prove Theorem~\ref{th:improved_algorithm}.
All symbols used in the following propositions are as in Proposition~\ref{pr:SLDI-P-TEGs} and Theorem~\ref{th:improved_algorithm}.
To prove the theorem, we first need a number of technical propositions and lemmas, which provide closed formulas for the computation of elements of the Kleene star of particular sparse matrices.
We start from recalling some properties of the Kleene star operation in the max-plus algebra.

\begin{proposition}[\cite{baccelli1992synchronization}]
Given two matrices $a,b\in\Rmax^{n\times n}$, the following properties hold:
\begin{equation}\label{eq:aux1}
	(a\oplus b)^* = a^* (b a^*)^*,
\end{equation}	
\begin{equation}\label{eq:aux2}
	a^*a^* = a^*,
\end{equation}	
\begin{equation}\label{eq:aux3}
	a(ba)^* = (ab)^*a.
\end{equation}	
\end{proposition}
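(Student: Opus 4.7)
The plan is to prove the three identities directly from the defining series $a^* = \bigoplus_{i \geq 0} a^i$, relying on the distributivity of $\otimes$ over $\oplus$, the associativity of $\otimes$, and the idempotency of $\oplus$ (i.e., $x \oplus x = x$, which underlies all the ``absorption'' steps). I would address the identities in the order \eqref{eq:aux2}, \eqref{eq:aux3}, \eqref{eq:aux1}, since each uses only slightly more involved bookkeeping than the previous one.

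For \eqref{eq:aux2}, I would expand the product termwise:
\[
a^* \otimes a^* = \left(\bigoplus_{i \geq 0} a^i\right) \otimes \left(\bigoplus_{j \geq 0} a^j\right) = \bigoplus_{i,j \geq 0} a^{i+j}.
\]
Reindexing with $k = i+j$, each power $a^k$ appears $k+1$ times in the outer sum; idempotency collapses these duplicates into a single copy, so the right-hand side equals $\bigoplus_{k \geq 0} a^k = a^*$. For \eqref{eq:aux3}, the crux is the elementary identity $a (ba)^i = (ab)^i a$, provable by a one-line induction on $i$ using associativity (the base case $i = 0$ holds because $E_\otimes$ is a two-sided identity for $\otimes$, and the inductive step reads $a(ba)^{i+1} = (a(ba)^i)(ba) = ((ab)^i a)(ba) = (ab)^{i+1} a$). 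Summing over $i$ and pulling $a$ out on the appropriate side gives the identity.

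For \eqref{eq:aux1}, the argument is combinatorial. Expanding $(a \oplus b)^k = (a \oplus b) \otimes \cdots \otimes (a \oplus b)$ by distributivity produces the sum over all words $w \in \{a,b\}^k$ of length $k$, with each word interpreted as a product in the obvious way. Taking the sum over $k \geq 0$, and using idempotency to discard duplicates, we obtain $(a \oplus b)^* = \bigoplus_{w \in \{a,b\}^*} w$. The key observation is that every word $w \in \{a,b\}^*$ admits a unique factorization
\[
w = a^{i_0} b a^{i_1} b a^{i_2} \cdots b a^{i_m},
\]
where $m \geq 0$ is the number of occurrences of the letter $b$ and $i_0, i_1, \ldots, i_m \geq 0$ are the (possibly zero) lengths of the maximal $a$-blocks separated by those $b$'s. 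Grouping the sum accordingly and factoring via distributivity yields
\[
\bigoplus_{w} w = \bigoplus_{m \geq 0}\, \bigoplus_{i_0,\ldots,i_m \geq 0} a^{i_0} b a^{i_1} \cdots b a^{i_m} = \bigoplus_{m \geq 0} a^* (b a^*)^m = a^* (b a^*)^*,
\]
which is exactly \eqref{eq:aux1}.

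The main obstacle is not algebraic but one of rigor regarding infinite sums: for general $a \in \Rmax^{n \times n}$ the Kleene star is defined only as the elementwise supremum of the partial sums, and may have entries equal to $+\infty$. I would therefore note at the outset that all manipulations take place in the idempotent semiring $\overline{\R}^{n \times n}$ where $\oplus$, $\otimes$, and their infinite versions are well-defined, and that distributivity of $\otimes$ over arbitrary suprema is legitimate because $\otimes$ is monotone in each argument. With this proviso, the reindexing arguments above are unambiguous and the three identities follow without further hypotheses.
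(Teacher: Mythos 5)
Your three derivations are correct, but note that the paper does not prove this proposition at all: it is stated with a citation to Baccelli et al.\ and used as a black box, so there is no in-paper argument to compare against. What you give is the standard series-expansion proof — termwise expansion and idempotent absorption for $a^*a^* = a^*$, the induction $a(ba)^i = (ab)^i a$ for the third identity, and the unique factorization of words over $\{a,b\}$ by their $b$-positions for $(a\oplus b)^* = a^*(ba^*)^*$ — and all of it goes through. The one point I would tighten is your justification of infinite distributivity: monotonicity of $\otimes$ in each argument only yields $\bigoplus_i (a\otimes b_i) \preceq a\otimes\bigoplus_i b_i$; the reverse inequality needs the fact that $\Rbar^{n\times n}$ with these operations is a \emph{complete} dioid in which $\otimes$ commutes with arbitrary suprema (true here even with the convention $(-\infty)\otimes(+\infty)=-\infty$, but it is a continuity property, not a consequence of monotonicity alone). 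With that fix the argument is fully rigorous.
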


Some of the following intermediate results have already been concisely proven in~\cite{zorzenon2022switched}; here we provide a more detailed proof for the first of them.

\begin{lemma}[\cite{zorzenon2022switched}]\label{le:aux1}
Let $a\in\Rmax^{n_1\times n_1}$, $b\in\Rmax^{n_1\times n_2}$, $c\in\Rmax^{n_2\times n_1}$, and $d\in\Rmax^{n_2\times n_2}$.
Then
\[
	\begin{bmatrix}
		a & b\\
		c & d
	\end{bmatrix}^* = 
	\begin{bmatrix}
		a^*(a^*bd^*ca^*)^*a^* & (a^*bd^*ca^*)^*a^*bd^*\\
		(d^*ca^*bd^*)^*d^*ca^* & d^*(d^*ca^*bd^*)^*d^*
	\end{bmatrix}
\]
\end{lemma}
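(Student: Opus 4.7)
The plan is to decompose the block matrix as $M = A \oplus B$, where
\[
A = \begin{bmatrix} a & \mathcal{E} \\ \mathcal{E} & d \end{bmatrix}, \qquad B = \begin{bmatrix} \mathcal{E} & b \\ c & \mathcal{E} \end{bmatrix},
\]
and invoke identity~\eqref{eq:aux1} to write $M^* = A^*(BA^*)^*$. This reduces the problem to computing the Kleene star of two matrices that are structurally much simpler than $M$ itself: $A$ is block diagonal, and $BA^*$ is block anti-diagonal.

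Since $A$ is block diagonal with scalar\-like diagonal entries, $A^*=\mathrm{diag}(a^*,d^*)$ can be read off directly, and one multiplication yields
\[
BA^* = \begin{bmatrix} \mathcal{E} & bd^* \\ ca^* & \mathcal{E} \end{bmatrix}.
\]
To compute $(BA^*)^*$ I would compute the powers of a generic anti-diagonal block matrix $X = \left[\begin{smallmatrix}\mathcal{E} & \beta \\ \gamma & \mathcal{E}\end{smallmatrix}\right]$: the even powers $X^{2k} = \mathrm{diag}((\beta\gamma)^k,(\gamma\beta)^k)$ are block diagonal and the odd powers are anti-diagonal with blocks $(\beta\gamma)^k\beta$ and $(\gamma\beta)^k\gamma$, so summing over $k\in\nato$ gives
\[
X^* = \begin{bmatrix} (\beta\gamma)^* & (\beta\gamma)^*\beta \\ (\gamma\beta)^*\gamma & (\gamma\beta)^* \end{bmatrix}.
\]
Specialising to $\beta = bd^*$, $\gamma = ca^*$ and then left-multiplying by $A^*$ produces
\[
M^* = \begin{bmatrix} a^*(bd^*ca^*)^* & a^*(bd^*ca^*)^*bd^* \\ d^*(ca^*bd^*)^*ca^* & d^*(ca^*bd^*)^* \end{bmatrix}.
\]

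The only non-mechanical point is to reconcile this expression with the form stated in the lemma, whose $(1,1)$ entry is written as $a^*(a^*bd^*ca^*)^*a^*$ and similarly for the other blocks. I would verify this by the following one-line calculation: expanding the stated entry as $\bigoplus_{k\geq 0}a^*(a^*bd^*ca^*)^k a^*$ and collapsing consecutive $a^*$-factors via $a^*a^* = a^*$ from~\eqref{eq:aux2}, each term reduces to $a^*(bd^*ca^*)^k$ (with the $k=0$ term giving $a^*$), so the sum is precisely $a^*(bd^*ca^*)^*$. The same simplification, together with~\eqref{eq:aux3} to reshuffle factors where needed, handles the three remaining blocks symmetrically. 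The main obstacle, if any, is thus not conceptual but bookkeeping: one must keep track of which $a^*$'s and $d^*$'s absorb into their neighbours and which are preserved, and use~\eqref{eq:aux2}--\eqref{eq:aux3} cleanly so that the four entries end up in the symmetric form displayed in the statement.
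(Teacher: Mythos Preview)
Your proof is correct, but it takes a genuinely different route from the paper's.

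The paper does not derive the block Kleene star from scratch. Instead, it invokes an existing closed-form formula (``Algorithm~2'' from~\cite{hardouin2018control}), which already gives $\pazocal{M}^{11} = a^*b(d\oplus ca^*b)^*ca^*\oplus a^*$ and $\pazocal{M}^{12} = a^*b(d\oplus ca^*b)^*$, and then spends the proof massaging these expressions into the form stated in the lemma via repeated applications of~\eqref{eq:aux1}--\eqref{eq:aux3}. The other two blocks are obtained ``by symmetry''.

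Your approach is more self-contained: decomposing $M = A\oplus B$ with $A$ block-diagonal and $B$ block-anti-diagonal, applying~\eqref{eq:aux1} directly at the block level, and then exploiting the combinatorics of powers of an anti-diagonal matrix to get $(BA^*)^*$ in closed form. This avoids any reliance on an external reference and yields the (slightly simpler) form $a^*(bd^*ca^*)^*$ for the $(1,1)$ block first, after which the reconciliation step via~\eqref{eq:aux2}--\eqref{eq:aux3} is essentially the reverse of what the paper does. What your approach buys is independence from~\cite{hardouin2018control} and a transparent explanation of where the formula comes from; what the paper's approach buys is brevity, since the starting formula is taken off the shelf and only the algebraic reshaping needs to be displayed.
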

\begin{proof}
The lemma is proven by applying the formulas from the latter proposition to Algorithm 2 from~\cite{hardouin2018control}, which can be used for the computation of the Kleene star.

Let $M^* = \begin{bsmallmatrix}a & b\\c & d\end{bsmallmatrix}^* = \begin{bsmallmatrix}\pazocal{M}^{11} & \pazocal{M}^{12}\\\pazocal{M}^{21} & \pazocal{M}^{22}\end{bsmallmatrix}$, where each $\pazocal{M}^{ij}$ is a matrix partitioning $M^*$ such that $\pazocal{M}^{11}\in\Rmax^{n_1\times n_1}$; applying Algorithm 2 of~\cite{hardouin2018control}, we easily get
\[
	\begin{array}{rcl}
	\pazocal{M}^{11} & = & a^*b(d\oplus ca^*b)^*ca^*\oplus a^*\\
	&\eqtop{\eqref{eq:aux1}}& a^*bd^*(ca^*bd^*)^*ca^* \oplus a^*\\
	&\eqtop{\eqref{eq:aux2}}& a^*a^*bd^*d^*(ca^*a^*bd^*d^*)^*ca^*a^*\oplus a^*a^*\\
	&\eqtop{\eqref{eq:aux3}}& a^*a^*bd^*(d^*ca^*a^*bd^*)^*d^*ca^*a^*\oplus a^*a^*\\
	&\eqtop{\eqref{eq:aux3}}& a^*(a^*bd^*d^*ca^*)^*a^*bd^*d^*ca^*a^*\oplus a^*a^*\\
	&=& a^*(a^*bd^*d^*ca^*)^+a^*\oplus a^*E_\otimes a^*\\
	&=& a^*((a^*bd^*d^*ca^*)^+\oplus E_\otimes)a^*\\
	&=& a^*(a^*bd^*d^*ca^*)^*a^*\\
	&\eqtop{\eqref{eq:aux2}}& a^*(a^*bd^*ca^*)^*a^*,
	\end{array}
\]
where $a^+ = a a^*$ and, consequently, $a^* = a^+ \oplus E_\otimes$.
The formula for matrix $\pazocal{M}^{12}$ can be obtained in a similar way starting from
\[
	\pazocal{M}^{12} = a^*b(d\oplus ca^*b)^*,
\]
which comes from Algorithm 2 of~\cite{hardouin2018control}.
Formulas for $\pazocal{M}^{21}$ and $\pazocal{M}^{22}$ are directly derived from reasoning about the symmetry of the Kleene star operation.
\end{proof}

Let us now divide matrix $\lambda P_v\oplus \lambda^{-1}I_v\oplus C_v$ into four blocks, such that the top-left block is $C_{1}\in\Rmax^{n\times n}$.
We will first focus on formulas to compute specific elements of the Kleene star of the bottom-right block of $\lambda P_v\oplus \lambda^{-1}I_v \oplus C_v$, which we will denote by $M$.
We use the following notation:
\[
	M^* = \begin{bsmallmatrix}
		C_{2} & P_{2} & \pazocal{E} & \cdots & \pazocal{E}\\
		I_{2} & C_{3} & P_{3} &  \cdots & \pazocal{E}\\
		\pazocal{E} & I_{3} & C_{4} &  \cdots & \pazocal{E}\\
		\svdots & \svdots & \svdots & \sddots & \svdots\\
		\pazocal{E} & \pazocal{E} & \pazocal{E} & \cdots & C_V
	\end{bsmallmatrix}^* 
	= 
	\begin{bsmallmatrix}
		\pazocal{M}^{22} & \pazocal{M}^{23} & \pazocal{M}^{24} & \cdots & \pazocal{M}^{2V}\\
		\pazocal{M}^{32} & \pazocal{M}^{33} & \pazocal{M}^{34} & \cdots & \pazocal{M}^{3V}\\
		\pazocal{M}^{42} & \pazocal{M}^{43} & \pazocal{M}^{44} & \cdots & \pazocal{M}^{4V}\\
		\svdots & \svdots & \svdots & \sddots & \svdots\\
		\pazocal{M}^{V2} & \pazocal{M}^{V3} & \pazocal{M}^{V4} & \cdots & \pazocal{M}^{VV}
	\end{bsmallmatrix},
\]
where each $\pazocal{M}^{ij}$ is an $n\times n$ matrix.
Note that matrix $M$ is a block tridiagonal matrix, i.e., a matrix whose blocks are all $\pazocal{E}$ except for those in the main diagonal and in the first diagonals above and below the main diagonal.
We refer to~\cite{tavakolipour2018tropical,watanabe2018min} for some work related to tridiagonal matrices in the tropical (i.e., either max-plus or min-plus) algebra.

\begin{lemma}[\cite{zorzenon2022switched}]\label{le:aux22}
$\pazocal{M}^{22} = C_2^* (\mathbb{C}_2^{P})^* C_2^*$.	
\end{lemma}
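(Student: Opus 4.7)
The plan is to prove the lemma by backward induction on the block index, reducing the tridiagonal Kleene star to the $2\times 2$ block formula of Lemma~\ref{le:aux1}. For $h\in\dint{2,V}$, let $M_h$ denote the block tridiagonal submatrix of $M$ obtained by discarding its first $h-2$ block rows and columns, so that $M_h$ has diagonal blocks $C_h,C_{h+1},\ldots,C_V$, super-diagonal $P_h,\ldots,P_{V-1}$ and sub-diagonal $I_h,\ldots,I_{V-1}$; in particular $M_2=M$ and $M_V=C_V$. Denote by $N_h$ the top-left $n\times n$ block of $M_h^*$, so that $\pazocal{M}^{22}=N_2$ is what we want to compute. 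The plan is to establish
\[
    N_h \;=\; C_h^*\,(\mathbb{C}^{P}_h)^*\,C_h^*\qquad\text{for all } h\in\dint{2,V-1},
\]
and to read off the lemma from the case $h=2$.

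For the base case $h=V-1$, I would apply Lemma~\ref{le:aux1} directly to the $2\times 2$ block matrix $M_{V-1}=\bigl[\begin{smallmatrix}C_{V-1}&P_{V-1}\\I_{V-1}&C_V\end{smallmatrix}\bigr]$. The top-left block of $M_{V-1}^*$ is $C_{V-1}^*(C_{V-1}^*P_{V-1}C_V^*I_{V-1}C_{V-1}^*)^*C_{V-1}^*$, which after the absorption $C_V^*C_V^*=C_V^*$ (equation~\eqref{eq:aux2}) is exactly $C_{V-1}^*(\mathbb{P}_{V-1}\mathbb{I}_{V-1})^*C_{V-1}^* = C_{V-1}^*(\mathbb{C}^{P}_{V-1})^*C_{V-1}^*$ because $\mathbb{C}^{P}_{V-1}=\mathbb{P}_{V-1}\mathbb{I}_{V-1}$ by the definition in Theorem~\ref{th:improved_algorithm}.

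For the inductive step, fix $h\in\dint{2,V-2}$ and suppose the formula holds for $h+1$. Partition $M_h$ as the $2\times 2$ block matrix with top-left block $C_h$, bottom-right block $M_{h+1}$, and off-diagonal blocks $b=[P_h\ \pazocal{E}\ \cdots\ \pazocal{E}]$ and $c=[I_h\ \pazocal{E}\ \cdots\ \pazocal{E}]^\top$; because of this sparsity one has $b\,M_{h+1}^*\,c = P_h\,N_{h+1}\,I_h$. Applying Lemma~\ref{le:aux1} and then the inductive hypothesis $N_{h+1}=C_{h+1}^*(\mathbb{C}^{P}_{h+1})^*C_{h+1}^*$ yields
\[
    C_h^*\,b\,M_{h+1}^*\,c\,C_h^*
    \;=\; C_h^*\,P_h\,C_{h+1}^*\,(\mathbb{C}^{P}_{h+1})^*\,C_{h+1}^*\,I_h\,C_h^*
    \;=\; \mathbb{P}_h\,(\mathbb{C}^{P}_{h+1})^*\,\mathbb{I}_h
    \;=\; \mathbb{C}^{P}_h,
\]
by the definitions of $\mathbb{P}_h$, $\mathbb{I}_h$ and $\mathbb{C}^{P}_h$. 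Substituting back into the formula from Lemma~\ref{le:aux1} gives $N_h=C_h^*(\mathbb{C}^{P}_h)^*C_h^*$, closing the induction and, for $h=2$, the lemma.

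The only subtle point I expect is bookkeeping the idempotency $C_j^*C_j^*=C_j^*$ that glues the induction hypothesis to the next $\mathbb{P}_h$ and $\mathbb{I}_h$ factors; once this identity is used consistently the derivation is entirely mechanical, since all the sparsity in $b$ and $c$ collapses the product $bM_{h+1}^*c$ to the single scalar-position formula $P_hN_{h+1}I_h$. No other combinatorial argument is required, and the same template will be reused in subsequent lemmas to obtain the remaining blocks $\pazocal{M}^{ij}$.
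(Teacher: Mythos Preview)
Your proof is correct and follows essentially the same approach as the paper: the lemma is cited from~\cite{zorzenon2022switched} without a self-contained proof here, but the analogous Lemma~\ref{le:auxVV} is proven in the paper by exactly the same inductive scheme (partition to isolate one corner block, apply Lemma~\ref{le:aux1}, and invoke the induction hypothesis on the remaining tridiagonal submatrix). Your backward induction on the block index $h$ with $V$ fixed is merely a reparametrization of the paper's induction on $V$, and the key step $\mathbb{C}^P_h=\mathbb{P}_h(\mathbb{C}^P_{h+1})^*\mathbb{I}_h$ you use is the same recursion the paper exploits in its complexity discussion after Theorem~\ref{th:improved_algorithm}.
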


\begin{lemma}\label{le:auxVV}
$\pazocal{M}^{VV} = C_{V}^* (\mathbb{C}_{V-1}^{I})^* C_{V}^*$.	
\end{lemma}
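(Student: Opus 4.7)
The plan is to prove Lemma~\ref{le:auxVV} by induction on $V$, in direct analogy with Lemma~\ref{le:aux22}, by applying Lemma~\ref{le:aux1} with the $2\times 2$ block partition peeling off the \emph{last} block (instead of the first) of $M$. The base case is $V=3$: here $M = \begin{bsmallmatrix} C_2 & P_2\\ I_2 & C_3\end{bsmallmatrix}$, and Lemma~\ref{le:aux1} gives directly
\[
\pazocal{M}^{33} \;=\; C_3^*\bigl(C_3^* I_2 C_2^* P_2 C_3^*\bigr)^* C_3^* \;=\; C_3^*(\mathbb{I}_2\mathbb{P}_2)^* C_3^* \;=\; C_3^*(\mathbb{C}_2^{I})^* C_3^*,
\]
since $\mathbb{C}^I_2=\mathbb{I}_2\mathbb{P}_2$ by the definition of $\mathbb{C}^I_h$.

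For the inductive step, assume the claim holds for the analogous bottom-right block of any block tridiagonal matrix built out of $V-2$ diagonal blocks. Partition $M$ as $\begin{bsmallmatrix} a & b \\ c & d\end{bsmallmatrix}$ with $d=C_V$, $a$ the $(V-2)$-blocks-wide block tridiagonal matrix whose diagonal blocks are $C_2,\ldots,C_{V-1}$, $b$ the column with only $P_{V-1}$ in its last block position, and $c$ the row with only $I_{V-1}$ in its last block position. Applying Lemma~\ref{le:aux1}, and exploiting the sparsity of $b$ and $c$ so that $c a^* b = I_{V-1}\,(a^*)_{V-1,V-1}\,P_{V-1}$, one obtains
\[
\pazocal{M}^{VV} \;=\; C_V^*\Bigl(C_V^* I_{V-1}\,(a^*)_{V-1,V-1}\,P_{V-1} C_V^*\Bigr)^* C_V^*,
\]
where $(a^*)_{V-1,V-1}$ denotes the bottom-right block of $a^*$.

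The inductive hypothesis, applied to $a$, yields $(a^*)_{V-1,V-1} = C_{V-1}^* (\mathbb{C}^I_{V-2})^* C_{V-1}^*$. Substituting this back and collapsing $C_V^* I_{V-1} C_{V-1}^* = \mathbb{I}_{V-1}$ and $C_{V-1}^* P_{V-1} C_V^* = \mathbb{P}_{V-1}$, we get
\[
\pazocal{M}^{VV} \;=\; C_V^*\bigl(\mathbb{I}_{V-1}(\mathbb{C}^I_{V-2})^*\mathbb{P}_{V-1}\bigr)^* C_V^* \;=\; C_V^*(\mathbb{C}^I_{V-1})^* C_V^*,
\]
where the last equality uses the recursive form $\mathbb{C}^I_h = \mathbb{I}_h(\mathbb{C}^I_{h-1})^*\mathbb{P}_h$ obtained by unrolling the definition of $\mathbb{C}^I_h$ from Theorem~\ref{th:improved_algorithm}.

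The main subtlety to guard against is the coherence of the induction hypothesis: the quantity $\mathbb{C}^I_{V-2}$ that appears when invoking the hypothesis on $a$ must coincide with the one defined for the full matrix. This is not an obstacle once one observes that $\mathbb{C}^I_{V-2}$ is built from $\mathbb{I}_2,\ldots,\mathbb{I}_{V-2}$ and $\mathbb{P}_2,\ldots,\mathbb{P}_{V-2}$, which only involve $C_2,\ldots,C_{V-1}$ and hence do not depend on $C_V$. With this matching verified, the proof becomes a direct mirror of Lemma~\ref{le:aux22}: swapping the order of blocks interchanges the roles of sub- and super-diagonal, so $P$ and $I$ exchange positions in the final expression, producing $\mathbb{C}^I_{V-1}$ in place of $\mathbb{C}^P_2$.
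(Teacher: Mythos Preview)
Your proof is correct and follows essentially the same approach as the paper's own proof: induction on $V$ with the same base case $V=3$, the same block partition peeling off $C_V$ as the bottom-right block, and the same application of the induction hypothesis to identify $(a^*)_{V-1,V-1}=C_{V-1}^*(\mathbb{C}^I_{V-2})^*C_{V-1}^*$ before collapsing via $\mathbb{I}_{V-1}$, $\mathbb{P}_{V-1}$, and the recursion $\mathbb{C}^I_h=\mathbb{I}_h(\mathbb{C}^I_{h-1})^*\mathbb{P}_h$. Your additional remark verifying that $\mathbb{C}^I_{V-2}$ is independent of $C_V$ (so the induction hypothesis applies cleanly) is a worthwhile clarification that the paper leaves implicit.
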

\begin{proof}
The proof is done by induction on $V$.
For $V = 3$, the formula can be obtained directly from Lemma~\ref{le:aux1}:
\[
	\mathcal{M}^{33} = C_3^*(C_3^*I_2C_2^*P_2C_3^*)^*C_3^* = C_3^*(\mathbb{C}_2^I)^*C_3^*.
\]
Suppose now that the formula holds for $V-1$ with $V \geq 4$; we prove that it holds also for $V$.
By partitioning matrix $M$ in four blocks, such that the bottom-right block coincides with $C_V$, from Lemma~\ref{le:aux1} we get
\[
\pazocal{M}^{VV} = C_V^*\left(C_V^*
\begin{bsmallmatrix} \mathcal{E} & \cdots & \mathcal{E} & I_{V-1} \end{bsmallmatrix} 
\begin{bsmallmatrix} 
    C_2 & \cdots & \mathcal{E} & \mathcal{E}\\
    \svdots & \sddots & \svdots & \svdots\\
    \mathcal{E} & \cdots & C_{V-2} & P_{V-2}\\
    \mathcal{E} & \cdots & I_{V-2} & C_{V-1}
\end{bsmallmatrix}^*
\begin{bsmallmatrix} \mathcal{E} \\ \svdots \\ \mathcal{E} \\ P_{V-1} \end{bsmallmatrix} 
C_V^*\right)^* C_V^*;
\] 
finally, we can use the induction hypothesis to simplify the latter expression into 
\[
\mathcal{M}^{VV} = C_V^*(C_V^* I_{V-1} C_{V-1}^* (\mathbb{C}_{V-2}^I)^* C_{V-1}^* P_{V-1} C_V^*)^* C_V^* = C_V^*(\mathbb{C}_{V-1}^I)^* C_V^*.
\] 
\end{proof}

\begin{remark}\label{re:auxii}
	More in general, one can prove that, for all $i\in\dint{2,V}$,
	\[
		\pazocal{M}^{ii} = C_i^*(\mathbb{C}_{i}^P \oplus \mathbb{C}_{i-1}^I)^*C_i^*,
	\]
	considering $\mathbb{C}_V^P = \mathbb{C}_{1}^I = \pazocal{E}$.
\end{remark}

\begin{lemma}\label{le:aux2V}
$\pazocal{M}^{2V} = (\mathbb{C}^P_2)^* \mathbb{P}_2 (\mathbb{C}^P_3)^* \mathbb{P}_3 \cdots (\mathbb{C}^P_{V-1})^* \mathbb{P}_{V-1}$.	
\end{lemma}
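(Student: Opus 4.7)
The plan is to prove this by induction on $V$, using a block-partition of $M$ that peels off the top-left $C_2$ and applies the inductive hypothesis to the bottom-right submatrix $M'$ (which covers levels $3$ through $V$).

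For the base case $V=3$, the matrix is $M=\bigl[\begin{smallmatrix} C_2 & P_2 \\ I_2 & C_3 \end{smallmatrix}\bigr]$, and a direct application of Lemma~\ref{le:aux1} gives the top-right block $(C_2^*P_2C_3^*I_2C_2^*)^*C_2^*P_2C_3^*$. Using $C_2^*C_2^*=C_2^*$ and $C_3^*C_3^*=C_3^*$, this rewrites as $(\mathbb{P}_2\mathbb{I}_2)^*\mathbb{P}_2$, which is $(\mathbb{C}^P_2)^*\mathbb{P}_2$ since for $V=3$ the definition yields $\mathbb{C}^P_2=\mathbb{P}_2\mathbb{I}_2$.

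For the inductive step, assuming the formula holds for matrices with $V-1$ blocks, I would partition $M$ as $\bigl[\begin{smallmatrix} C_2 & b \\ c & M'\end{smallmatrix}\bigr]$ with $b=[P_2,\pazocal{E},\ldots,\pazocal{E}]$, $c=[I_2,\pazocal{E},\ldots,\pazocal{E}]^\top$, and $M'$ the block-tridiagonal submatrix on levels $3$ through $V$. Applying Lemma~\ref{le:aux1} and reading off the appropriate block yields
\[
\pazocal{M}^{2V}=(C_2^*P_2(M'^*)_{1,1}I_2C_2^*)^*\,C_2^*P_2(M'^*)_{1,V-2}.
\]
Crucially, the $\mathbb{C}^P_h$ for the submatrix $M'$ agree with those for $M$ whenever $h\ge 3$, because the definition of $\mathbb{C}^P_h$ only involves levels $\ge h$. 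So Lemma~\ref{le:aux22} applied to $M'$ gives $(M'^*)_{1,1}=C_3^*(\mathbb{C}^P_3)^*C_3^*$, which combined with the definitions yields $C_2^*P_2(M'^*)_{1,1}I_2C_2^*=\mathbb{P}_2(\mathbb{C}^P_3)^*\mathbb{I}_2=\mathbb{C}^P_2$ via the recursion $\mathbb{C}^P_2=\mathbb{P}_2(\mathbb{C}^P_3)^*\mathbb{I}_2$. The induction hypothesis applied to $M'$ then gives $(M'^*)_{1,V-2}=(\mathbb{C}^P_3)^*\mathbb{P}_3\cdots(\mathbb{C}^P_{V-1})^*\mathbb{P}_{V-1}$, and substituting produces the intermediate expression
\[
\pazocal{M}^{2V}=(\mathbb{C}^P_2)^*\,C_2^*P_2\,(\mathbb{C}^P_3)^*\mathbb{P}_3\cdots(\mathbb{C}^P_{V-1})^*\mathbb{P}_{V-1}.
\]

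The main obstacle is showing this equals the target formula, i.e.\ replacing $C_2^*P_2$ by $\mathbb{P}_2=C_2^*P_2C_3^*$. The key observation I would establish is that $X:=(\mathbb{C}^P_3)^*\mathbb{P}_3\cdots(\mathbb{C}^P_{V-1})^*\mathbb{P}_{V-1}$ satisfies $C_3^*X=X$. This follows because $\mathbb{P}_3=C_3^*P_3C_4^*$ and $\mathbb{C}^P_3$ begin with $C_3^*$, so $C_3^*(\mathbb{C}^P_3)^*\mathbb{P}_3(\ldots)=\bigl(C_3^*\oplus\mathbb{C}^P_3(\mathbb{C}^P_3)^*\bigr)\mathbb{P}_3(\ldots)=\mathbb{P}_3(\ldots)\oplus\mathbb{C}^P_3(\mathbb{C}^P_3)^*\mathbb{P}_3(\ldots)=(\mathbb{C}^P_3)^*\mathbb{P}_3(\ldots)=X$, using $C_3^*\mathbb{P}_3=\mathbb{P}_3$. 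Once $C_3^*X=X$ is established, $C_2^*P_2X=C_2^*P_2C_3^*X=\mathbb{P}_2X$, which completes the inductive step and yields the desired formula.
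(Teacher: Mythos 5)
Your proof is correct and follows essentially the same route as the paper's: the same induction on $V$, the same block partition isolating $C_2$, and the same use of Lemmas~\ref{le:aux1} and~\ref{le:aux22} to produce the prefactor $(\mathbb{C}^P_2)^*$ and to invoke the inductive hypothesis on the shifted submatrix. The only (cosmetic) difference is the final absorption of $C_3^*$ needed to turn $C_2^*P_2$ into $\mathbb{P}_2$: the paper shifts $C_3^*$ through the Kleene star via $a^*a^*=a^*$ and $a(ba)^*=(ab)^*a$, whereas you expand $(\mathbb{C}^P_3)^*=E_\otimes\oplus\mathbb{C}^P_3(\mathbb{C}^P_3)^*$ and use idempotency of $C_3^*$ --- both are valid.
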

\begin{proof}
The proof is done by induction on $V$.
For $V = 3$, the formula can be obtained directly from Lemma~\ref{le:aux1}:
\[
	\mathcal{M}^{23} = (C_2^*P_2C_3^*I_2C_2^*)^*C_2^*P_2C_3^* = (\mathbb{C}_2^P)^* \mathbb{P}_2.
\]
Suppose now that the formula holds for $V-1$ with $V \geq 4$; we prove that it holds also for $V$.
By partitioning matrix $M$ in four blocks, such that the upper-left block coincides with $C_2$, from Lemma~\ref{le:aux1} we get that $\pazocal{M}^{2V}$ is equal to the rightmost block of matrix
\begin{equation*}
	\left(C_2^* 
	\begin{bsmallmatrix}
		P_2 & \pazocal{E} & \cdots & \pazocal{E} 
	\end{bsmallmatrix}
	\begin{bsmallmatrix}
		C_3 & P_3 & \cdots & \pazocal{E}\\
		I_3 & C_4 & \cdots & \pazocal{E}\\
		\svdots & \svdots & \sddots & \svdots\\
		\pazocal{E} & \pazocal{E} & \cdots & C_V
	\end{bsmallmatrix}^*
	\begin{bsmallmatrix}
		I_2 \\ \pazocal{E} \\ \svdots \\ \pazocal{E} 
	\end{bsmallmatrix}
	C_2^*
	\right)^*
	C_2^* 
	\begin{bsmallmatrix}
		P_2 & \pazocal{E} & \cdots & \pazocal{E} 
	\end{bsmallmatrix}
	\begin{bsmallmatrix}
		C_3 & P_3 & \cdots & \pazocal{E}\\
		I_3 & C_4 & \cdots & \pazocal{E}\\
		\svdots & \svdots & \sddots & \svdots\\
		\pazocal{E} & \pazocal{E} & \cdots & C_V
	\end{bsmallmatrix}^*.
\end{equation*}
Expanding the expression using the induction hypothesis and Lemma~\ref{le:aux22}, we get
\[
	\begin{array}{rcl}
	\pazocal{M}^{2V} &=& (\mathbb{C}_2^P)^* C_2^* P_2 (\mathbb{C}^P_3)^* \mathbb{P}_3 (\mathbb{C}^P_4)^* \mathbb{P}_4 \cdots (\mathbb{C}^P_{V-1})^* \mathbb{P}_{V-1} \\
	&=& (\mathbb{C}_2^P)^* C_2^* P_2 (C_3^* P_3 \cdots I_3 C_3^*)^* C_3^*P_3 C_4^* (\mathbb{C}^P_4)^* \mathbb{P}_4 \mathbb \cdots (\mathbb{C}^P_{V-1})^* \mathbb{P}_{V-1} \\
	&\eqtop{\eqref{eq:aux2}}& (\mathbb{C}_2^P)^* C_2^* P_2 (C_3^*C_3^* P_3 \cdots I_3 C_3^*)^* C_3^* C_3^*P_3 C_4^* (\mathbb{C}^P_4)^* \mathbb{P}_4 \mathbb \cdots (\mathbb{C}^P_{V-1})^* \mathbb{P}_{V-1} \\
	&\eqtop{\eqref{eq:aux3}}& (\mathbb{C}_2^P)^* C_2^* P_2 C_3^*(C_3^* P_3 \cdots I_3 C_3^*C_3^*)^* C_3^*P_3 C_4^* (\mathbb{C}^P_4)^* \mathbb{P}_4 \mathbb \cdots (\mathbb{C}^P_{V-1})^* \mathbb{P}_{V-1} \\
	&\eqtop{\eqref{eq:aux2}}& (\mathbb{C}_2^P)^* \mathbb{P}_2(\mathbb{C}_3^P)^* \mathbb{P}_3 (\mathbb{C}^P_4)^* \mathbb{P}_4 \mathbb \cdots (\mathbb{C}^P_{V-1})^* \mathbb{P}_{V-1},
	\end{array}
\]
which is the desired formula.
\end{proof}

\begin{remark}\label{re:aux2i}
	With similar techniques, one can prove that, for all $V\geq 4$ and for all $i\in\dint{3,V-1}$,
	\[
		\pazocal{M}^{2i} = (\mathbb{C}^P_2)^* \mathbb{P}_2 (\mathbb{C}^P_3)^* \mathbb{P}_3 \cdots (\mathbb{C}^P_{i-1})^* \mathbb{P}_{i-1} (\mathbb{C}^P_{i})^*.
	\]
\end{remark}

\begin{lemma}\label{le:auxV2}
$\pazocal{M}^{V2} = (\mathbb{C}_{V-1}^{I})^* \mathbb{I}_{V-1} (\mathbb{C}_{V-2}^{I})^* \mathbb{I}_{V-2} \cdots (\mathbb{C}_{2}^{I})^* \mathbb{I}_{2}$.	
\end{lemma}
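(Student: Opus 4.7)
The plan is to prove the formula by induction on $V$, mirroring the structure of the proof of Lemma~\ref{le:aux2V} but partitioning $M$ from the opposite end. The key idea is that $\pazocal{M}^{V2}$ is the ``dual'' of $\pazocal{M}^{2V}$ under the symmetry obtained by reversing the order of block rows/columns and swapping the roles of $P_h$ and $I_h$.

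For the base case $V=3$, I would partition $M$ into $a=C_2$, $b=P_2$, $c=I_2$, $d=C_3$ and apply Lemma~\ref{le:aux1} to extract the bottom-left block of $M^*$, obtaining
\[
\pazocal{M}^{32} \;=\; (d^*ca^*bd^*)^* d^*ca^* \;=\; (C_3^*I_2C_2^*P_2C_3^*)^* C_3^*I_2C_2^* \;=\; (\mathbb{C}_2^I)^*\mathbb{I}_2,
\]
which matches the target formula.

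For the inductive step, assume the formula holds for the $(V-2)\times(V-2)$ block tridiagonal matrix $M'$ obtained by deleting the last block row and column of $M$. Partition $M$ so that $d=C_V$, $a=M'$, $c=[\,\pazocal{E}\ \cdots\ \pazocal{E}\ I_{V-1}\,]$, $b=[\,\pazocal{E}\ \cdots\ \pazocal{E}\ P_{V-1}\,]^\top$, and apply Lemma~\ref{le:aux1}. Sparsity of $b$ and $c$ collapses the relevant products to
\[
d^*ca^*bd^* \;=\; C_V^* I_{V-1}\, M'^*_{(V-1,V-1)}\, P_{V-1}C_V^*.
\]
Using Lemma~\ref{le:auxVV} applied to $M'$ gives $M'^*_{(V-1,V-1)} = C_{V-1}^*(\mathbb{C}_{V-2}^I)^*C_{V-1}^*$, and hence, substituting the definitions of $\mathbb{I}_{V-1}$, $\mathbb{P}_{V-1}$ and using the recursion $\mathbb{C}_h^I = \mathbb{I}_h(\mathbb{C}_{h-1}^I)^*\mathbb{P}_h$, this block equals $\mathbb{C}_{V-1}^I$. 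Therefore the outer Kleene star becomes $(\mathbb{C}_{V-1}^I)^*$. The leftmost block of $d^*ca^*$ is $C_V^* I_{V-1}\, M'^*_{(V-1,2)}$, which by the inductive hypothesis equals $C_V^*I_{V-1}(\mathbb{C}_{V-2}^I)^*\mathbb{I}_{V-2}\cdots(\mathbb{C}_2^I)^*\mathbb{I}_2$. Combining, I would arrive at
\[
\pazocal{M}^{V2} \;=\; (\mathbb{C}_{V-1}^I)^*\, C_V^* I_{V-1}\,(\mathbb{C}_{V-2}^I)^*\mathbb{I}_{V-2}\cdots(\mathbb{C}_2^I)^*\mathbb{I}_2.
\]

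The final step, and the main obstacle, is the algebraic manipulation showing $(\mathbb{C}_{V-1}^I)^* C_V^* I_{V-1} (\mathbb{C}_{V-2}^I)^* = (\mathbb{C}_{V-1}^I)^* \mathbb{I}_{V-1}(\mathbb{C}_{V-2}^I)^*$, i.e., that the ``extra'' factor $C_{V-1}^*$ hidden in $\mathbb{I}_{V-1}=C_V^*I_{V-1}C_{V-1}^*$ can be absorbed into the surrounding Kleene stars. This is the mirror image of the chain of identities used at the end of the proof of Lemma~\ref{le:aux2V}: expanding $(\mathbb{C}_{V-1}^I)^* = \bigl(\mathbb{I}_{V-1}(\mathbb{C}_{V-2}^I)^*\mathbb{P}_{V-1}\bigr)^*$, introducing a duplicate factor $C_{V-1}^*C_{V-1}^*=C_{V-1}^*$ via~\eqref{eq:aux2}, and then moving one copy of $C_{V-1}^*$ inside/outside the star with~\eqref{eq:aux3} so that it attaches to the bare $C_V^*I_{V-1}$ to form $\mathbb{I}_{V-1}$. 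Once this simplification is carried out, the target formula follows and the induction is complete.
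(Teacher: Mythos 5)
Your overall route is exactly the paper's: induction on $V$, partitioning $M$ so that the bottom-right block is $C_V$, extracting the bottom-left block of $M^*$ via Lemma~\ref{le:aux1}, identifying the outer star as $(\mathbb{C}_{V-1}^I)^*$ through Lemma~\ref{le:auxVV} applied to $M'$ and the recursion $\mathbb{C}_{V-1}^I=\mathbb{I}_{V-1}(\mathbb{C}_{V-2}^I)^*\mathbb{P}_{V-1}$, and invoking the inductive hypothesis to arrive at
$\pazocal{M}^{V2}=(\mathbb{C}_{V-1}^I)^*\,C_V^*I_{V-1}\,(\mathbb{C}_{V-2}^I)^*\mathbb{I}_{V-2}\cdots(\mathbb{C}_2^I)^*\mathbb{I}_2$.
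The base case and everything up to that intermediate expression are correct and match the paper.

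The gap is in the final absorption step. You propose to obtain the missing $C_{V-1}^*$ by expanding the star on the \emph{left}, $(\mathbb{C}_{V-1}^I)^*$. That cannot work: $\mathbb{C}_{V-1}^I=C_V^*I_{V-1}C_{V-1}^*(\mathbb{C}_{V-2}^I)^*C_{V-1}^*P_{V-1}C_V^*$ begins and ends with $C_V^*$, so~\eqref{eq:aux3} only lets you commute a $C_V^*$ past that star; the $C_{V-1}^*$'s sit in the middle of the starred expression, out of reach of~\eqref{eq:aux3}, and in any case a factor extracted from the left star would be \emph{prepended} to $C_V^*I_{V-1}$, whereas $\mathbb{I}_{V-1}=C_V^*I_{V-1}C_{V-1}^*$ needs a $C_{V-1}^*$ \emph{appended}. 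The isolated identity you state, $(\mathbb{C}_{V-1}^I)^*C_V^*I_{V-1}(\mathbb{C}_{V-2}^I)^*=(\mathbb{C}_{V-1}^I)^*\mathbb{I}_{V-1}(\mathbb{C}_{V-2}^I)^*$, is in fact false in general (take $I_{V-2}=P_{V-1}=\pazocal{E}$: the left side reduces to $C_V^*I_{V-1}$ while the right side is $C_V^*I_{V-1}C_{V-1}^*$, and $C_{V-1}^*\succeq E_\otimes$ strictly in general). The extra $C_{V-1}^*$ must be harvested from the pair $(\mathbb{C}_{V-2}^I)^*\,\mathbb{I}_{V-2}$ on the \emph{right} of the bare factor --- the true analogue of the end of the proof of Lemma~\ref{le:aux2V}, where the $C_3^*$ is pulled out of $(\mathbb{C}_3^P)^*\mathbb{P}_3$ sitting to the right of $C_2^*P_2$. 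Concretely, $\mathbb{I}_{V-2}=C_{V-1}^*I_{V-2}C_{V-2}^*$ begins with $C_{V-1}^*$ and $\mathbb{C}_{V-2}^I$ begins and ends with $C_{V-1}^*$; hence, by~\eqref{eq:aux2} and~\eqref{eq:aux3}, $(\mathbb{C}_{V-2}^I)^*C_{V-1}^*=C_{V-1}^*(\mathbb{C}_{V-2}^I)^*$ and
$(\mathbb{C}_{V-2}^I)^*\mathbb{I}_{V-2}=(\mathbb{C}_{V-2}^I)^*C_{V-1}^*C_{V-1}^*I_{V-2}C_{V-2}^*=C_{V-1}^*(\mathbb{C}_{V-2}^I)^*\mathbb{I}_{V-2}$.
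Inserting this into your intermediate expression converts $C_V^*I_{V-1}$ into $\mathbb{I}_{V-1}$ and closes the induction.
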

\begin{proof}
The proof is done by induction on $V$.
For $V = 3$, the formula can be obtained directly from Lemma~\ref{le:aux1}:
\[
	\mathcal{M}^{32} = (C_3^*I_2C_2^*P_2C_3^*)^*C_3^*I_2C_2^* = (\mathbb{C}_2^I)^* \mathbb{I}_2.
\]
Suppose now that the formula holds for $V-1$ with $V \geq 4$; we prove that it holds also for $V$.
By partitioning matrix $M$ in four blocks, such that the bottom-right block coincides with $C_{V}$, from Lemma~\ref{le:aux1} we get that $\pazocal{M}^{V2}$ is equal to the leftmost block of matrix
\begin{equation*}
	\left(C_V^* 
	\begin{bsmallmatrix}
		\pazocal{E} & \cdots & \pazocal{E} & I_{V-1}
	\end{bsmallmatrix}
	\begin{bsmallmatrix}
		C_2 & \cdots & \pazocal{E}\\
		\svdots & \sddots & \svdots\\
		\pazocal{E} & \cdots & C_{V-1}
	\end{bsmallmatrix}^*
	\begin{bsmallmatrix}
		\pazocal{E} \\ \svdots \\ \pazocal{E} \\ P_{V-1}
	\end{bsmallmatrix}
	C_V^*
	\right)^*
	C_V^* 
	\begin{bsmallmatrix}
		\pazocal{E} & \cdots & \pazocal{E} & I_{V-1}
	\end{bsmallmatrix}
	\begin{bsmallmatrix}
		C_2 & \cdots & \pazocal{E}\\
		\svdots & \sddots & \svdots\\
		\pazocal{E} & \cdots & C_{V-1}
	\end{bsmallmatrix}^*.
\end{equation*}
Expanding the expression using the induction hypothesis, Lemma~\ref{le:auxVV},~\eqref{eq:aux2}, and~\eqref{eq:aux3}, we can get the desired formula with computations analogous to those in the proof of Lemma~\ref{le:aux2V}.
\end{proof}

\begin{remark}\label{re:auxi2}
	With similar techniques, one can prove that, for all $V\geq 4$ and for all $i\in\dint{3,V-1}$,
	\[
		\pazocal{M}^{i2} = (\mathbb{C}^P_i)^* \mathbb{I}_{i-1} (\mathbb{C}^P_{i-1})^* \mathbb{I}_{i-2} \cdots (\mathbb{C}^P_{3})^* \mathbb{I}_{2} (\mathbb{C}^P_{2})^*.
	\]
\end{remark}

In the next proposition, we focus on the top-left $n\times n$ block of matrix $(\lambda P_v \oplus \lambda^{-1} I_v \oplus C_v)^*$; we denote this block by $N\in\Rbar^{n\times n}$.

\begin{proposition}\label{pr:aux}
$N = C_1^*(\lambda \mathbb{M}^P \oplus \lambda^{-1} \mathbb{M}^I \oplus \mathbb{M}^C)^*C_1^*$.
\end{proposition}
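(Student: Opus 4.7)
The plan is to apply Lemma~\ref{le:aux1} with the natural partition of $\lambda P_v \oplus \lambda^{-1}I_v \oplus C_v$ into a $1\times 1$ top-left block equal to $C_1$ and a $(V-1)\times (V-1)$ bottom-right block equal to the block tridiagonal matrix $M$ of the previous lemmas. Setting $b=\begin{bmatrix}P_1 & \pazocal{E} & \cdots & \pazocal{E} & \lambda^{-1}I_V\end{bmatrix}$ and $c=\begin{bmatrix}I_1 & \pazocal{E} & \cdots & \pazocal{E} & \lambda P_V\end{bmatrix}^\top$, the lemma immediately yields
\[
N = C_1^*\bigl(C_1^* b M^* c C_1^*\bigr)^* C_1^*,
\]
so the result reduces to proving the algebraic identity
\[
C_1^* b M^* c C_1^* = \lambda\mathbb{M}^P \oplus \lambda^{-1}\mathbb{M}^I \oplus \mathbb{M}^C.
\]

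Because $b$ and $c$ have only two non-$\pazocal{E}$ block entries each (at positions $1$ and $V-1$), the product $bM^*c$ collapses to four terms, each involving one of the ``corner'' blocks $\pazocal{M}^{22},\pazocal{M}^{2V},\pazocal{M}^{V2},\pazocal{M}^{VV}$ whose closed forms are supplied by Lemmas~\ref{le:aux22}, \ref{le:auxVV}, \ref{le:aux2V}, and \ref{le:auxV2}. For the ``diagonal'' contributions I expect a direct match:
\[
C_1^*P_1\pazocal{M}^{22}I_1 C_1^* = \mathbb{P}_1(\mathbb{C}_2^P)^*\mathbb{I}_1 = \mathbb{C}_1^P,\qquad
C_1^*\lambda^{-1}I_V\pazocal{M}^{VV}\lambda P_V C_1^* = \mathbb{I}_V(\mathbb{C}_{V-1}^I)^*\mathbb{P}_V = \mathbb{C}_V^I,
\]
which together give $\mathbb{M}^C$ by the definition of $\mathbb{M}^C = \mathbb{C}_1^P \oplus \mathbb{C}_V^I$, using the convention $C_{V+1}=C_1$ to identify $\mathbb{I}_V=C_1^*I_VC_V^*$ and $\mathbb{P}_V = C_V^*P_VC_1^*$.

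For the two ``off-diagonal'' contributions I will show
\[
C_1^*P_1\,\pazocal{M}^{2V}\lambda P_V C_1^* = \lambda\mathbb{M}^P,\qquad C_1^*\lambda^{-1}I_V\,\pazocal{M}^{V2} I_1 C_1^* = \lambda^{-1}\mathbb{M}^I.
\]
Here the subtlety I anticipate to be the main obstacle is reconciling the boundary factors: the left-hand side starts with $C_1^*P_1$ and ends with $P_VC_1^*$, whereas $\mathbb{M}^P$ contains the full $\mathbb{P}_1 = C_1^*P_1C_2^*$ and $\mathbb{P}_V = C_V^*P_VC_1^*$, so I need to absorb the missing $C_2^*$ on the left and $C_V^*$ on the right. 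This will be handled by exploiting the structural fact $\mathbb{C}_h^P = C_h^*\cdot Y \cdot C_h^*$ for a suitable matrix $Y$: splitting $(\mathbb{C}_2^P)^*\mathbb{P}_2 = \bigoplus_{k\geq 0}(\mathbb{C}_2^P)^k\mathbb{P}_2$, the $k=0$ term is handled by $C_2^*\mathbb{P}_2 = \mathbb{P}_2$ (as $\mathbb{P}_2$ already starts with $C_2^*$ and $C_2^*C_2^* = C_2^*$), and for $k\geq 1$ each power $(\mathbb{C}_2^P)^k$ already has a leading $C_2^*$ so the extra $C_2^*$ is absorbed. A symmetric argument on the right, together with the analogous computations for $\pazocal{M}^{V2}$, handles the $\mathbb{M}^I$ term.

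Collecting the four contributions yields the desired identity, and substituting back into Lemma~\ref{le:aux1} completes the proof. The heart of the argument is the sparsity-driven collapse of $bM^*c$ to four terms combined with the $C_h^*$-absorption lemma at the boundaries; everything else reduces to assembling the previously established closed-form expressions for the corner blocks of $M^*$.
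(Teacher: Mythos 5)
Your proposal is correct and follows essentially the same route as the paper: partition with $C_1$ as the top-left block, apply Lemma~\ref{le:aux1}, collapse $bM^*c$ to the four corner-block terms, and identify them with $\mathbb{C}_1^P$, $\mathbb{C}_V^I$, $\lambda\mathbb{M}^P$, and $\lambda^{-1}\mathbb{M}^I$ via Lemmas~\ref{le:aux22}--\ref{le:auxV2} together with the absorption identities~\eqref{eq:aux2} and~\eqref{eq:aux3}. The only difference is that you spell out the boundary-factor absorption (splitting $(\mathbb{C}_2^P)^*$ into its $k=0$ and $k\geq 1$ powers) that the paper leaves implicit, which is a valid and indeed welcome elaboration of the same argument.
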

\begin{proof}
Partitioning matrix $\lambda P_v\oplus \lambda^{-1} I_v \oplus C_v$ into four blocks, such that the top-left block is $C_1$, Lemma~\ref{le:aux1} gives us
\[\def\arraystretch{1.5}
	\begin{array}{rcl}
N &=& C_1^*\left(C_1^*
\begin{bsmallmatrix}
P_1 & \pazocal{E} & \cdots & \pazocal{E} & \lambda^{-1} I_V
\end{bsmallmatrix}
\begin{bsmallmatrix}
	C_2 & P_2 & \cdots & \pazocal{E}& \pazocal{E}\\
	I_2 & C_3 & \cdots & \pazocal{E}& \pazocal{E}\\
	\svdots & \svdots & \sddots & \svdots & \svdots\\
	\pazocal{E} & \pazocal{E} & \cdots & C_{V-1} & P_{V-1}\\
	\pazocal{E} & \pazocal{E} & \cdots & I_{V-1} & C_V
\end{bsmallmatrix}^*
\begin{bsmallmatrix}
	I_1 \\ \pazocal{E} \\ \svdots \\ \pazocal{E} \\ \lambda P_V
\end{bsmallmatrix}
C_1^*\right)^* C_1^*\\
&=& 
C_1^*(C_1^*
(P_1 \pazocal{M}^{22} I_1 \oplus \lambda^{-1}I_V \pazocal{M}^{V2} I_1 \\
& \oplus & \lambda P_1 \pazocal{M}^{2V} P_V \oplus I_V \pazocal{M}^{VV} P_V)
C_1^*)^* C_1^*\\
&=& C_1^*(\mathbb{C}^P_1 \oplus \lambda^{-1}\mathbb{M}^I \oplus \lambda \mathbb{M}^P \oplus \mathbb{C}^I_V)^* C_1^*,
\end{array}
\]
where the latter step can be obtained from Lemmas~\ref{le:aux22}-\ref{le:auxV2},~\eqref{eq:aux2}, and~\eqref{eq:aux3}.
Finally, the commutativity of $\oplus$ concludes the proof.
\end{proof}

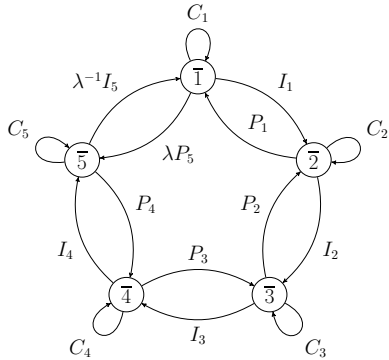
\begin{figure}
	\centering
	\resizebox{.45\linewidth}{!}{
	\begin{tikzpicture}[node distance=1.4cm and 1.4cm,>=stealth',bend angle=45,double distance=.5mm,arc/.style={->,>=stealth'},place/.style={circle,thick,minimum size=10mm,draw}]
\def \radius {3.5cm}
\Large

\foreach \t in {1,...,5}
{
\node [place,
       label=center:{$\overline{\t}$}
       ] at ({-360/5*(\t-1)+90}:\radius) (t\t) {};
\draw [arc] (t\t) to [out=-360/5*\t+360/5+90+25,in=-360/5*\t+360/5+90-25,loop] node[auto] {\textcolor{black}{$C_{\t}$}} (t\t);
}
\draw [arc] (t1) to [bend left=30] node[auto] {\textcolor{black}{$I_{1}$}} (t2);
\draw [arc] (t2) to [bend left=30] node[above right] {\textcolor{black}{$P_1$}} (t1);
\draw [arc] (t2) to [bend left=30] node[auto] {\textcolor{black}{$I_2$}} (t3);
\draw [arc] (t3) to [bend left=30] node[auto] {\textcolor{black}{$P_2$}} (t2);
\draw [arc] (t3) to [bend left=30] node[auto] {\textcolor{black}{$I_3$}} (t4);
\draw [arc] (t4) to [bend left=30] node[auto] {\textcolor{black}{$P_3$}} (t3);
\draw [arc] (t4) to [bend left=30] node[auto] {\textcolor{black}{$I_4$}} (t5);
\draw [arc] (t5) to [bend left=30] node[auto] {\textcolor{black}{$P_4$}} (t4);
\draw [arc] (t5) to [bend left=30] node[auto] {$\textcolor{black}{\lambda^{-1} I_5}$} (t1);
\draw [arc] (t1) to [bend left=30] node[auto] {$\textcolor{black}{\lambda P_5}$} (t5);

\end{tikzpicture}
	}
	\caption{Lumped-node representation of $\graph(\lambda P_v \oplus \lambda^{-1} I_v \oplus C_v)$ when $V=5$.
	}
	\label{fi:switching_graph}
\end{figure}

After the technical propositions and lemmas above, we are finally ready to introduce the graphical part of the proof.
First of all, it is useful to visualize the structure of $\graph(\lambda P_v \oplus \lambda^{-1} I_v \oplus C_v)$; the lumped-node representation of Figure~\ref{fi:switching_graph} illustrates it in the case $V = 5$.
In this simplified representation, $\overline{k}$ indicates the set of nodes $\dint{(k-1)n+1,kn}$ for any $k\in\dint{1,5}$, and the matrix, say $Y$, associated to an arc from $\overline{k}_1$ to $\overline{k}_2$ indicates that an arc in $\graph(\lambda P_v \oplus \lambda^{-1} I_v \oplus C_v)$ from node $(k_1-1)n+j$ to node $(k_2-1)n+i$ exists iff $Y_{ij}\neq -\infty$, and that its weight is equal to $Y_{ij}$, for all $i,j\in\dint{1,n}$.

Recall that, for a square matrix $M$ such that $M^* = \begin{bsmallmatrix}\mathcal{M}^{11}& \mathcal{M}^{12}\\\mathcal{M}^{21} & \mathcal{M}^{22}\end{bsmallmatrix}$, where $\mathcal{M}^{ii}$ has dimension $n_i\times n_i$, $\graph(M)$ does not contain circuits with positive weight visiting any node $j\in\dint{1,n_1}$ (respectively, $j\in\dint{n_1+1,n_2}$) iff $\graph(\mathcal{M}^{11}) \in\nonegset$ (respectively, $\graph(\mathcal{M}^{22})\in\nonegset$).
We partition the set of circuits of $\graph(\lambda P_v \oplus \lambda^{-1} I_v \oplus C_v)$ in $V$ subsets as follows: (1) circuits visiting at least one node in $\overline{1}$, (2) circuits that do not visit any node in $\overline{1}$, but that visit at least one node in $\overline{2}$, (3) circuits that do not visit any node in $\overline{1}$ or $\overline{2}$, but that visit at least one node in $\overline{3}$, \dots, ($V$) circuits that do not visit any node in $\overline{1},\overline{2},\ldots,\overline{V-1}$, but that visit at least one node in $\overline{V}$.
From Proposition~\ref{pr:aux}, the maximal weights of circuits in the first subset correspond to elements in the diagonal of $N = C_1^*(\lambda \mathbb{M}^P \oplus \lambda^{-1} \mathbb{M}^I \oplus \mathbb{M}^C)^*C_1^*$; thus these are non-positive iff $\graph(C_1)\in\nonegset$ and $\lambda\in\solNCP{\lambda \mathbb{M}^P\oplus\lambda^{-1}\mathbb{M}^I\oplus\mathbb{M}^C}$.
Let $k\in\dint{2,V-1}$.
From Lemma~\ref{le:aux22}, the maximal weights of circuits in the $k$\textsuperscript{th} subset correspond to elements in the diagonal of $C_k^*(\mathbb{C}^P_k)^* C_k^*$; thus, these are non-positive iff $\graph(C_k)\in\nonegset$ and $\graph(\mathbb{C}^P_k)\in\nonegset$.
As for circuits in the $V$\textsuperscript{th} subset, their maximal weights come from the diagonal elements of $C_{V}^*$, which are finite iff $\graph(C_{V})\in\nonegset$.

\section{\zor{Proof of Proposition~\ref{pr:intermittent_periodic}}}\label{ap:Proposition7}

Consider SLDIs~\eqref{eq:dynamics} under intermittently periodic schedule $w = u_0 v_1^{m_1} u_1 v_2^{m_2} u_2 \cdots v_q^{m_q} u_q$, where $U_q > 0$ (the case in which $U_q=0$ is analogous but must be considered separately): for all $k\in\dint{1,U_0}$,\footnote{In this section, $u_{hk}$ and $v_{hk}$ denote the $k$\textsuperscript{th} mode in subschedules $u_h$ and $v_h$, respectively.}
\begin{equation}\label{eq:periodic_1}
	\begin{array}{rcl}
		A^0_{u_{0k}} \otimes x(k) \leq & x(k) & \leq B^0_{u_{0k}}\stimes x(k)\\
		A^1_{u_{0k}} \otimes x(k) \leq & x(k+1) & \leq B^1_{u_{0k}}\stimes x(k),
	\end{array}
\end{equation}
for all $h\in\dint{1,q}$, $j\in\dint{0,m_h-1}$, $k\in\dint{1,V_h}$,
\begin{equation}\label{eq:periodic}
	\begin{array}{rl}
		A^0_{v_{hk}} \otimes x(K_{h-1}+jV_h+k) &\leq  x(K_{h-1}+jV_h+k) \\ & \leq B^0_{v_{hk}}\stimes x(K_{h-1}+jV_h+k)\\
		A^1_{v_{hk}} \otimes x(K_{h-1}+jV_h+k) &\leq x(K_{h-1}+jV_h+k+1)\\ & \leq B^1_{v_{hk}}\stimes x(K_{h-1}+jV_h+k),
	\end{array}
\end{equation}
for all $h\in\dint{1,q}$, $k\in\dint{1,U_h}$,
\begin{equation}\label{eq:periodic_2}
	\begin{array}{rl}
		A^0_{u_{hk}} \otimes x(K_{h-1}+m_hV_h+k) & \leq x(K_{h-1}+m_hV_h+k)\\ & \leq B^0_{u_{hk}}\stimes x(K_{h-1}+m_hV_h+k),
	\end{array}
\end{equation}
and for all $h\in\dint{1,q}$, $k\in\dint{1,U_h}$ with $(h,k)\neq(q,U_q)$,
\begin{equation}\label{eq:periodic_3}
	\begin{array}{rl}
		A^1_{u_{hk}} \otimes x(K_{h-1}+m_hV_h+k) & \leq x(K_{h-1}+m_hV_h+k+1) \\ & \leq B^1_{u_{hk}}\stimes x(K_{h-1}+m_hV_h+k).
	\end{array}
\end{equation}
To search intermittently periodic trajectories, we substitute~\eqref{eq:intermittently_periodic} in~\eqref{eq:periodic}, obtaining:
for all $h\in\dint{1,q}$, $j\in\dint{0,m_h-1}$, $k\in\dint{1,V_h}$,
\begin{equation}\label{eq:intermittently_aux1}
	\begin{array}{rcl}
		A^0_{v_{hk}} \otimes \cancel{\lambda_h^{j}}x(K_{h-1}+k) \leq & \cancel{\lambda_h^{j}}x(K_{h-1}+k) & \leq B^0_{v_{hk}}\stimes \cancel{\lambda_h^{j}}x(K_{h-1}+k),
	\end{array}
\end{equation}
for all $h\in\dint{1,q}$, $j\in\dint{0,m_h-1}$, $k\in\dint{1,V_h-1}$,
\begin{equation}\label{eq:intermittently_aux2}
	\begin{array}{rl}
		A^1_{v_{hk}} \otimes \cancel{\lambda_h^{j}}x(K_{h-1}+k) &\leq \cancel{\lambda_h^{j}}x(K_{h-1}+k+1)\\ & \leq B^1_{v_{hk}}\stimes \cancel{\lambda_h^{j}}x(K_{h-1}+k),
	\end{array}
\end{equation}
for all $h\in\dint{1,q}$, $j\in\dint{0,m_h-2}$ ($k = V_h$),
\begin{equation}\label{eq:intermittently_aux3}
	\begin{array}{rcl}
		A^1_{v_{hV_h}} \otimes \cancel{\lambda_h^{j}}x(K_{h-1}+V_h) \leq& \cancel{\lambda_h^{j}}\lambda_hx(K_{h-1}+1) & \leq B^1_{v_{hV_h}}\stimes \cancel{\lambda_h^{j}}x(K_{h-1}+V_h),
	\end{array}
\end{equation}
and, for all $h\in\dint{1,q}$ ($j = m_h-1$, $k = V_h$),
\begin{equation}\label{eq:intermittently_aux4}
	\begin{array}{rl}
		A^1_{v_{hV_h}} \otimes \lambda_h^{m_h-1}x(K_{h-1}+V_h) &\leq x(K_{h-1}+m_h V_h +1)\\ & \leq B^1_{v_{hV_h}}\stimes \lambda_h^{m_h-1}x(K_{h-1}+V_h).
	\end{array}
\end{equation}
To cancel out the $\lambda_h^{m_h-1}$'s from~\eqref{eq:intermittently_aux4}, we perform a change of variables.
Let $\xi:\dint{1,U_0 + \left(\sum_{h=1}^{q}V_{h}+U_h\right)}\rightarrow \R^n$ be defined by: 
for all $h\in\dint{0,q-1}$, $k\in\dint{\left(\sum_{j=1}^{h}U_{j-1}+V_j\right)+1,\sum_{j=1}^{h+1}U_{j-1}+V_j}$,
\[
	\xi(k) = \left(\bigotimes_{j=1}^h \lambda_j^{-(m_j - 1)}\right) x\left(\left(\sum_{j=1}^h (m_j-1)V_j\right) + k\right),
\]
and for all $k\in\dint{\left(\sum_{j=1}^{q}U_{j-1}+V_j\right)+1,\left(\sum_{j=1}^{q}U_{j-1}+V_j\right) + U_q}$,
\[
	\xi(k) = \left(\bigotimes_{j=1}^q \lambda_j^{-(m_j - 1)}\right) x\left(\left(\sum_{j=1}^q (m_j-1)V_j\right) + k\right).
\]

By substituting $\xi$ into inequalities~\eqref{eq:periodic_1},~(\ref{eq:periodic_2}-\ref{eq:intermittently_aux4}) and simplifying whenever possible the terms in $\lambda_j$, we obtain that the above inequalities admit a solution in $x$ and $\lambda_1,\ldots,\lambda_q$ if and only if the following inequalities admit a solution in $\xi$ and $\lambda_1,\ldots,\lambda_q$: for all $h\in\dint{0,q}$, $k\in\dint{1,U_h}$,
\begin{equation}\label{eq:aux_init}
	\begin{array}{rl}
	A^0_{u_{hk}} \otimes \xi\left(\left(\sum_{j=1}^h U_{j-1}+V_j\right)+k\right) & \leq \xi\left(\left(\sum_{j=1}^h U_{j-1}+V_j\right)+k\right) \\
	& \leq B^0_{u_{hk}} \stimes \xi\left(\left(\sum_{j=1}^h U_{j-1}+V_j\right)+k\right),
	\end{array}
\end{equation}
for all $h\in\dint{0,q}$, $k\in\dint{1,U_h}$, $(h,k)\neq(q,U_q)$,
\begin{equation}
	\begin{array}{rl}
	A^1_{u_{hk}} \otimes \xi\left(\left(\sum_{j=1}^h U_{j-1}+V_j\right)+k\right) & \leq \xi\left(\left(\sum_{j=1}^h U_{j-1}+V_j\right)+k+1\right) \\
	& \leq B^1_{u_{hk}} \stimes \xi\left(\left(\sum_{j=1}^h U_{j-1}+V_j\right)+k\right),
	\end{array}
\end{equation}
for all $h\in\dint{1,q}$, $k\in\dint{1,V_h}$,
\begin{equation}
	\begin{array}{r}
	A^0_{v_{hk}} \otimes \xi\left(\left(U_0+\sum_{j=1}^{h-1} V_{j}+U_j\right)+k\right) \leq \xi\left(\left(U_0+\sum_{j=1}^{h-1} V_{j}+U_j\right)+k\right) \\
	\leq B^0_{v_{hk}} \stimes \xi\left(\left(U_0+\sum_{j=1}^{h-1} V_{j}+U_j\right)+k\right),\\
	A^1_{v_{hk}} \otimes \xi\left(\left(U_0+\sum_{j=1}^{h-1} V_{j}+U_j\right)+k\right) \leq \xi\left(\left(U_0+\sum_{j=1}^{h-1} V_{j}+U_j\right)+k+1\right) \\
	\leq B^1_{v_{hk}} \stimes \xi\left(\left(U_0+\sum_{j=1}^{h-1} V_{j}+U_j\right)+k\right),\\
	\end{array}
\end{equation}
and for all $h\in\dint{1,q}$ ($k=V_h$),
\begin{equation}\label{eq:aux_fina}
	\begin{array}{r}
	A^1_{v_{hV_h}} \otimes \xi\left(\left(U_0+\sum_{j=1}^{h-1} V_{j}+U_j\right)+V_h\right) \leq \lambda_h\xi\left(\left(U_0+\sum_{j=1}^{h-1} V_{j}+U_j\right)+1\right) \\
	\leq B^1_{v_{hV_h}} \stimes \xi\left(\left(U_0+\sum_{j=1}^{h-1} V_{j}+U_j\right)+V_h\right).
	\end{array}
\end{equation}
The final step in the proof is to rewrite the inequalities~(\ref{eq:aux_init}-\ref{eq:aux_fina}) in terms of the extended vector
\[
	\tilde{\xi} = \begin{bmatrix}
		\xi(1)\\\svdots\\\xi\left(U_0 + \left(\sum_{h=1}^q V_{h}+U_h\right)\right)
	\end{bmatrix}
\]
and use Proposition~\ref{pr:max-min} to get the max-plus-linear inequality
\begin{equation}\label{eq:final_aux}
	A(\lambda_1,\ldots,\lambda_q) \otimes \tilde{\xi} \leq \tilde{\xi},
\end{equation}
where $A(\lambda_1,\ldots,\lambda_q)$ is a matrix with $(U_0 + \left(\sum_{h=1}^q V_{h}+U_h\right))n$ rows and columns.
Since only the first power of $\lambda_1,\ldots,\lambda_q$ shows up in~(\ref{eq:aux_init}-\ref{eq:aux_fina}), matrix $A(\lambda_1,\ldots,\lambda_q)$ has the desired form: $A(\lambda_1,\ldots,\lambda_q) = \bigoplus_{h=1}^q (\lambda_h P_h\oplus \lambda_h^{-1}I_h)\oplus C$.
Therefore, because of Proposition~\ref{pr:nonegset_inequality}, \eqref{eq:final_aux} translates into an MPIC-NCP.

\section{Proof of Theorem~\ref{th:intermittent_periodic}}\label{ap:intermittent_periodic_improved}

\begin{figure}
	\centering
	\resizebox{1\linewidth}{!}{
%
%

\begin{tikzpicture}[node distance=3cm,>=stealth',bend angle=45,double distance=.5mm,arc/.style={->,>=stealth'},place/.style={circle,thick,minimum size=10mm,draw},on grid]
\def \radius {3cm}
\Large

\node [place,label=center:{$\overline{2}$}] at ({-360/5*(2-1)-90+360/10}:\radius) (t2) {};
\draw [arc] (t2) to [out=-360/5*2+360/5-90+360/10+25,in=-360/5*2+360/5-90+360/10-25,loop] node[auto] {\textcolor{black}{$C_{\wP_1}$}} (t2);

\node [place,label=center:{$\overline{3}$}] at ({-360/5*(3-1)-90+360/10}:\radius) (t3) {};
\draw [arc] (t3) to [out=-360/5*3+360/5-90+360/10+25,in=-360/5*3+360/5-90+360/10-25,loop] node[auto] {\textcolor{black}{$C_{\wP_1}$}} (t3);

\node [place,label=center:{$\overline{4}$}] at ({-360/5*(4-1)-90+360/10}:\radius) (t4) {};
\draw [arc] (t4) to [out=-360/5*4+360/5-90+360/10+25,in=-360/5*4+360/5-90+360/10-25,loop] node[auto] {\textcolor{black}{$C_{\wP_3}$}} (t4);

\node [place,label=center:{$\overline{5}$}] at ({-360/5*(5-1)-90+360/10}:\radius) (t5) {};
\draw [arc] (t5) to [out=-360/5*5+360/5-90+360/10+25,in=-360/5*5+360/5-90+360/10-25,loop] node[auto] {\textcolor{black}{$C_{\wP_2}$}} (t5);

\node [place,label=center:{$\overline{6}$}] at ({-360/5*(6-1)-90+360/10}:\radius) (t6) {};
\draw [arc] (t6) to [out=-360/5*6+360/5-90+360/10+25,in=-360/5*6+360/5-90+360/10-25,loop] node[auto] {\textcolor{black}{$C_{\wP_4}$}} (t6);

\draw [arc] (t2) to [bend left=30] node[above right] {\textcolor{black}{$I_{\wP_1}$}} (t3);
\draw [arc] (t3) to [bend left=30] node[above right] {\textcolor{black}{$P_{\wP_1}$}} (t2);
\draw [arc] (t3) to [bend left=30] node[auto] {\textcolor{black}{$I_{\wP_1}$}} (t4);
\draw [arc] (t4) to [bend left=30] node[auto] {\textcolor{black}{$P_{\wP_1}$}} (t3);
\draw [arc] (t4) to [bend left=30] node[auto] {\textcolor{black}{$I_{\wP_3}$}} (t5);
\draw [arc] (t5) to [bend left=30] node[auto] {\textcolor{black}{$P_{\wP_3}$}} (t4);
\draw [arc] (t5) to [bend left=30] node[above left] {\textcolor{black}{$I_{\wP_2}$}} (t6);
\draw [arc] (t6) to [bend left=30] node[auto] {\textcolor{black}{$P_{\wP_2}$}} (t5);
\draw [arc] (t6) to [bend left=30] node[auto] {$\textcolor{black}{\lambda_1^{-1} I_{\wP_4}}$} (t2);
\draw [arc] (t2) to [bend left=30] node[auto] {$\textcolor{black}{\lambda_1 P_{\wP_4}}$} (t6);

\node [place,label=center:{$\overline{1}$},left=of t2] (t1) {}; 
\draw [arc] (t1) to [bend left=15] node[auto] {$I_\winit$} (t2);
\draw [arc] (t2) to [bend left=15] node[auto] {$P_\winit$} (t1);
\draw [arc] (t1) to [out=90+25,in=90-25,loop] node[auto] {$C_{\winit}$} (t1);

\node [place,label=center:{$\overline{7}$},right=of t6] (t7) {}; 
\draw [arc] (t6) to [bend left=15] node[auto] {$I_{\wP_4}$} (t7);
\draw [arc] (t7) to [bend left=15] node[auto] {$P_{\wP_4}$} (t6);
\draw [arc] (t7) to [out=90+25,in=90-25,loop] node[auto] {$C_{\wP_1}$} (t7);

\node [place,label=center:{$\overline{8}$},right=of t7] (t8) {}; 
\draw [arc] (t7) to [bend left=15] node[auto] {$I_{\wP_1}$} (t8);
\draw [arc] (t8) to [bend left=15] node[auto] {$P_{\wP_1}$} (t7);
\draw [arc] (t8) to [out=90+25,in=90-25,loop] node[auto] {$C_{\wP_3}$} (t8);

\node [place,label=center:{$\overline{9}$},right=of t8] (t9) {}; 
\draw [arc] (t8) to [bend left=15] node[auto] {$I_{\wP_3}$} (t9);
\draw [arc] (t9) to [bend left=15] node[auto] {$P_{\wP_3}$} (t8);
\draw [arc] (t9) to [out=90+25,in=90-25,loop] node[auto] {$C_{\wP_2}$} (t9);

\node [place,label=center:{$\overline{10}$},right=of t9] (t10) {}; 
\draw [arc] (t9) to [bend left=15] node[auto] {$I_{\wP_2}$} (t10);
\draw [arc] (t10) to [bend left=15] node[auto] {$P_{\wP_2}$} (t9);
\draw [arc] (t10) to [out=90+25,in=90-25,loop] node[auto] {$C_{\wP_4}$} (t10);

\node [place,label=center:{$\overline{11}$},right=of t10] (t11) {}; 
\draw [arc] (t11) to [out=-360/5*2+360/5-90+360/10+25,in=-360/5*2+360/5-90+360/10-25,loop] node[auto] {\textcolor{black}{$C_{\wP_2}$}} (t11);
\draw [arc] (t10) to [bend left=15] node[auto] {$I_{\wP_4}$} (t11);
\draw [arc] (t11) to [bend left=15] node[auto] {$P_{\wP_4}$} (t10);

\node [place,label=center:{$\overline{12}$}] at ($({-360/5*(3-1)-90+360/10}:\radius)+(t11)+(0.5878*\radius,0.8090*\radius)$) (t12) {};
\draw [arc] (t12) to [out=-360/5*3+360/5-90+360/10+25,in=-360/5*3+360/5-90+360/10-25,loop] node[auto] {\textcolor{black}{$C_{\wP_4}$}} (t12);
\draw [arc] (t11) to [bend left=30] node[above right] {$I_{\wP_2}$} (t12);
\draw [arc] (t12) to [bend left=30] node[auto] {$P_{\wP_2}$} (t11);

\node [place,label=center:{$\overline{13}$}] at ($({-360/5*(4-1)-90+360/10}:\radius)+(t11)+(0.5878*\radius,0.8090*\radius)$) (t13) {};
\draw [arc] (t13) to [out=-360/5*4+360/5-90+360/10+25,in=-360/5*4+360/5-90+360/10-25,loop] node[auto] {\textcolor{black}{$C_{\wP_1}$}} (t13);
\draw [arc] (t12) to [bend left=30] node[auto] {$I_{\wP_4}$} (t13);
\draw [arc] (t13) to [bend left=30] node[auto] {$P_{\wP_4}$} (t12);

\node [place,label=center:{$\overline{14}$}] at ($({-360/5*(5-1)-90+360/10}:\radius)+(t11)+(0.5878*\radius,0.8090*\radius)$) (t14) {};
\draw [arc] (t14) to [out=-360/5*5+360/5-90+360/10+25,in=-360/5*5+360/5-90+360/10-25,loop] node[auto] {\textcolor{black}{$C_{\wP_3}$}} (t14);
\draw [arc] (t13) to [bend left=30] node[auto] {$I_{\wP_1}$} (t14);
\draw [arc] (t14) to [bend left=30] node[auto] {$P_{\wP_1}$} (t13);

\node [place,label=center:{$\overline{15}$}] at ($({-360/5*(6-1)-90+360/10}:\radius)+(t11)+(0.5878*\radius,0.8090*\radius)$) (t15) {};
\draw [arc] (t15) to [out=-360/5*6+360/5-90+360/10+25,in=-360/5*6+360/5-90+360/10-25,loop] node[auto] {\textcolor{black}{$C_{\wP_3}$}} (t15);
\draw [arc] (t14) to [bend left=30] node[auto] {$I_{\wP_3}$} (t15);
\draw [arc] (t15) to [bend left=30] node[auto] {$P_{\wP_3}$} (t14);
%
%

\draw [arc] (t15) to [bend left=30] node[auto] {$\lambda_2^{-1} I_{\wP_3}$} (t11);
\draw [arc] (t11) to [bend left=30] node[auto] {$\lambda_2 P_{\wP_3}$} (t15);

\end{tikzpicture}

	}
	\caption{Lumped-node representation of $\graph(A(\lambda_1,\lambda_2))$ from~\eqref{eq:intermittent_periodic_matrix}.
	}
	\label{fi:intermittent_graph}
\end{figure}

For notational simplicity, we show how to prove the theorem for the matrix in~\eqref{eq:intermittent_periodic_matrix}; the extension to the general case is simple but cumbersome.

\newcommand{\DrawHorLine}[3][]{%
    \tikz[overlay,remember picture]{%
    \draw[#1] ($({#2})+(-0.5em,-0.3em)$) -- ($({#3})+(0.3em,-0.3em)$);}%
}%
\newcommand{\DrawVerLine}[3][]{%
    \tikz[overlay,remember picture]{%
    \draw[#1] ($({#2})+(-0.3em,0.7em)$) -- ($({#3})+(-0.3em,-0.3em)$);}%
}%

First of all, it is useful to draw the parametric precedence graph $\graph(A(\lambda_1,\lambda_2))$ using the lumped-node representation (which was presented in Appendix~\ref{ap:improved_algorithm}); the graph is shown in Figure~\ref{fi:intermittent_graph}.
We recall that the objective is to describe in a more compact way the set $\solNCP{A(\lambda_1,\lambda_2)}$ of all parameters $(\lambda_1,\lambda_2)$ for which $\graph(A(\lambda_1,\lambda_2))$ does not contain circuits with positive weight.
Before illustrating the proof, it is convenient to relabel the nodes of $\graph(A(\lambda_1,\lambda_2))$ as in Figure~\ref{fi:intermittent_graph_relabeled}; the graph in Figure~\ref{fi:intermittent_graph_relabeled} therefore corresponds to the precedence graph $\graph(\overline{A}(\lambda_1,\lambda_2))$, where $\overline{A}$ is obtained permuting rows and columns of $A(\lambda_1,\lambda_2)$ as follows:
\begin{equation}\label{eq:intermittent_periodic_matrix_relabeled}
	\scriptsize \overline{A}(\lambda_1,\lambda_2) = 
	\begin{bsmallmatrix}
		C_{\wP_1} & \pazocal{E}  & \tikzmark{topleft} I_{\winit} & P_{\wP_1} & \pazocal{E} & \pazocal{E} & \lambda_1^{-1}I_{\wP_4} & \pazocal{E} & \pazocal{E} & \pazocal{E} & \pazocal{E} & \pazocal{E} & \pazocal{E} & \pazocal{E} & \pazocal{E} \\
		\tikzmark{toplefth} \pazocal{E} & C_{\wP_2} & \pazocal{E} & \pazocal{E} & \pazocal{E} & \pazocal{E} & \pazocal{E} & \pazocal{E} & \pazocal{E} & \pazocal{E} & I_{\wP_4} & P_{\wP_2} & \pazocal{E} & \pazocal{E} & \lambda_2^{-1}I_{\wP_3} \tikzmark{toprighth} \\
		P_{\winit} & \pazocal{E} & C_\winit & \pazocal{E} & \pazocal{E} & \pazocal{E} & \pazocal{E} & \pazocal{E} & \pazocal{E} & \pazocal{E} & \pazocal{E} & \pazocal{E} & \pazocal{E} & \pazocal{E} & \pazocal{E} \\
		I_{\wP_1} & \pazocal{E} & \pazocal{E} & C_{\wP_1} & P_{\wP_1} & \pazocal{E} & \pazocal{E} & \pazocal{E} & \pazocal{E} & \pazocal{E} & \pazocal{E} & \pazocal{E} & \pazocal{E} & \pazocal{E} & \pazocal{E} \\
		\pazocal{E} & \pazocal{E} & \pazocal{E} & I_{\wP_1} & C_{\wP_3} & P_{\wP_3} & \pazocal{E} & \pazocal{E} & \pazocal{E} & \pazocal{E} & \pazocal{E} & \pazocal{E} & \pazocal{E} & \pazocal{E} & \pazocal{E} \\
		\pazocal{E} & \pazocal{E} & \pazocal{E} & \pazocal{E} & I_{\wP_3} & C_{\wP_2} & P_{\wP_2} & \pazocal{E} & \pazocal{E} & \pazocal{E} & \pazocal{E} & \pazocal{E} & \pazocal{E} & \pazocal{E} & \pazocal{E} \\
		\lambda_1 P_{\wP_4} & \pazocal{E} & \pazocal{E} & \pazocal{E} & \pazocal{E} & I_{\wP_2} & C_{\wP_4} & P_{\wP_4} & \pazocal{E} & \pazocal{E} & \pazocal{E} & \pazocal{E} & \pazocal{E} & \pazocal{E} & \pazocal{E} \\
		\pazocal{E} & \pazocal{E} & \pazocal{E} & \pazocal{E} & \pazocal{E} & \pazocal{E} & I_{\wP_4} & C_{\wP_1} & P_{\wP_1} & \pazocal{E} & \pazocal{E} & \pazocal{E} & \pazocal{E} & \pazocal{E} & \pazocal{E} \\
		\pazocal{E} & \pazocal{E} & \pazocal{E} & \pazocal{E} & \pazocal{E} & \pazocal{E} & \pazocal{E} & I_{\wP_1} & C_{\wP_3} & P_{\wP_3} & \pazocal{E} & \pazocal{E} & \pazocal{E} & \pazocal{E} & \pazocal{E} \\
		\pazocal{E} & \pazocal{E} & \pazocal{E} & \pazocal{E} & \pazocal{E} & \pazocal{E} & \pazocal{E} & \pazocal{E} & I_{\wP_3} & C_{\wP_2} & P_{\wP_2} & \pazocal{E} & \pazocal{E} & \pazocal{E} & \pazocal{E} \\
		\pazocal{E} & P_{\wP_4} & \pazocal{E} & \pazocal{E} & \pazocal{E} & \pazocal{E} & \pazocal{E} & \pazocal{E} & \pazocal{E} & I_{\wP_2} & C_{\wP_4} & \pazocal{E} & \pazocal{E} & \pazocal{E} & \pazocal{E} \\
		\pazocal{E} & I_{\wP_2} & \pazocal{E} & \pazocal{E} & \pazocal{E} & \pazocal{E} & \pazocal{E} & \pazocal{E} & \pazocal{E} & \pazocal{E} & \pazocal{E} & C_{\wP_4} & P_{\wP_4} & \pazocal{E} & \pazocal{E} \\
		\pazocal{E} & \pazocal{E} & \pazocal{E} & \pazocal{E} & \pazocal{E} & \pazocal{E} & \pazocal{E} & \pazocal{E} & \pazocal{E} & \pazocal{E} & \pazocal{E} & I_{\wP_4} & C_{\wP_1} & P_{\wP_1} & \pazocal{E} \\
		\pazocal{E} & \pazocal{E} & \tikzmark{bottomleft} \pazocal{E} & \pazocal{E} & \pazocal{E} & \pazocal{E} & \pazocal{E} & \pazocal{E} & \pazocal{E} & \pazocal{E} & \pazocal{E} & \pazocal{E} & I_{\wP_1} & C_{\wP_3} & P_{\wP_3} \\
		\pazocal{E} & \lambda_2 P_{\wP_3} & \tikzmark{bottomleft} \pazocal{E} & \pazocal{E} & \pazocal{E} & \pazocal{E} & \pazocal{E} & \pazocal{E} & \pazocal{E} & \pazocal{E} & \pazocal{E} & \pazocal{E} & \pazocal{E} & I_{\wP_3} & C_{\wP_3} 
	\end{bsmallmatrix}.
\end{equation}
\DrawVerLine{topleft}{bottomleft}%
\DrawHorLine{toplefth}{toprighth}%

Clearly, $\graph(A(\lambda_1,\lambda_2))\in\nonegset$ if and only if $\graph(\overline{A}(\lambda_1,\lambda_2))\in\nonegset$.
We partition matrix $\overline{A}(\lambda_1,\lambda_2)$ in four blocks, as shown in~\eqref{eq:intermittent_periodic_matrix_relabeled}, and we use the notation
\[
	\overline{A}(\lambda_1,\lambda_2)^* = \begin{bmatrix}
		a & b \\ c & d
	\end{bmatrix}^* = \begin{bmatrix}
		\pazocal{M}^{11} & \pazocal{M}^{12}\\ \pazocal{M}^{21} & \pazocal{M}^{22}
	\end{bmatrix},
\]
where $a$ and $\pazocal{M}^{11}$ are matrices of dimension $10\times 10$ (in general, $qn\times qn$).
Note that only blocks $b$ and $c$ depend on $\lambda_1$ and $\lambda_2$, whereas $a$ and $d$ are numerical matrices.

Then, we partition the set of circuits in $\graph(\overline{A}(\lambda_1,\lambda_2))$ in two subsets:
\begin{enumerate}
	\item the set of all circuits that do not pass through any node in $\overline{1}\cup\overline{2}$,
	\item the set containing all remaining circuits.
\end{enumerate}
It is easy to see that there are no circuits with positive weight among those in the first subset if and only if $\graph(d)\in\nonegset$; due to the block tridiagonal structure of matrix $d$, verifying whether $\graph(d)$ contains circuits with positive weight can be done in linear time in the length of $u_0,\ldots,u_q$, $v_1,\ldots,v_q$ (i.e., the lengths of subschedules $\winit$, $\wP_1\wP_1\wP_3\wP_2\wP_4$, and $\wP_1\wP_3\wP_2\wP_4$) using the methods seen in Appendix~\ref{ap:improved_algorithm} (or, equivalently, in~\cite[Theorem 3]{zorzenon2022switched}).

As for the second subset of circuits, they all visit at least one node in $\overline{1}\cup\overline{2}$; therefore, their weight is non-positive if and only if $\graph(\pazocal{M}^{11})\in\nonegset$.
We can compute $\pazocal{M}^{11}$ using Lemma~\ref{le:aux1}:
\[
	\pazocal{M}^{11} = a^* (a^* bd^*ca^*)^*a^*.
\]
Therefore, $\graph(\pazocal{M}^{11})\in\nonegset$ if and only if
\begin{enumerate}
	\item $\graph(a)\in\nonegset$,
	\item $\graph(a^* bd^* ca^*)\in\nonegset$.
\end{enumerate}

Since $a = \begin{bsmallmatrix}
	C_{\wP_1} & \pazocal{E}\\
	\pazocal{E} & C_{\wP_2}
\end{bsmallmatrix}$, the first condition is equivalent to having $\graph(C_{\wP_1})\in\nonegset$ and $\graph(C_{\wP_2})\in\nonegset$ (more generally, one has to check the non-positiveness of the circuits in $q$ precedence graphs, each of which contains $n$ nodes).

To check the second condition, we must compute $a^* bd^* ca^*$.
The main challenge here is to find a closed formula for
\[
	bd^*c = 
	\begin{bsmallmatrix}
		I_{\winit} & P_{\wP_1} & \pazocal{E} & \pazocal{E} & \lambda_1^{-1}I_{\wP_4} & \pazocal{E} & \pazocal{E} & \pazocal{E} & \pazocal{E} & \pazocal{E} & \pazocal{E} & \pazocal{E} & \pazocal{E}\\
		\pazocal{E} & \pazocal{E} & \pazocal{E} & \pazocal{E} & \pazocal{E} & \pazocal{E} & \pazocal{E} & \pazocal{E} & I_{\wP_4} & P_{\wP_2} & \pazocal{E} & \pazocal{E} & \lambda_2^{-1}I_{\wP_3}
	\end{bsmallmatrix}\otimes
\]
\[
	\otimes
	\begin{bsmallmatrix}
		C_\winit & \pazocal{E} & \pazocal{E} & \pazocal{E} & \pazocal{E} & \pazocal{E} & \pazocal{E} & \pazocal{E} & \pazocal{E} & \pazocal{E} & \pazocal{E} & \pazocal{E} & \pazocal{E} \\
		\pazocal{E} & C_{\wP_1} & P_{\wP_1} & \pazocal{E} & \pazocal{E} & \pazocal{E} & \pazocal{E} & \pazocal{E} & \pazocal{E} & \pazocal{E} & \pazocal{E} & \pazocal{E}& \pazocal{E}  \\
		\pazocal{E} & I_{\wP_1} & C_{\wP_3} & P_{\wP_3} & \pazocal{E} & \pazocal{E} & \pazocal{E} & \pazocal{E} & \pazocal{E} & \pazocal{E} & \pazocal{E} & \pazocal{E} & \pazocal{E} \\
		\pazocal{E} & \pazocal{E} & I_{\wP_3} & C_{\wP_2} & P_{\wP_2} & \pazocal{E} & \pazocal{E} & \pazocal{E} & \pazocal{E} & \pazocal{E} & \pazocal{E} & \pazocal{E} & \pazocal{E} \\
		\pazocal{E} & \pazocal{E} & \pazocal{E} & I_{\wP_2} & C_{\wP_4} & P_{\wP_4} & \pazocal{E} & \pazocal{E} & \pazocal{E} & \pazocal{E} & \pazocal{E} & \pazocal{E} & \pazocal{E} \\
		\pazocal{E} & \pazocal{E} & \pazocal{E} & \pazocal{E} & I_{\wP_4} & C_{\wP_1} & P_{\wP_1} & \pazocal{E} & \pazocal{E} & \pazocal{E} & \pazocal{E} & \pazocal{E} & \pazocal{E} \\
		\pazocal{E} & \pazocal{E} & \pazocal{E} & \pazocal{E} & \pazocal{E} & I_{\wP_1} & C_{\wP_3} & P_{\wP_3} & \pazocal{E} & \pazocal{E} & \pazocal{E} & \pazocal{E} & \pazocal{E} \\
		\pazocal{E} & \pazocal{E} & \pazocal{E} & \pazocal{E} & \pazocal{E} & \pazocal{E} & I_{\wP_3} & C_{\wP_2} & P_{\wP_2} & \pazocal{E} & \pazocal{E} & \pazocal{E} & \pazocal{E} \\
		\pazocal{E} & \pazocal{E} & \pazocal{E} & \pazocal{E} & \pazocal{E} & \pazocal{E} & \pazocal{E} & I_{\wP_2} & C_{\wP_4} & \pazocal{E} & \pazocal{E} & \pazocal{E} & \pazocal{E} \\
		\pazocal{E} & \pazocal{E} & \pazocal{E} & \pazocal{E} & \pazocal{E} & \pazocal{E} & \pazocal{E} & \pazocal{E} & \pazocal{E} & C_{\wP_4} & P_{\wP_4} & \pazocal{E} & \pazocal{E} \\
		\pazocal{E} & \pazocal{E} & \pazocal{E} & \pazocal{E} & \pazocal{E} & \pazocal{E} & \pazocal{E} & \pazocal{E} & \pazocal{E} & I_{\wP_4} & C_{\wP_1} & P_{\wP_1} & \pazocal{E} \\
		\pazocal{E} & \pazocal{E} & \pazocal{E} & \pazocal{E} & \pazocal{E} & \pazocal{E} & \pazocal{E} & \pazocal{E} & \pazocal{E} & \pazocal{E} & I_{\wP_1} & C_{\wP_3}& P_{\wP_3} \\
		\pazocal{E} & \pazocal{E} & \pazocal{E} & \pazocal{E} & \pazocal{E} & \pazocal{E} & \pazocal{E} & \pazocal{E} & \pazocal{E} & \pazocal{E} & \pazocal{E} & I_{\wP_3}& C_{\wP_3}  
	\end{bsmallmatrix}^*
	\begin{bsmallmatrix}
		P_{\winit} & \pazocal{E} \\
		I_{\wP_1} & \pazocal{E} \\
		\pazocal{E} & \pazocal{E} \\
		\pazocal{E} & \pazocal{E} \\
		\lambda_1 P_{\wP_4} & \pazocal{E} \\
		\pazocal{E} & \pazocal{E} \\
		\pazocal{E} & \pazocal{E} \\
		\pazocal{E} & \pazocal{E} \\
		\pazocal{E} & P_{\wP_4} \\
		\pazocal{E} & I_{\wP_2} \\
		\pazocal{E} & \pazocal{E} \\
		\pazocal{E} & \pazocal{E} \\
		\pazocal{E} & \lambda_2 P_{\wP_3}
	\end{bsmallmatrix},
\]
which can be obtained by using Remarks~\ref{re:auxii},~\ref{re:aux2i}, and~\ref{re:auxi2}; the computation takes linear time complexity in the length of subschedules $u_0,\ldots,u_q$, $v_0,\ldots,v_q$.
The resulting matrix $a^* bd^* ca^*$, of the form $\lambda_1 P_1 \oplus \lambda_1^{-1} I_1 \oplus \lambda_2 P_2 \oplus \lambda_2^{-1} I_2 \oplus C$ (in general $\bigoplus_{i=1}^q \lambda_i P_i \oplus \lambda_i^{-1} I_i \oplus C$), has a number of rows and columns equal to $10$ (in general $qn$).
In conclusion, we reduced an MPIC-NCP on a graph with $75$ nodes (in general, $(U_0+\sum_{i=1}^q V_i + U_i)n$ nodes) to another MPIC-NCP on a graph with only $10$ nodes (in general, $pn$ nodes).

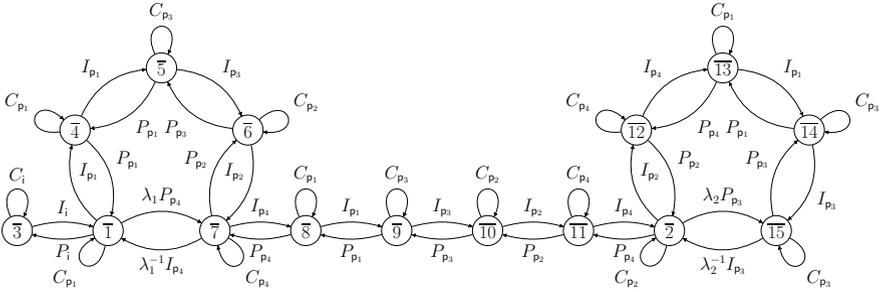
\begin{figure}
	\centering
	\resizebox{1\linewidth}{!}{
%
%

\begin{tikzpicture}[node distance=3cm,>=stealth',bend angle=45,double distance=.5mm,arc/.style={->,>=stealth'},place/.style={circle,thick,minimum size=10mm,draw},on grid]
\def \radius {3cm}
\Large

\node [place,label=center:{$\overline{1}$}] at ({-360/5*(2-1)-90+360/10}:\radius) (t2) {};
\draw [arc] (t2) to [out=-360/5*2+360/5-90+360/10+25,in=-360/5*2+360/5-90+360/10-25,loop] node[auto] {\textcolor{black}{$C_{\wP_1}$}} (t2);

\node [place,label=center:{$\overline{4}$}] at ({-360/5*(3-1)-90+360/10}:\radius) (t3) {};
\draw [arc] (t3) to [out=-360/5*3+360/5-90+360/10+25,in=-360/5*3+360/5-90+360/10-25,loop] node[auto] {\textcolor{black}{$C_{\wP_1}$}} (t3);

\node [place,label=center:{$\overline{5}$}] at ({-360/5*(4-1)-90+360/10}:\radius) (t4) {};
\draw [arc] (t4) to [out=-360/5*4+360/5-90+360/10+25,in=-360/5*4+360/5-90+360/10-25,loop] node[auto] {\textcolor{black}{$C_{\wP_3}$}} (t4);

\node [place,label=center:{$\overline{6}$}] at ({-360/5*(5-1)-90+360/10}:\radius) (t5) {};
\draw [arc] (t5) to [out=-360/5*5+360/5-90+360/10+25,in=-360/5*5+360/5-90+360/10-25,loop] node[auto] {\textcolor{black}{$C_{\wP_2}$}} (t5);

\node [place,label=center:{$\overline{7}$}] at ({-360/5*(6-1)-90+360/10}:\radius) (t6) {};
\draw [arc] (t6) to [out=-360/5*6+360/5-90+360/10+25,in=-360/5*6+360/5-90+360/10-25,loop] node[auto] {\textcolor{black}{$C_{\wP_4}$}} (t6);

\draw [arc] (t2) to [bend left=30] node[above right] {\textcolor{black}{$I_{\wP_1}$}} (t3);
\draw [arc] (t3) to [bend left=30] node[above right] {\textcolor{black}{$P_{\wP_1}$}} (t2);
\draw [arc] (t3) to [bend left=30] node[auto] {\textcolor{black}{$I_{\wP_1}$}} (t4);
\draw [arc] (t4) to [bend left=30] node[auto] {\textcolor{black}{$P_{\wP_1}$}} (t3);
\draw [arc] (t4) to [bend left=30] node[auto] {\textcolor{black}{$I_{\wP_3}$}} (t5);
\draw [arc] (t5) to [bend left=30] node[auto] {\textcolor{black}{$P_{\wP_3}$}} (t4);
\draw [arc] (t5) to [bend left=30] node[above left] {\textcolor{black}{$I_{\wP_2}$}} (t6);
\draw [arc] (t6) to [bend left=30] node[auto] {\textcolor{black}{$P_{\wP_2}$}} (t5);
\draw [arc] (t6) to [bend left=30] node[auto] {$\textcolor{black}{\lambda_1^{-1} I_{\wP_4}}$} (t2);
\draw [arc] (t2) to [bend left=30] node[auto] {$\textcolor{black}{\lambda_1 P_{\wP_4}}$} (t6);

\node [place,label=center:{$\overline{3}$},left=of t2] (t1) {}; 
\draw [arc] (t1) to [bend left=15] node[auto] {$I_\winit$} (t2);
\draw [arc] (t2) to [bend left=15] node[auto] {$P_\winit$} (t1);
\draw [arc] (t1) to [out=90+25,in=90-25,loop] node[auto] {$C_{\winit}$} (t1);

\node [place,label=center:{$\overline{8}$},right=of t6] (t7) {}; 
\draw [arc] (t6) to [bend left=15] node[auto] {$I_{\wP_4}$} (t7);
\draw [arc] (t7) to [bend left=15] node[auto] {$P_{\wP_4}$} (t6);
\draw [arc] (t7) to [out=90+25,in=90-25,loop] node[auto] {$C_{\wP_1}$} (t7);

\node [place,label=center:{$\overline{9}$},right=of t7] (t8) {}; 
\draw [arc] (t7) to [bend left=15] node[auto] {$I_{\wP_1}$} (t8);
\draw [arc] (t8) to [bend left=15] node[auto] {$P_{\wP_1}$} (t7);
\draw [arc] (t8) to [out=90+25,in=90-25,loop] node[auto] {$C_{\wP_3}$} (t8);

\node [place,label=center:{$\overline{10}$},right=of t8] (t9) {}; 
\draw [arc] (t8) to [bend left=15] node[auto] {$I_{\wP_3}$} (t9);
\draw [arc] (t9) to [bend left=15] node[auto] {$P_{\wP_3}$} (t8);
\draw [arc] (t9) to [out=90+25,in=90-25,loop] node[auto] {$C_{\wP_2}$} (t9);

\node [place,label=center:{$\overline{11}$},right=of t9] (t10) {}; 
\draw [arc] (t9) to [bend left=15] node[auto] {$I_{\wP_2}$} (t10);
\draw [arc] (t10) to [bend left=15] node[auto] {$P_{\wP_2}$} (t9);
\draw [arc] (t10) to [out=90+25,in=90-25,loop] node[auto] {$C_{\wP_4}$} (t10);

\node [place,label=center:{$\overline{2}$},right=of t10] (t11) {}; 
\draw [arc] (t11) to [out=-360/5*2+360/5-90+360/10+25,in=-360/5*2+360/5-90+360/10-25,loop] node[auto] {\textcolor{black}{$C_{\wP_2}$}} (t11);
\draw [arc] (t10) to [bend left=15] node[auto] {$I_{\wP_4}$} (t11);
\draw [arc] (t11) to [bend left=15] node[auto] {$P_{\wP_4}$} (t10);

\node [place,label=center:{$\overline{12}$}] at ($({-360/5*(3-1)-90+360/10}:\radius)+(t11)+(0.5878*\radius,0.8090*\radius)$) (t12) {};
\draw [arc] (t12) to [out=-360/5*3+360/5-90+360/10+25,in=-360/5*3+360/5-90+360/10-25,loop] node[auto] {\textcolor{black}{$C_{\wP_4}$}} (t12);
\draw [arc] (t11) to [bend left=30] node[above right] {$I_{\wP_2}$} (t12);
\draw [arc] (t12) to [bend left=30] node[auto] {$P_{\wP_2}$} (t11);

\node [place,label=center:{$\overline{13}$}] at ($({-360/5*(4-1)-90+360/10}:\radius)+(t11)+(0.5878*\radius,0.8090*\radius)$) (t13) {};
\draw [arc] (t13) to [out=-360/5*4+360/5-90+360/10+25,in=-360/5*4+360/5-90+360/10-25,loop] node[auto] {\textcolor{black}{$C_{\wP_1}$}} (t13);
\draw [arc] (t12) to [bend left=30] node[auto] {$I_{\wP_4}$} (t13);
\draw [arc] (t13) to [bend left=30] node[auto] {$P_{\wP_4}$} (t12);

\node [place,label=center:{$\overline{14}$}] at ($({-360/5*(5-1)-90+360/10}:\radius)+(t11)+(0.5878*\radius,0.8090*\radius)$) (t14) {};
\draw [arc] (t14) to [out=-360/5*5+360/5-90+360/10+25,in=-360/5*5+360/5-90+360/10-25,loop] node[auto] {\textcolor{black}{$C_{\wP_3}$}} (t14);
\draw [arc] (t13) to [bend left=30] node[auto] {$I_{\wP_1}$} (t14);
\draw [arc] (t14) to [bend left=30] node[auto] {$P_{\wP_1}$} (t13);

\node [place,label=center:{$\overline{15}$}] at ($({-360/5*(6-1)-90+360/10}:\radius)+(t11)+(0.5878*\radius,0.8090*\radius)$) (t15) {};
\draw [arc] (t15) to [out=-360/5*6+360/5-90+360/10+25,in=-360/5*6+360/5-90+360/10-25,loop] node[auto] {\textcolor{black}{$C_{\wP_3}$}} (t15);
\draw [arc] (t14) to [bend left=30] node[auto] {$I_{\wP_3}$} (t15);
\draw [arc] (t15) to [bend left=30] node[auto] {$P_{\wP_3}$} (t14);
%
%

\draw [arc] (t15) to [bend left=30] node[auto] {$\lambda_2^{-1} I_{\wP_3}$} (t11);
\draw [arc] (t11) to [bend left=30] node[auto] {$\lambda_2 P_{\wP_3}$} (t15);

\end{tikzpicture}

	}
	\caption{Lumped-node representation of $\graph(\overline{A}(\lambda_1,\lambda_2))$ from~\eqref{eq:intermittent_periodic_matrix_relabeled}.
	}
	\label{fi:intermittent_graph_relabeled}
\end{figure}

\section*{Acknowledgment}

We would like to thank Angelos Koumpis for implementing in C MEX a cache-aware max-plus matrix multiplication, which helped achieving the computational time shown in Figure~\ref{fig:comparison}.

This work was funded by the Deutsche Forschungsgemeinschaft (DFG, German Research Foundation), Projektnummer RA 516/14-1.
Partially supported by RVO 67985840, by the GACR grant 19-06175J, and by Deutsche Forschungsgemeinschaft (DFG, German Research Foundation) under Germany's Excellence Strategy -- EXC 2002/1 "Science of Intelligence" -- project number 390523135.

\section*{Conflict of interest}

The authors have no conflict of interest to declare that are relevant to this article.

\bibliography{references.bib}



\end{document}